\newenvironment{nop}{}{}
\newenvironment{smathpar}{
\begin{nop}\small\begin{mathpar}}{
\end{mathpar}\end{nop}\ignorespacesafterend}
\definecolor{Bittersweet}{rgb}{1.0, 0.44, 0.37}
\definecolor{MidnightBlue}{rgb}{0.0, 0.2, 0.4}
\definecolor{BrightBlue}{rgb}{0.0, 0.2, 0.7}
\newcommand{\lsttxnimp}{\lstset{
      language=c,
      basicstyle=\ttfamily\small,
      flexiblecolumns=false,
			tabsize=2,
      escapechar=',                        
      %basewidth={0.5em,0.45em},
      %aboveskip={3pt},
      %belowskip={3pt},
      keywordstyle=\color{Bittersweet}\bfseries,
      commentstyle=\color{blue}\itshape,
      stringstyle=\color{MidnightBlue},
      morekeywords={transaction,txn,cobegin,from,to,atomic},
			classoffset=1,
			upquote=true,
			keywordstyle=\color{Fuchsia}\bfseries,
			classoffset=0,
			mathescape=true
    }}
\newcommand{\lstml}{
\lstset{ %
language=ML, % choose the language of the code
basicstyle=\footnotesize\ttfamily,       % the size of the fonts that are used for the code
keywordstyle=\color{Bittersweet},
% numbers=left,                   % where to put the line-numbers
numberstyle=\tiny,      % the size of the fonts that are used for the line-numbers
stepnumber=1,                   % the step between two line-numbers. If it is 1 each line will be numbered
numbersep=5pt,                  % how far the line-numbers are from the code
showspaces=false,               % show spaces adding particular underscores
showstringspaces=false,         % underline spaces within strings
showtabs=false,                 % show tabs within strings adding particular underscores
% frame=single,                   % adds a frame around the code
tabsize=2,                      % sets default tabsize to 2 spaces
captionpos=b,                   % sets the caption-position to bottom
breaklines=true,                % sets automatic line breaking
breakatwhitespace=false,        % sets if automatic breaks should only happen at whitespace
commentstyle=\itshape\color{BrightBlue},
%escapeinside={\%*}{*)},         % if you want to add a comment within your code
mathescape=true,
morekeywords={module, begin, match, when, @@deriving, not, : , txn_do, do, SQL/\\}
}}
\newcommand{\C}[1]{\code{#1}}
\newcommand*{\rom}[1]{\expandafter\romannumeral #1}
\newcommand{\code}[1]{{\tt #1}}
\newcommand{\spc}[0]{\quad}
\newcommand{\ALT}{~\mid~}
\newcommand{\llangle}{\langle}
\newcommand{\rrangle}{\rangle}
\newcommand{\conj}{~\wedge~}
\newcommand{\disj}{~\vee~}
\newcommand{\rulelabel}[1]{\textrm{\sc {#1}}}
\newcommand{\RULE}[2]{\frac{\begin{array}{c}#1\end{array}}
                           {\begin{array}{c}#2\end{array}}}
\newcommand{\txnimp}{\mbox{${\mathcal T}$}}
\newcommand{\cskip}{\C{SKIP}}
\newcommand{\ctxnr}[3]{{\sf txn}_{#1}\llangle #2 \rrangle\{#3\}}
\newcommand{\ctxn}[3]{\C{TXN}_{#1}\llangle #2 \rrangle\{#3\}}
\newcommand{\stepsto}{\longrightarrow}
\newcommand{\rstepsto}{\longrightarrow_{R}}
\newcommand{\stepssto}[1]{\longrightarrow^{*}_{R}}
\newcommand{\rstepssto}[1]{\longrightarrow^{*}_{R}}
\newcommand{\tbox}[1]{\lbrack #1 \rbrack}
\newcommand{\hoare}[3]{\{#1\}\;\tbox{#2}_i\;\{#3\}}
\newcommand{\defeq}[0]{\overset { \mathit{def} }{ = } }
\newcommand{\rg}[3]{\{#1\}\;#2\;\{#3\}}
\newcommand{\I}{\mathbb{I}}
\newcommand{\R}{\mathbb{R}}
\newcommand{\F}{{\sf F}}
\newcommand{\T}{{\sf T}}
\newcommand{\eval}{\textsf{eval}}
\newcommand{\dom}{\textsf{dom}}
\newcommand{\stable}{\mathtt{stable}}
\newcommand{\iso}[1]{\emph{#1}}
\newcommand{\eg}{\emph{e.g.,}\xspace}
\newcommand{\ectx}{\mathcal{E}}
\newcommand{\Prop}{\mathbb{P}}
\newcommand{\Pow}[1]{\mathcal{P}\left(#1\right)}
\newcommand{\bind}{\gg=}
\newcommand{\ite}[3]{\C{IF}\;#1\;\C{THEN}\;#2\;\C{ELSE}\;#3}
\newcommand{\lete}[3]{\C{LET}\;#1=#2\;\C{IN}\;#3}
\newcommand{\foreache}[2]{\texttt{FOREACH}\;#1\;\texttt{DO}\;#2}
\newcommand{\foreachr}[3]{{\sf foreach}\llangle #1 \rrangle\;#2\;{\sf do}\;#3}
\newcommand{\inserte}[1]{\texttt{INSERT}\;#1}
\newcommand{\selecte}[1]{\texttt{SELECT}\;#1}
\newcommand{\deletee}[1]{\texttt{DELETE}\;#1}
\newcommand{\updatee}[2]{\texttt{UPDATE}\;#1\;#2}
\newcommand{\stl}{\delta}
\newcommand{\stg}{\Delta}
\newcommand{\rec}{r}
\newcommand{\idf}{\texttt{id}}
\newcommand{\delf}{\texttt{del}}
\newcommand{\txnf}{\texttt{txn}}
\newcommand{\elabsto}{\Longrightarrow_{\langle i,\R,I \rangle}}
\newcommand{\with}{~\C{with}~}
\newcommand{\itel}[3]{{\sf if}\;#1\;{\sf then}\;#2\;{\sf else}\;#3}
\newcommand{\itec}[3]{\C{if}\;#1\;\C{then}\;#2\;\C{else}\;#3}
\newcommand{\stabilize}[1]{\llfloor #1 \rrfloor_{\langle \R,I \rangle}}
\newcommand{\mssemof}[2]{\llbracket #2 \rrbracket_{\langle #1 \rangle}}
\newcommand{\existsl}{{\sf exists}}
\newcommand{\fresh}{{\sf fresh}}
\newcommand{\SL}{\mathcal{S}}
\newcommand{\thetool}{{\sc ACIDifier}\xspace}
\newcommand{\nubar}{\overline{\nu}}
\newcommand{\vbar}{\overline{\upsilon}}
\renewcommand{\v}{\upsilon}
\newcommand{\G}{{\sf G}}
\newcommand{\Fx}{{\sf F}_{\sf ctxt}}
\newcommand{\Fempty}{{\sf F}_{\emptyset}}
\newcommand{\inctxt}[2]{#1[#2]}
\begin{document}

\setlength{\pdfpageheight}{\paperheight}
\setlength{\pdfpagewidth}{\paperwidth}

\makeatletter\if@ACM@journal\makeatother
\startPage{1}
\else\makeatother
% %% Conference information (used by SIGPLAN proceedings format)
% %% Supplied to authors by publisher for camera-ready submission
% \GK{---- We did not get any ----}
% \acmConference[PL'17]{ACM SIGPLAN Conference on Programming Languages}{January 01--03, 2017}{New York, NY, USA}
% \acmYear{2017}
% \acmISBN{978-x-xxxx-xxxx-x/YY/MM}
% \acmDOI{10.1145/nnnnnnn.nnnnnnn}
% \startPage{1}
\fi

%% Bibliography style
\bibliographystyle{ACM-Reference-Format}
% Uncomment the publication rights you want to use.
%\publicationrights{transferred}
%\publicationrights{licensed}     % this is the default
%\publicationrights{author-pays}
%% Citation style
%% Note: author/year citations are required for papers published as an
%% issue of PACMPL.
\citestyle{acmauthoryear}   %% For author/year citations

\title{Alone Together:
  Compositional Reasoning and Inference for Weak Isolation}

%% Author information
%% Contents and number of authors suppressed with 'anonymous'.
%% Each author should be introduced by \author, followed by
%% \authornote (optional), \orcid (optional), \affiliation, and
%% \email.
%% An author may have multiple affiliations and/or emails; repeat the
%% appropriate command.
%% Many elements are not rendered, but should be provided for metadata
%% extraction tools.

%% Author with single affiliation.
\author{Gowtham Kaki}
%\authornote{with author1 note}          %% \authornote is optional;
                                        %% can be repeated if necessary
%\orcid{nnnn-nnnn-nnnn-nnnn}             %% \orcid is optional
\affiliation{
% \position{Position1}
% \department{Department1}              %% \department is recommended
  \institution{Purdue University}            %% \institution is required
% \streetaddress{Street1 Address1}
% \city{City1}
% \state{State1}
% \postcode{Post-Code1}
  \country{USA}
}
\email{gkaki@purdue.edu}          %% \email is recommended
\author{Kartik Nagar}
\affiliation{
  \institution{Purdue University}            %% \institution is required
  \country{USA}
}
\email{nagark@purdue.edu}          %% \email is recommended
\author{Mahsa Najafzadeh}
\affiliation{
  \institution{Purdue University}            %% \institution is required
  \country{USA}
}
\email{mnajafza@purdue.edu}          %% \email is recommended
\author{Suresh Jagannathan}
\affiliation{
  \institution{Purdue University}            %% \institution is required
  \country{USA}
}
\email{suresh@cs.purdue.edu}          %% \email is recommended

% %% Author with two affiliations and emails.
% \author{First2 Last2}
% \authornote{with author2 note}          %% \authornote is optional;
%                                         %% can be repeated if necessary
% \orcid{nnnn-nnnn-nnnn-nnnn}             %% \orcid is optional
% \affiliation{
%   \position{Position2a}
%   \department{Department2a}             %% \department is recommended
%   \institution{Institution2a}           %% \institution is required
%   \streetaddress{Street2a Address2a}
%   \city{City2a}
%   \state{State2a}
%   \postcode{Post-Code2a}
%   \country{Country2a}
% }
% \email{first2.last2@inst2a.com}         %% \email is recommended
% \affiliation{
%   \position{Position2b}
%   \department{Department2b}             %% \department is recommended
%   \institution{Institution2b}           %% \institution is required
%   \streetaddress{Street3b Address2b}
%   \city{City2b}
%   \state{State2b}
%   \postcode{Post-Code2b}
%   \country{Country2b}
% }
% \email{first2.last2@inst2b.org}         %% \email is recommended

%% Abstract
%% Note: \begin{abstract}...\end{abstract} environment must come
%% before \maketitle command
\begin{abstract}

  Serializability is a well-understood correctness criterion that
  simplifies reasoning about the behavior of concurrent transactions
  by ensuring they are \emph{isolated} from each other while they
  execute.  However, enforcing serializable isolation comes at a steep
  cost in performance because it necessarily restricts opportunities
  to exploit concurrency even when such opportunities would not
  violate application-specific invariants. As a result, database
  systems in practice support, and often encourage, developers to
  implement transactions using weaker alternatives. These alternatives
  break the strong isolation guarantees offered by serializablity to
  permit greater concurrency. Unfortunately, the semantics of weak
  isolation is poorly understood, and usually explained only
  informally in terms of low-level implementation
  artifacts. Consequently, verifying high-level correctness properties
  in such environments remains a challenging problem.

  To address this issue, we present a novel program logic that enables
  compositional reasoning about the behavior of concurrently executing
  weakly-isolated transactions.  Recognizing that the proof burden
  necessary to use this logic may dissuade application developers, we
  also describe an inference procedure based on this foundation that
  ascertains the weakest isolation level that still guarantees the
  safety of high-level consistency invariants associated with such
  transactions.  The key to effective inference is the observation
  that weakly-isolated transactions can be viewed as functional
  (monadic) computations over an abstract database state, allowing us
  to treat their operations as state transformers over the database.
  This interpretation enables automated verification using
  off-the-shelf SMT solvers.
  
  Our development is parametric over a transaction's specific
  isolation semantics, allowing it to be applicable over a range of
  weak isolation mechanisms.  Case studies and experiments on
  real-world applications (written in an embedded DSL in OCaml)
  demonstrate the utility of our approach, and provide strong evidence
  that automated verification of weakly-isolated transactions can be
  placed on the same formal footing as their strongly-isolated
  serializable counterparts.

\end{abstract}

% Classification
\begin{CCSXML}
  <ccs2012>
    <concept>
      <concept_id>10011007.10011074.10011099.10011692</concept_id>
      <concept_desc>Software and its engineering~Formal software verification</concept_desc>
      <concept_significance>500</concept_significance>
    </concept>
    <concept>
      <concept_id>10002951.10002952.10003190.10003206</concept_id>
      <concept_desc>Information systems~Integrity checking</concept_desc>
      <concept_significance>500</concept_significance>
    </concept>
    <concept>
      <concept_id>10002951.10002952.10002953.10002955</concept_id>
      <concept_desc>Information systems~Relational database model</concept_desc>
      <concept_significance>300</concept_significance>
    </concept>
  </ccs2012>
\end{CCSXML}

\ccsdesc[500]{Software and its engineering~Formal software verification}
\ccsdesc[500]{Information systems~Integrity checking}
\ccsdesc[300]{Information systems~Relational database model}

% Keywords
\keywords{
Transactions, Weak Isolation, Concurrency, Rely-Guarantee,
Verification}

\maketitle

\section{Introduction}

Database transactions allow users to group operations on multiple
objects into a single logical unit, equipped with a set of four key
properties - atomicity, consistency, isolation, and durability (ACID).
Concurrency control mechanisms provide specific instantiations of
these properties to yield different ACID variants that characterize
how and when the effects of concurrently executing transactions become
visible to one another.  \emph{Serializability} is a particularly
well-studied instantiation that imposes strong atomicity and isolation
constraints on transaction execution, ensuring that any permissible
concurrent schedule yields results equivalent to a serial one in which
there is no interleaving of actions from different transactions.

The guarantees provided by serializability do not come for free,
however - pessimistic concurrency control methods require databases to
use expensive mechanisms such as two-phase locking that incur overhead
to deal with deadlocks, rollbacks, and
re-execution~\cite{twopl,ullmanbook}.  Similar criticisms apply to
optimistic multi-version concurrency control methods that must deal
with timestamp and version management~\cite{BG81}.  These issues are
exacerbated when the database is replicated, requiring
additional coordination
mechanisms~\cite{cap,sernotavlbl,bailishat,bernsigmod13}.

Because serializable transactions favor correctness over performance,
there has been long-standing interest~\cite{gray1976} in the database
community to consider weaker variants that try to recover performance,
even at the expense of simplicity and ease of reasoning.  These
instantiations permit a transaction to witness various effects of
newly committed, or even concurrently running, transactions while it
executes, thus weakening serializability's strong isolation
guarantees.  The ANSI SQL 92 standard defines three such weak
isolation levels which are now implemented in many relational and
NoSQL databases. Not surprisingly, weakly-isolated transactions have
been found to significantly outperform serializable transactions on
benchmark suites, both on single-node databases and multi-node
replicated stores~\cite{dbtuningbook,bailishat,bailisvldb}, leading to
their overwhelming adoption. A 2013 study~\cite{bailishotos} of 18
popular ACID and ``NewSQL'' databases found that only three of them
offer serializability by default, and half, including Oracle 11g, do
not offer it at all.  A 2015 study~\cite{bailisferal} of a large
corpus of database applications finds no evidence that applications
manifestly change the default isolation level offered by the
database. Taken together, these studies make clear that
weakly-isolated transactions are quite prevalent in practice, and
serializable transactions are often eschewed.

Unfortunately, weak isolation admits behaviors that are difficult to
comprehend~\cite{berenson}. To quantify weak isolation anomalies,
Fekete \emph{et al.}~\cite{feketevldb09} devised and experimented with
a microbenchmark suite that executes transactions under \emph{Read
Committed} weak isolation level - default level for 8 of the 18
databases studied in~\cite{bailishotos}, and found that 25 out of
every 1000 rows in the database violate at least one integrity
constraint. Bailis \emph{et al.}~\cite{bailisferal} rely on Rails'
\emph{uniqueness validation} to maintain uniqueness of records while
serving Linkbench's~\cite{linkbench} insertion workload (6400 records
distributed over 1000 keys; 64 concurrent clients), and report
discovering more than 10 duplicate records.  Rails relies on database
transactions to validate uniqueness during insertions, which is
sensible if transactions are serializable, but incorrect under the
weak isolation level used in the experiments. The same study has found
that 13\% of all invariants among 67 open source Ruby-on-Rails
applications are at risk of being violated due to weak isolation.
Indeed, incidents of safety violations due to executing applications
in a weakly-isolated environment have been reported on web services in
production~\cite{starbucksbug, scimedbug}, including in
safety-critical applications such as bitcoin
exchanges~\cite{poloniexbug, bitcoinbug}. While enforcing
serializability for all transactions would be sufficient to avoid
these errors and anomalies, it would likely be an overly conservative
strategy; indeed, 75\% of the invariants studied in~\cite{bailisferal}
were shown to be preserved under some form of weak isolation.  When to
use weak isolation, and in what form, is therefore a prominent
question facing all database programmers.\footnote{This position has
been echoed by database researchers who lament the lack of a better
understanding of this problem; see e.g., \url{
  http://www.bailis.org/blog/understanding-weak-isolation-is-a-serious-problem}.}

A major problem with weak isolation as currently specified is that its
semantics in the context of user programs is not easily
understood. The original proposal~\cite{gray1976} defines multiple
``degrees'' of weak isolation in terms of implementation details such
as the nature and duration of locks held in each case. The ANSI SQL 92
standard defines four levels of isolation (including serializability)
in terms of various undesirable \emph{phenomena} (\eg \emph{dirty
  reads} - reading data written by an uncommitted transaction) each is
required to prevent. While this is an improvement, this style of
definition still requires programmers to be prescient about the
possible ways various undesirable phenomena might manifest in their
applications, and in each case determine if the phenomenon can be
allowed without violating application invariants. This is
understandably hard, especially in the absence of any formal
underpinning to define weak isolation semantics.  Adya~\cite{adyaphd}
presents the first formal definitions of some well-known isolation
levels in the context of a sequentially consistent (SC) database.
However, there has been little progress relating Adya's system model
to a formal operational semantics or a proof system that can
facilitate rigorous correctness arguments.  Consequently, reasoning
about weak isolation remains an error prone endeavor, with major
database vendors~\cite{postgresiso, mysqliso, oracleiso} continuing to
document their isolation levels primarily in terms of the undesirable
phenomena a particular isolation level may induce, placing the burden
on the programmer to determine application correctness.

Recent results on reasoning about application invariants in the
presence of weak consistency~\cite{burckhardt14, redblueosdi,
  redblueatc, ecinec, gotsmanpopl16} address broadly related concerns.
Weak consistency is a phenomenon that manifests on replicated data
stores, where atomic operations are concurrently executed against
different replicas, resulting in an execution order inconsistent with
any sequential order. In contrast, weak isolation is a property of
concurrent transactions interfering with one another, resulting in an
execution order that is not serializable. Unlike weak consistency,
weak isolation can manifest even in an unreplicated setting, as
evident from the support for weakly-isolated transactions on
conventional (unreplicated) databases as mentioned above.

In this paper, we propose a program logic for weakly-isolated
transactions along with automated verification support to allow
developers to verify the soundness of their applications, without
having to resort to low-level operational reasoning as they are forced
to do currently.  We develop a set of syntax-directed compositional
proof rules that enable the construction of correctness proofs for
transactional programs in the presence of a weakly-isolated
concurrency control mechanism.  Realizing that the proof burden
imposed by these rules may discourage applications programmers from
using them, we also present an inference procedure that automatically
verifies the weakest isolation level for a transaction while ensuring
its invariants are maintained.  The key to inference is a novel
formulation of database state (represented as sets of tuples) as a
monad, and in which database computations are interpreted as state
transformers over these sets.  This interpretation leads to an
encoding of database computations amenable for verification by
off-the-shelf SMT solvers.  The paper makes the following
contributions:
%% We have realized these ideas as an embedded DSL within OCaml, that
%% supports common relational SQL operations (updates, selects, inserts,
%% deletes, joins, etc.).  Experimental results on real-world benchmarks
%% demonstrate the feasibility and value of automated verification for
%% weakly isolated transactions, an important advance in improving the
%% safety of realistic database computations.
\begin{enumerate}
  \item We analyze properties of weak isolation in the context of a
    DSL embedded in OCaml that treats SQL-based relational database
    operations (e.g., inserts, selects, deletes, updates, etc.) as
    computations over an abstract database state.
  \item We develop an operational semantics and a compositional
    rely-guarantee style proof system for this language capable of
    relating high-level application invariants to database state,
    parameterized by a weak isolation semantics that selectively
    exposes the visibility of these operations to other transactions.
  \item We devise an inference algorithm capable of discovering the
    weakest isolation level that is sound with respect to a
    transaction's high-level consistency requirements. The algorithm
    interprets database operations as state transformers expressed in
    a language amenable for translation into a decidable fragment of
    first-order logic, and is thus suitable for automated verification
    using off-the-shelf SMT solvers.
  \item We present details of an implementation along with an
    evaluation study on real database benchmarks that justify our
    approach, and demonstrate the utility of our inference mechanism.
\end{enumerate}
\noindent Our results provide the first formalization of
weakly-isolated transactions, along with an expressive and
compositional proof automation framework capable of verifying the
safety of high-level consistency conditions attached to these
transactions.  Collectively, these contributions allow weakly-isolated
transactions to enjoy the same rigorous reasoning capabilities as
their strongly-isolated (serializable) counterparts.

The remainder of the paper is organized as follows. The next section
provides motivation and background on serializable and weakly-isolated
transactions. \S\ref{sec:opsem} presents an operational semantics for
a core language that supports weakly-isolated transactions,
parameterized over different isolation notions. \S\ref{sec:reasoning}
formalizes the proof system that we use to reason about program
invariants, and establishes the soundness of these rules with respect
to the semantics. \S\ref{sec:inference} describes the inference
algorithm, and the state transformer encoding.  We describe our
implementation in \S\ref{sec:implementation}, and provide case studies
and benchmark results in \S\ref{sec:case-studies}.  Related work is
given in \S\ref{sec:relatedwork}, and \S\ref{sec:conclusions} concludes.

\section{Motivation}
\label{sec:motivation}

We present our ideas in the context of a DSL embedded in OCaml that
manages an abstract database state that can be manipulated via a
well-defined \C{SQL} interface. Arbitrary database computations can be
built around this interface, which can then be run as transactions
using the \C{atomically\_do} combinator provided by the DSL.

\begin{figure}[!t]
\centering
\begin{ocaml}
let new_order (d_id, c_id, item_reqs) = atomically_do @@ fun () ->
  let dist = SQL.select1 District (fun d -> d.d_id = d_id) in
  let o_id = dist.d_next_o_id in
  begin
    SQL.update(* UPDATE *) District 
              (* SET *)(fun d -> {d with d_next_o_id = d.d_next_o_id + 1})
              (* WHERE *)(fun d -> d.d_id = d_id );
    SQL.insert(* INSERT INTO *) Order (* VALUES *){o_id=o_id;  
              o_d_id=d_id; o_c_id=c_id; o_ol_cnt=S.size item_reqs; };
    foreach item_reqs @@ fun item_req ->
      let stk = SQL.select1(* SELECT * FROM *) Stock 
                (* WHERE *)(fun s -> s.s_i_id = item_req.ol_i_id &&
                                     s.s_d_id = d_id)(* LIMIT 1 *) in
      let s_qty' = if stk.s_qty >= item_req.ol_qty + 10 
                  then stk.s_qty - item_req.ol_qty 
                  else stk.s_qty - item_req.ol_qty + 91 in
      SQL.update Stock (fun s -> {s with s_qty = s_qty'}) 
                       (fun s -> s.s_i_id = item_req.ol_i_id);
      SQL.insert Order_line {ol_o_id=o_id; ol_d_id=d_id; 
                             ol_i_id=item_req.ol_i_id; ol_qty=item_req.ol_qty}
  end
 
\end{ocaml}
\caption{TPC-C \C{new\_order} transaction}
\label{fig:new_order_code}
\vspace*{-10pt}
\end{figure}

Fig.~\ref{fig:new_order_code} shows a simplified version of the TPC-C
\C{new\_order} transaction written in this language. TPC-C is a
widely-used and well-studied Online Transaction Processing (OLTP)
benchmark that models an order-processing system for a wholesale parts
supply business. The business logic is captured in 5 database
transactions that operate on 9 tables; \C{new\_order} is one such
transaction that uses \C{District}, \C{Order}, \C{New\_order},
\C{Stock}, and \C{Order\_line} tables. The transaction acts on the
behalf of a customer, whose id is \C{c\_id}, to place a new order for
a given set of items (\C{item\_reqs}), to be served by a warehouse
under the district identified by \C{d\_id}.  Fig.~\ref{fig:schema}
illustrates the relationship among these different tables.

The transaction manages order placement by invoking appropriate SQL
functionality, captured by various calls to functions defined by the
\C{SQL} module. All \C{SQL} operators supported by the module take a
table name (a nullary constructor) as their first argument. The
higher-order \C{SQL.select1} function accepts a boolean function that
describes the selection criteria, and returns any record that meets
the criteria (it models the SQL query \C{SELECT \ldots\xspace LIMIT
  1}). \C{SQL.update} also accepts a boolean function (its 3$^{rd}$
argument) to select the records to be updated. Its 2$^{nd}$ argument
is a function that maps each selected record to a new (updated)
record. \C{SQL.insert} inserts a given record into the specified table
in the database.

The \C{new\_order} transaction inserts a new \C{Order} record, whose
id is the sequence number of the next order under the given district
(\C{d\_id}). The sequence number is stored in the corresponding
\C{District} record, and updated each time a new order is added to the
system. Since each order may request multiple items (\C{item\_reqs}),
an \C{Order\_line} record is created for each requested item to relate
the order with the item. Each item has a corresponding record in the
\C{Stock} table, which keeps track of the quantity of the item left in
stock (\C{s\_qty}). The quantity is updated by the transaction to
reflect the processing of new orders (if the stock quantity falls below
10, it is automatically replenished by 91).

\begin{figure}[!t]
  \centering
	\begin{subfigure}{0.4\textwidth}
	  \includegraphics[width=0.9\textwidth]{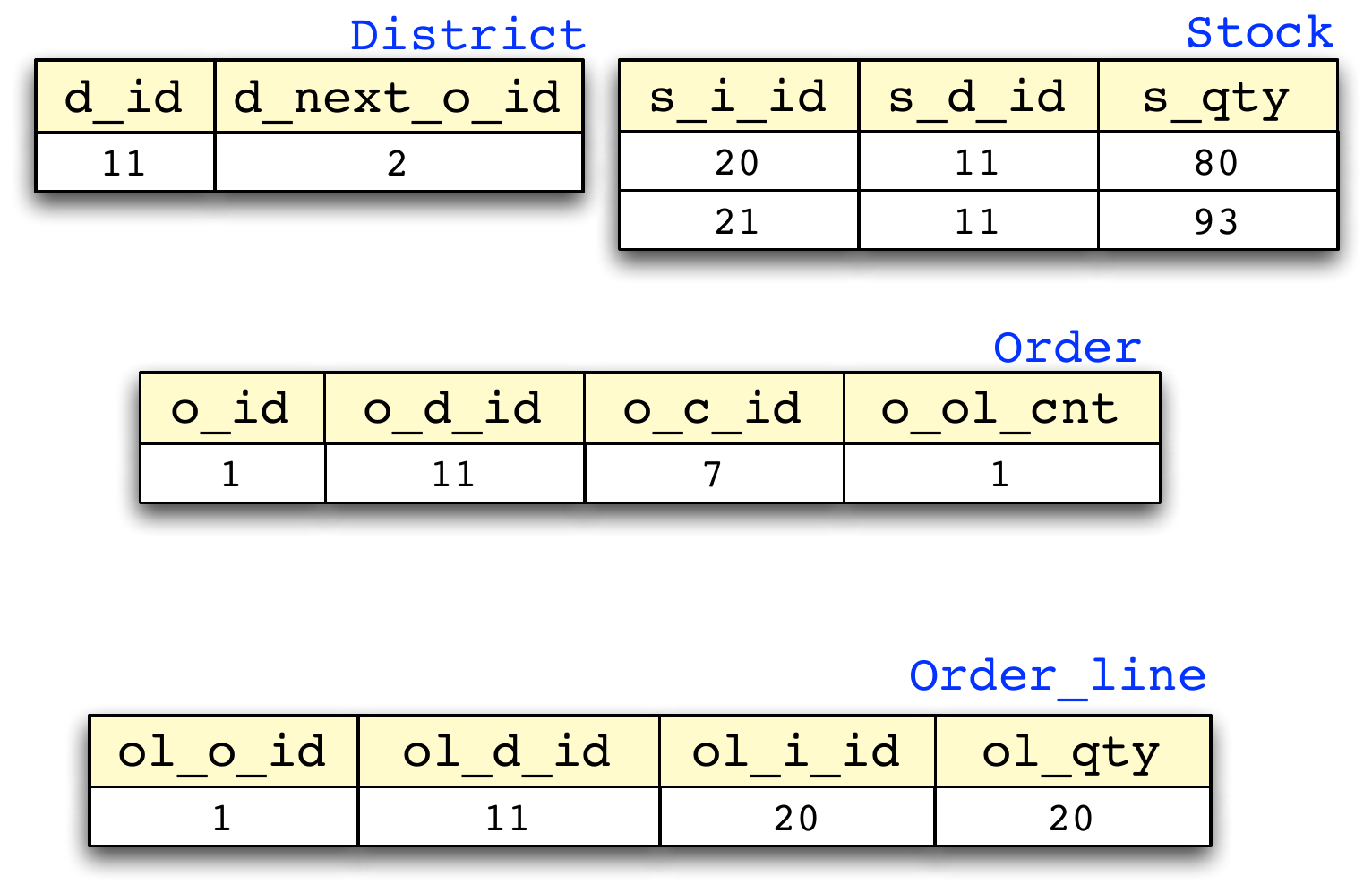}
    \caption{A valid TPC-C database. The only existing order belongs
      to the district with \C{d\_id}=11. Its id (\C{o\_id}) is one
      less than the district's \C{d\_next\_o\_id}, and its order
      count (\C{o\_ol\_cnt}) is equal to the number of order line
      records whose \C{ol\_o\_id} is equal to the order's id.  }
		\label{fig:tpcc_db1}
	\end{subfigure}
  \hspace*{.2in}
	\begin{subfigure}{0.52\textwidth}
		\includegraphics[width=0.9\textwidth]{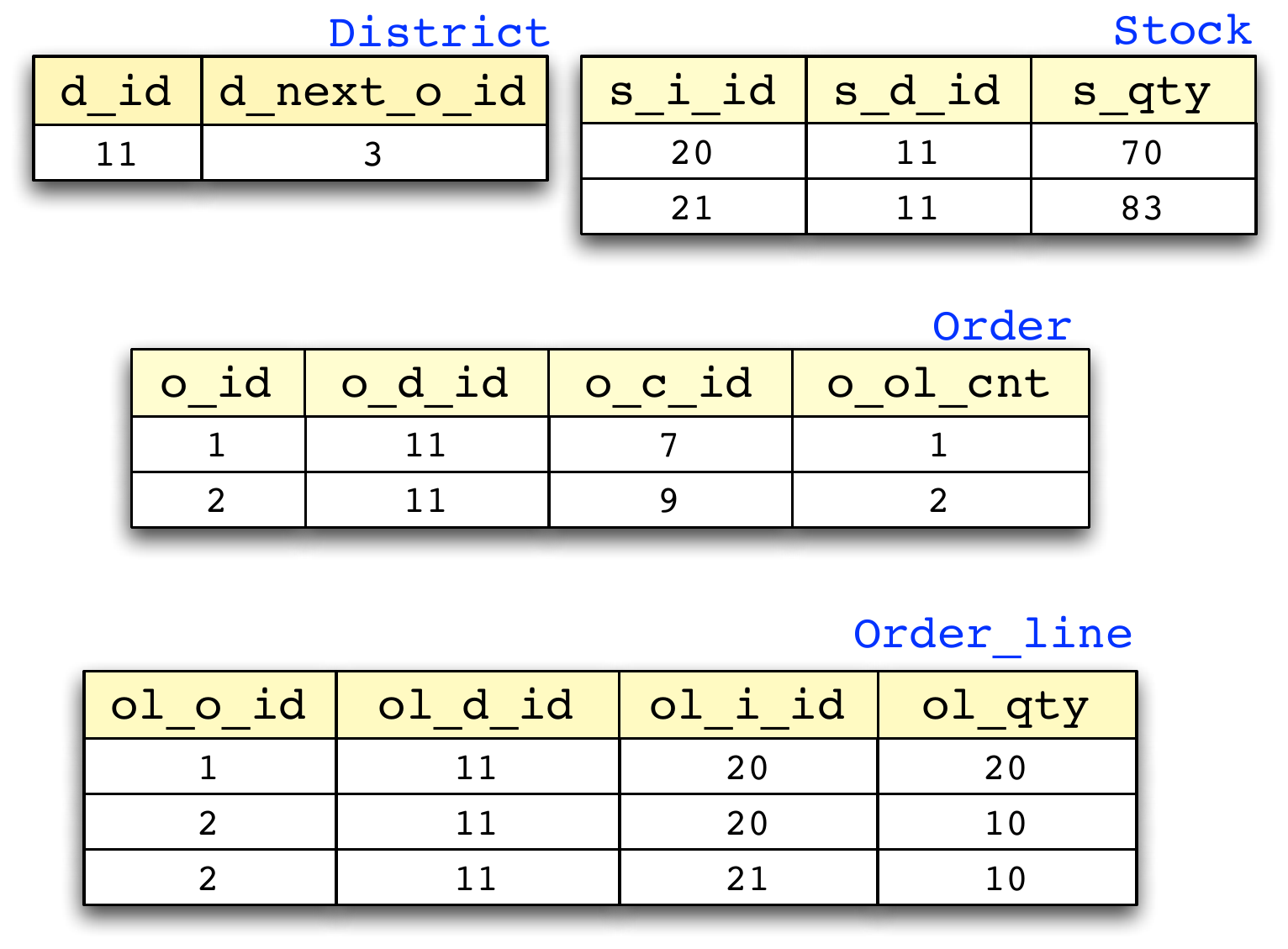}
    \caption{The database in Fig.~\ref{fig:tpcc_db1} after correctly
      executing a \C{new\_order} transaction. A new order record is
      added whose \C{o\_id} is equal to the \C{d\_next\_o\_id} from
      Fig.~\ref{fig:tpcc_db1}. The district's \C{d\_next\_o\_id} is
      incremented. The order's \C{o\_ol\_cnt} is 2, reflecting the
      actual number of order line records whose \C{ol\_o\_id} is equal
      to the order's id (2).}
		\label{fig:tpcc_db2}
   	\end{subfigure}
\caption{Database schema of TPC-C's order management system.
  The naming
  convention indicates primary keys and foreign keys. For e.g.,
  \C{ol\_id} is the primary key column of the order line table,
  whereas \C{ol\_o\_id} is a foreign key that refers to the \C{o\_id}
  column of the order table.}
\label{fig:schema}
\end{figure}

TPC-C defines multiple invariants, called \emph{consistency
  conditions}, over the state of the application in the database. One
such consistency condition is the requirement that for a given order
\C{o}, the \emph{order-line-count} field (\C{o.o\_ol\_cnt}) should
reflect the number of order lines under the order; this is the number
of \C{Order\_line} records whose \C{ol\_o\_id} field is the same as
\C{o.o\_id}.  In a sequential execution, it is easy to see how this
condition is preserved.  A new \C{Order} record is added with its
\C{o\_id} distinct from existing order ids, and its \C{o\_ol\_cnt} is
set to be equal to the size of the \C{item\_reqs} set. The \C{foreach}
loop runs once for each \C{item\_req}, adding a new \C{Order\_line}
record for each requested item, with its \C{ol\_o\_id} field set to
\C{o\_id}. Thus, at the end of the loop, the number of \C{Order\_line}
records in the database (i.e., the number of records whose
\C{ol\_o\_id} field is equal to \C{o\_id}) is guaranteed to be equal
to the size of the \C{item\_reqs} set, which in turn is equal to the
\C{Order} record's \C{o\_ol\_cnt} field; these constraints ensure that
the transaction's consistency condition is preserved.

Because the aforementioned reasoning is reasonably simple to perform
manually, verifying the soundness of TPC-C's consistency conditions
would appear to be feasible.  Serializability aids the tractability of
verification by preventing any interference among concurrently
executing transactions while the \C{new\_order} transaction executes,
essentially yielding serial behaviors.  
\begin{wrapfigure}{l}{.4\textwidth}
\includegraphics[scale=0.45]{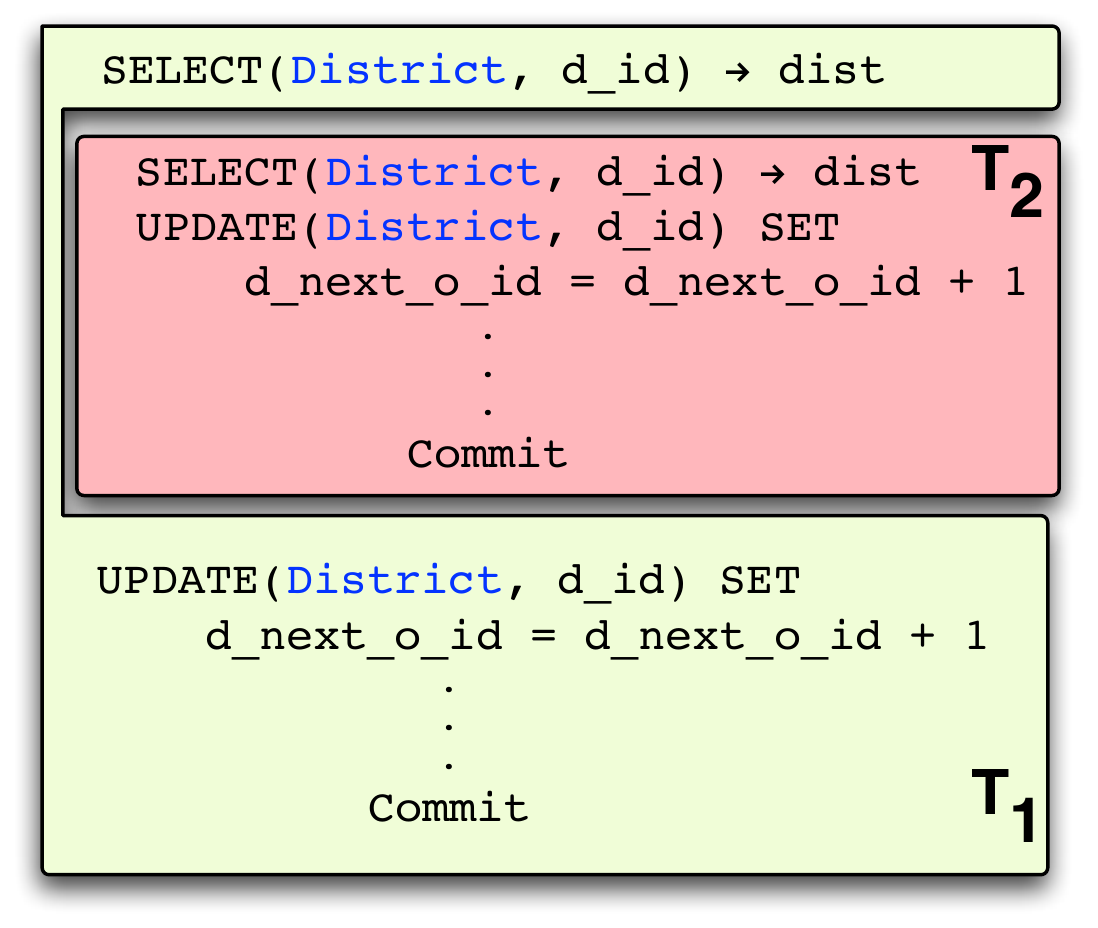}
\caption{\small An RC execution involving two instances ($T_1$ and
  $T_2$) of the \C{new\_order} transaction depicted in
  Fig.~\ref{fig:new_order_code}. 
  Both instances read the \C{d\_id} \C{District} record concurrently,
  because neither transaction is committed when the reads are
  executed.  The subsequent operations are effectively sequentialized,
  since $T_2$ commits before $T_1$. Nonetheless, both transactions read the same value for
  \C{d\_next\_o\_id} resulting in them  adding \C{Order} records
  with the same ids, which in turn triggers a violation of TPC-C's
  consistency condition.}
\label{fig:new_order_exec}
\vspace*{-10pt}
\end{wrapfigure}
Under weak isolation\footnote{Weak isolation does not violate
  atomicity as long as the witnessed effects are those of committed 
transactions}, however, interferences of various kinds are permitted,
leading to executions superficially similar to executions permitted by
concurrent (racy) programs~\cite{GHE15,HPQ+15}.  To illustrate,
consider the behavior of the \C{new\_order} transaction when executed
under a \emph{Read Committed} (RC) isolation level, the default
isolation level in 8 of the 18 databases studied
in~\cite{bailishotos}.  An executing RC transaction is isolated from
\emph{dirty writes}, i.e., writes of uncommitted transactions, but is
allowed to witness the writes of concurrent transactions as soon as
they are committed.  Thus, with two concurrent instances of the
\C{new\_order} transaction (call them $T_1$ and $T_2$), both
concurrently placing new orders for different customers under the same
district (\C{d\_id}), RC isolation allows the execution shown in
Fig.~\ref{fig:new_order_exec}.

% \begin{figure}[!h]
% \centering
% \subcaptionbox {
%   {\sc rc} Execution 1
%   \label{fig:motiv-eg-1-b}
% } [
%   0.55\columnwidth
% ] {
%   \includegraphics[scale=0.45]{Figures/motiv-eg-1-b}
% }
% %\hspace*{0.5in}
% \subcaptionbox {
%   {\sc rc} Execution 2
%   \label{fig:motiv-eg-1-a}
% }{
%   \includegraphics[scale=0.45]{}
% }
%   \caption{\small RC executions involving two instances ($T_1$ and
%   $T_2$) of the \C{new\_order} transaction depicted in
%   Fig.~\ref{fig:new_order_code}. 
%   Each instance reads the \C{d\_id} \C{District} record twice, the second
%   time to (atomically) update the \C{d\_next\_o\_id} field.}
% \label{fig:new_order_execs}
% \end{figure}

The figure depicts an execution as a series of SQL operations. In the
execution, the \C{new\_order} instance $T_1$ (green) reads the
\C{d\_next\_o\_id} field of the district record for \C{d\_id}, but
before it increments the field, another \C{new\_order} instance
$T_2$ (red) begins its execution and commits. Note that $T_2$ reads the
same \C{d\_next\_o\_id} value as $T_1$, and inserts new \C{Order} and
\C{Order\_line} records with their \C{o\_id} and \C{ol\_o\_id} fields
(resp.) equal to \C{d\_next\_o\_id}. $T_2$ also increments the
\C{d\_next\_o\_id} field, which $T_1$ has already acccessed. This is
allowed because reads typically do not obtain a mutually exclusive
lock on most databases. After $T_2$'s commit, $T_1$ resumes execution
and adds new \C{Order} and \C{Order\_line} fields with the same order
id as $T_1$. Thus, at the end of the execution, \C{Order\_line}
records inserted by $T_1$ and $T_2$ all bear the same order id. There
are also two \C{Order} records with the same district id (\C{d\_id})
and order id, none of whose \C{o\_ol\_cnt} reflects the actual number
of \C{Order\_line} records inserted with that order id.  This clearly
violates TPC-C's consistency condition.

This example does not exhibit any of the anomalies that
\emph{characterize} RC isolation~\cite{berenson}\footnote{Berenson
\emph{et al.} characterize isolation levels in terms of the anomalies
they \emph{admit}. For example, RC is characterized by \emph{lost
writes} because it admits the anomaly.}. For instance, there are no
\emph{lost writes} since both concurrent transactions' writes are
present in the final state of the database.  Program analyses that aim
to determine appropriate isolation by checking for possible
manifestations of RC-induced anomalies would fail to identify grounds
for promoting the isolation level of \C{new\_order} to something
stronger.  Yet, if we take the semantics of the application into
account, it is quite clear that RC is not an appropriate isolation
level for \C{new\_order}.

While reasoning in terms of anomalies is cumbersome and inadequate,
reasoning about weak isolation in terms of
traces~\cite{adyaphd,gotsmanconcur15} on memory read and write actions
can complicate high-level reasoning.  A possible alternative would be
to utilize concurrent program verification methods where the
implementation details of weak isolation are interleaved within the
program, yielding a (more-or-less) conventional concurrent program.
But, considering the size and complexity of real-world transaction
systems, this strategy is unlikely to scale.

In this paper, we adopt a different approach that \emph{lifts}
isolation semantics (\emph{not} their implementations) to the
application layer, providing a principled framework to simultaneously
reason about application invariants and isolation properties.  To
illustrate this idea informally, consider how we might verify that
\C{new\_order} is sound when executed under \emph{Snapshot Isolation}
(SI), a stronger isolation level than RC. Snapshot isolation allows
transactions to be executed against a private snapshot of the
database, thus admitting concurrency, but it also requires that there
not be any write-write conflicts (i.e., such a conflict occurs if
concurrently executing transactions modify the same record) among
concurrent transactions when they commit. Write-write conflicts can be
eliminated in various ways, e.g., through conflict detection followed
by a rollback, or through exclusive locks, or a combination of both.
For instance, one possible implementation of SI, close to the one used
by PostgreSQL~\cite{postgresiso}, executes a transaction against its
private snapshot of the database, but obtains exclusive locks on the
actual records in the database before performing writes. A write is
performed only if the record that is to be written has not already
been updated by a concurrent transaction. Conflicts are resolved by
abort and roll back.

%% This implementation is demonstrated for a simple
%% database with three records, $a$, $b$, and $c$ in
%% Fig.~\ref{fig:RR-postgres}.

%% \begin{figure}[t]
%% \includegraphics[scale=0.5]{}
%% \caption{Database state transitions corresponding to an execution of
%%   an SI transaction $T_1$ on a PostgreSQL-like store. $T_2$ and $T_3$
%%   are concurrent transactions. }
%% \label{fig:rr-postgres}
%% \end{figure}
%% \begin{figure}[t]
%% \includegraphics[scale=0.5]{}
%%   \caption{An abstract execution that includes the concrete execution
%%   shown in Fig.~\ref{fig:rr-postgres}. It has no locks or snapshots,
%%   admits fewer interleavings, yet results in the same post-state.  }
%% \label{fig:rr-abstract}
%% \end{figure}

As this discussion hints, implementations of SI on real databases
such as PostgreSQL are highly complicated, often running into
thousands of lines of code.  Nonetheless, the semantics of SI, in
terms of how it effects transitions on the database state, can be
captured in a fairly simple model. First, effects induced by one
transaction (call it $T$) are not visible to another concurrently executing one
during $T$'s execution. Thus, from $T$'s perspective,
the global state does not change during its execution.  More formally,
for every operation performed by $T$, the global state $T$ witnesses
before ($\stg$) and after ($\stg')$ executing the operation is the
same ($\stg'=\stg$).  After $T$ finishes execution, it commits its
changes to the actual database, which may have already incorporated
the effects of concurrent transactions. In executions where $T$
successfully commits, concurrent transactions are guaranteed to not be
in write-write conflict with $T$. Thus, if $\stg$ is the global state
that $T$ witnessed when it finished execution (the snapshot state),
and $\stg'$ is the state to which $T$ commits, then the difference
between $\stg$ and $\stg'$ should not result in a write-write conflict
with $T$. To concretize this notion, let the database state be a map
from database locations to values, and let $\stl$ denote a
transaction-local log that maps the locations being written to their
updated values.  The absence of write-write conflicts between $T$ and
the diff between $\stg$ and $\stg'$ can be expressed as: $\forall
x\in\mathit{dom}(\delta)$, $\stg'(x) = \stg(x)$.  In other words, the
semantics of SI can be captured as an axiomatization over transitions
of the database state ($\Delta \longrightarrow \Delta'$) during a
transaction's ($T$) lifetime:
\begin{itemize}
  \item While $T$ executes, $\Delta' = \Delta$.
  \item After $T$ finishes execution, but before it commits its local
    state $\delta$, $\forall(x\in\mathit{dom}(\delta)).~\Delta'(x) = \Delta(x)$.
\end{itemize}

\noindent This simple characterization of SI isolation allows us to
verify the consistency conditions associated with the \C{new\_order}
transaction.  First, since the database does not change ($\Delta' =
\Delta$) during execution of the transaction's body, we can reason
about \C{new\_order} as though it executed in complete isolation until
its commit point, leading to a verification process similar to what
would have been applied when reasoning sequentially.  When
\C{new\_order} finishes execution, however, but before it commits, the
SI axiomatization shown above requires us to consider global state
transitions $\stg \longrightarrow \stg'$ that do not include changes
to the records ($\stl$) written by \C{new\_order}, i.e.,
$\forall(x\in\mathit{dom}(\delta)).~\Delta'(x) = \Delta(x)$.  The
axiomatization precludes any execution in which there are concurrent
updates to shared table fields (e.g., \C{d\_next\_o\_id} on the same
\C{District} table), but does not prohibit interferences that write to
different tables, or write to different records in the same table.  We
need to reason about the safety of such interferences with respect to
\C{new\_order}'s consistency invariants to verify \C{new\_order}.

We approach the verification problem by first observing that a
relational database is a significantly simpler abstraction than shared
memory. Its primary data structure is a table, with no primitive
support for pointers, linked data structures, or aliasing.  Although a
database essentially abstracts a mutable state, this state is managed
through a well-defined fixed number of interfaces (SQL statements),
each tagged with a logical formula describing what records are
accessed and updated.

This observation leads us away from thinking of a collection of
database transactions as a simple variant of a concurrent imperative
program.  Instead, we see value in viewing them as essentially
functional computations that manage database state abstractly,
mirroring the structure of our DSL.  By doing so, we can formulate the
semantics of database operations as state transformers that explicitly
relate an operation's pre- and post-states, defining the semantics of
the corresponding transformer algorithmically, just like classical
predicate transformer semantics (e.g., weakest pre-condition or
strongest post-condition).  In our case, a transformer interprets a
SQL statement in the set domain, modeling the database as a set of
records, and a SQL statement as a function over this set.  Among other
things, one benefit of this approach is that low-level loops can now
be substituted with higher-order combinators that automatically lift
the state transformer of its higher-order argument, i.e., the loop
body, to the state transformer of the combined expression, i.e., the
loop.  We illustrate this intuition on a simple example.

%% Thus the semantics of a \C{foreach} loop, for instance, can be
%% captured as a state transformer, where the state is a set, and the
%% transformation defines a bind operation.

% \begin{figure}[!t]
% \centering
% %
% \begin{subfigure}[b]{0.46\textwidth}
% \begin{ocaml}
% Set s' = $\emptyset$;
% foreach x in s {
%   s'.add(f(x)); 
% }
% \end{ocaml}
% \caption{}
% \end{subfigure}
% %
% \begin{subfigure}[b]{0.5\textwidth}
% \begin{ocaml}
% let s' = ref [];
% foreach s (fun x -> s' := x::(!s'));
% \end{ocaml}
% \caption{Lorem ipsum, lorem ipsum,Lorem ipsum, lorem ipsum,Lorem ipsum}
% \end{subfigure}
% \caption{Caption place holder}
% %
% \caption{New versions are created from existing versions either
% through \C{push} or \C{merge}.}
% \label{fig:syntactic-ancestors}
% \end{figure}
% let s' = ref (Set.empty) in
% foreach xs @@ fun x -> 
%   begin
%     s' := Set.union !s' @@ 
%             Set.map_selected s (fun y -> y<x)
%                                (fun y -> y+x);
%     s' := Set.add !s' x;
%   end
\begin{figure}[!h]
\begin{ocaml}
foreach item_reqs @@ fun item_req ->
  SQL.update Stock (fun s -> {s with s_qty = k1}) 
                   (fun s -> s.s_i_id = item_req.ol_i_id);
  SQL.insert Order_line {ol_o_id=k2; ol_d_id=k3; 
                         ol_i_id=item_req.ol_i_id; ol_qty=item_req.ol_qty}
\end{ocaml}
\caption{Foreach loop from Fig.~\ref{fig:new_order_code}}
\label{fig:foreach_code}
\end{figure}

Fig.~\ref{fig:foreach_code} shows a (simplified) snippet of code taken
from Fig.~\ref{fig:new_order_code}. Some irrelevant expressions have
been replaced with constants (\C{k1}, \C{k2}, and \C{k3}).  The body of
the loop executes a SQL update followed by an insert.  Recall that a
transaction reads from the global database ($\Delta$), and writes to a
transaction-local database ($\delta$) before committing these
updates. An update statement filters the records that match the search
criteria from $\Delta$ and computes the updated records that are to be
added to the local database. Thus, the state transformer for the
update statement (call it $\T_U$) is the following function on
sets\footnote{Bind ($\bind$) has higher precedence than union
($\cup$). Angle braces ($\langle \ldots \rangle$) are used to denote
records.}:
\begin{smathpar}
\begin{array}{l}
  \lambda(\stl,\stg).~ \stl \cup \stg \bind(\lambda \C{s}. 
      \itel{{\sf table}(\C{s}) = \C{Stock} \conj 
        \C{s}.\C{s\_i\_id} = \C{item\_req.ol\_i\_id}\\\hspace*{1.15in}}
           {\{ \langle \C{s\_i\_id}=\C{s.s\_i\_id};\, 
                       \C{s\_d\_id}=\C{s.s\_d\_id};\,
                       \C{s\_qty} = \C{k1}\rangle \}\\\hspace*{1.15in}}
           {\emptyset})
\end{array}
\end{smathpar}
% \begin{smathpar}
% \begin{array}{c}
%   \lambda\Delta.\lambda\delta.\, \delta \;\cup\; 
%       \Delta \,\bind\, (\lambda s.\, 
%           \C{if}\;{\C{stock}(s) \conj 
%                    \C{s.s\_i\_id} = \C{item\_req.ol\_i\_id}}\\
%            \hspace*{0.9in}
%            \C{then}\;{\{\{s \;\C{with}\; \C{s\_qty} = \C{k1}\}\}}\;
%            \C{else}\;{\emptyset})
% \end{array}
% \end{smathpar}
% \begin{verbatim}
%   Rmem(σ'(s')) = Rmem(σ(s')) U 
%                  Rmem[\x. Rmem[\y. if y<x then {y+x} else {}](σ(s)) U {x}](xs)
% \end{verbatim}
% Observe that
% $\T_U(\Delta,\delta)$ is of the form $\delta \,\cup\,
% F_U(\Delta)$.
Here, the set bind operator extracts record elements ($\C{s}$) from
the database, checks the precondition of the update action, and if
satisfied, constructs a new set containing a single record that is
identical to $\C{s}$ except that it binds field $\C{s\_qty}$ to value
$\C{k_1}$.  This new set is added (via set union) to the existing
local database state $\delta$.\footnote{For now, assume that the
  record being added is not already present in $\stl$.}

The transformer ($\T_I(\delta,\Delta)$) for the subsequent \C{insert}
statement can be similarly constructed:
\begin{smathpar}
\begin{array}{l}
  \lambda(\stl,\stg).~ \stl \cup
       \{\langle\C{ol\_o\_id}=\C{k2};\,
                 \C{ol\_d\_id}=\C{k3};\, \C{ol\_i\_id}=\C{item\_req.ol\_i\_id};\, 
                 \C{ol\_qty}=\C{item\_req.ol\_qty}\rangle\}
 
\end{array}
\end{smathpar}
Observe that both transformers are of the form $\T(\delta,\Delta) =
\stl \cup \F(\stg)$, where $\F$ is a function that returns the
set of records added  to the transaction-local database ($\stl$). Let
$\F_U$ and $\F_I$ be the corresponding functions for $\T_U$ and $\T_I$
shown above. The state transformation induced by the loop body in
Fig.~\ref{fig:new_order_code} can be expressed as the following
composition of $\F_U$ and $\F_I$:
\begin{smathpar}
\begin{array}{l}
  \lambda(\stl,\stg).~ \stl \cup \F_U(\stg) \cup \F_I(\stg)
\end{array}
\end{smathpar}
The transformer for the loop itself can now be computed to be:
\begin{smathpar}
\begin{array}{l}
  \lambda(\stl,\stg).~ \stl \cup \C{item\_reqs}\bind
      (\lambda\C{item\_req}.~ \F_U(\stg) \cup \F_I(\stg))
\end{array}
\end{smathpar}
Observe that the structure of the transformer mirrors the structure of
the program itself. In particular, SQL statements become set
operations, and the \C{foreach} combinator becomes set monad's bind
($\bind$) combinator.  As we demonstrate, the advantage of inferring
such transformers is that we can now make use of a
semantics-preserving translation from the domain of sets equipped with
$\bind$ to a decidable fragment of first-order logic, allowing us to
leverage SMT solvers for automated proofs without having to infer
potentially complex thread-local invariants or intermediate
assertions.  Sec.~\ref{sec:inference} describes this translation. In
the exposition thus far, we assumed $\Delta$ remains invariant, which
is clearly not the case when we admit concurrency.  Necessary
concurrency extensions of the state transformer semantics to deal with
interference is also covered in Sec.~\ref{sec:inference}.  Before
presenting the transformer semantics, we first focus our attention in
the following two sections on the theoretical foundations for weak
isolation, upon which this semantics is based.

\section{\txnimp: Syntax and Semantics}
\label{sec:opsem}

\label{sec:syntax}

\begin{figure*}[!t]
\raggedright
\textbf{Syntax}\\
\begin{smathpar}
\renewcommand{\arraystretch}{1.2}
\begin{array}{lclcl}
\multicolumn{5}{c} {
  {x,y} \in \mathtt{Variables}\;\;\spc
  {f} \in \mathtt{Field\;Names} \;\;\spc
% {\tau} \in \mathtt{Table\;Names}
  {i,j} \in \mathbb{N} \;\;\spc
  {\odot} \in \{+,-,\le,\ge,=\}\;\;\spc
  {k} \in \mathbb{Z}\cup\mathbb{B} \;\;\spc
  {\rec} \in \langle \bar{f}=\bar{k}\rangle
}\\
{\stl,\stg,s} & \in & \mathtt{State} & \coloneqq &  \Pow{\langle
  \bar{f}=\bar{k} \rangle} \\
{\I_e, \I_c }  & \in & \mathtt{Isolation Spec} & \coloneqq & (\stl,\stg,\stg') \rightarrow \Prop\\
v & \in & \mathtt{Values} & \coloneqq & k \ALT \rec \ALT s\\
e & \in & \mathtt{Expressions} & \coloneqq & v \ALT x \ALT x.f 
    \ALT \langle \bar{f}=\bar{e} \rangle \ALT e_1 \odot e_2\\ 
c & \in & \mathtt{Commands} & \coloneqq & \lete{x}{e}{c}
    \ALT \ite{e}{c_1}{c_2}\ALT c_1;c_2 \ALT \inserte{x}  \\
&&&&\ALT \deletee{\lambda x.e}
    \ALT \lete{y}{\selecte{\lambda x.e}}{c}
    \ALT \updatee{\lambda x.e_1}{\lambda x.e_2}\\
&&&&\ALT \foreache{x}{\lambda y.\lambda z. c} 
    \ALT \foreachr{s_1}{s_2}{\lambda x.\lambda y. e}\\
&&&&\ALT \ctxn{i}{\I}{ c } \ALT \ctxnr{i}{\I,\stl,\stg}{c} \ALT c_1 ||
  c_2 \ALT \cskip \\
% t & \in & \mathtt{Terms} & \coloneqq & e \ALT c\\
\ectx & \in & \mathtt{Eval\;Ctx} & ::= & \bullet \ALT  
  \bullet || c_2 \ALT c_1 || \bullet \ALT \bullet;\,c_2 
  \ALT \ctxnr{i}{\I,\stl,\stg}{\bullet} \\
\end{array}
\end{smathpar}
\bigskip

\renewcommand{\arraystretch}{1.2}

\textbf{Local Reduction} \quad 
\fbox {\(\stg \vdash (\tbox{c}_i,\stl) \stepsto (\tbox{c'}_i,\stl')\)}\\
\bigskip

\begin{minipage}{2.45in}
\rulelabel{E-Insert}
\begin{smathpar}
\begin{array}{c}
\RULE
{
  r.\idf \not\in \dom(\stl \cup \stg)\\
  r' = \langle r \;\C{with}\; \txnf=i;\,\delf=\C{false} \rangle
}
{
  \stg \vdash (\tbox{\inserte{r}}_i,\stl) \stepsto
  (\tbox{\cskip}_i,\stl \cup \{r'\})
}
\end{array}
\end{smathpar}
\end{minipage}
\begin{minipage}{2.6in}
\rulelabel{E-Select}
\begin{smathpar}
\begin{array}{c}
\RULE
{
  \\
  s = \{r\in\Delta \,|\, \eval([r/x]e)=\C{true}\}\spc
  c' = [s/y]c
}
{
  \stg \vdash (\tbox{\lete{y}{\selecte{\lambda x.e}}{c}}_i, \stl) \stepsto 
              (\tbox{c'}_i,\stl)
}
\end{array}
\end{smathpar}
\end{minipage}
\bigskip

\begin{minipage}{2.5in}
\rulelabel{E-Delete}
\begin{smathpar}
\begin{array}{c}
\RULE
{
  \dom(\stl)\cap\dom(s) = \emptyset \\
  \hspace*{-0.4in}
  s = \{r' \,|\, \exists(r\in\Delta).~ \eval([r/x]e)=\C{true} \\
      \hspace*{0.2in}\conj r'=\langle r \with \delf=\C{true};\,
      \txnf=i \rangle\}\\
% \dom(s) \cap \dom(\delta) = \emptyset
}
{
  \stg \vdash (\tbox{\deletee{\lambda x.e}}_i,\stl) \stepsto 
  (\tbox{\cskip}_i,\stl \cup s)
}
\end{array}
\end{smathpar}
\end{minipage}
\begin{minipage}{2.8in}
\rulelabel{E-Update}
\begin{smathpar}
\begin{array}{c}
\RULE
{
  \dom(\stl) \cap \dom(s) = \emptyset\\
  \hspace*{-0.8in}s = \{r' \,|\, \exists(r\in\Delta).~ 
    \eval([r/x]e_2)=\C{true} \conj \\
  r'= \langle [r/x]e_1 \;\C{with}\;
    \idf=r.\idf;\,\txnf=i;\,\delf = r.\delf \rangle \}
}
{
  \stg \vdash (\tbox{\updatee{\lambda x.e_1}{\lambda x.e_2}}_i,\stl) 
      \stepsto (\tbox{\cskip}_i,\stl \cup s)
}
\end{array}
\end{smathpar}
\end{minipage}
\bigskip

\begin{smathpar}
\begin{array}{ll}
  \rulelabel{E-Foreach1} & \stg \vdash (\tbox{\foreache{s}{\lambda y.\lambda
    z.c}}_i,\stl) \stepsto (\tbox{\foreachr{\emptyset}{s}{\lambda
    y.\lambda z. c}}_i,\stl)\\
  \rulelabel{E-Foreach2} & \stg \vdash (\tbox{\foreachr{s_1}{\{r\} \uplus s_2}
    {\lambda y.\lambda z.c}}_i,\stl) \stepsto (\tbox{[r/z][s_1/y]c;\,\\
    & \hspace*{2.5in}\foreachr{s_1 \cup \{r\}}{s_2}{\lambda y.\lambda z. c}}_i,\stl)\\
  \rulelabel{E-Foreach3} & \stg \vdash (\tbox{\foreachr{s}{\emptyset}
    {\lambda y.\lambda z.c}}_i,\stl) \stepsto (\tbox{\cskip}_i,\stl)\\
\end{array}
\end{smathpar}
\bigskip

\textbf{Top-Level Reduction} \quad 
\fbox {\((c,\stg) \stepsto (c',\stg')\)}\\
\bigskip

\begin{minipage}{2.5in}
  \rulelabel{E-Txn-Start}
  \begin{smathpar}
  \begin{array}{c}
    \RULE{\\}
         {(\ctxn{i}{\I}{c},\stg) \stepsto (\ctxnr{i}{\I,\emptyset,\stg}{c},\stg)}
  \end{array}
  \end{smathpar}
\end{minipage}%
\begin{minipage}{2.3in}
\rulelabel{E-Txn}
\begin{smathpar}
\begin{array}{c}
\RULE
{
  \I_e\,\,(\stl,\stg,\stg')\spc
  \stg \vdash (\tbox{c}_i,\stl) \stepsto (\tbox{c'}_i,\stl')
}
{
  (\ctxnr{i}{\I,\stl,\stg}{c},\stg') \stepsto
  (\ctxnr{i}{\I,\stl',\stg'}{c'},\stg')
}
\end{array}
\end{smathpar}
\end{minipage}\\
\bigskip

\begin{center}
\begin{minipage}{3in}
\rulelabel{E-Commit}
\begin{smathpar}
\begin{array}{c}
\RULE
{
  \I_c\,\,(\stl,\stg,\stg')
}
{
  (\ctxnr{i}{\I,\stl,\stg}{\cskip},\stg') \stepsto (\cskip,\stl \rhd \stg')
}
\end{array}
\end{smathpar}
\end{minipage}
\end{center}
\hfill
\caption{\small \txnimp: Syntax and Small-step semantics}
\label{fig:txnimp}
\end{figure*}

Fig.~\ref{fig:txnimp} shows the syntax and small-step semantics of
\txnimp, a core language that we use to formalize our intuitions about
reasoning under weak isolation. Variables ($x$), integer and boolean
constants ($k$), records ($r$) of named constants, sets ($s$) of such
records, arithmetic and boolean expressions ($e_1 \odot e_2$), and
record expressions ($\langle \bar{f}=\bar{e} \rangle$) constitute the
syntactic class of expressions ($e$). Commands ($c$) include $\cskip$,
conditional statements, \C{LET} constructs to bind names, \C{FOREACH}
loops, SQL statements, their sequential composition ($c_1;c_2$),
transactions ($\ctxn{i}{\I}{c}$) and their parallel composition
($c_1\,||\,c_2$). Each transaction is assumed to have a unique
identifier $i$, and executes at the top-level; our semantics does not
support nested transactions. The $\I$ in the \C{TXN} block syntax is
the transaction's isolation specification, whose purpose is explained
below.  Certain terms that only appear at run-time are also present in
$c$.  These include a {\sf txn} block tagged with sets ($\stl$ and
$\stg$) of records representing local and global database state, and a
runtime {\sf foreach} expression that keeps track of the set ($s_1$)
of items already iterated, and the set ($s_2$) of items yet to be
iterated. Note that the surface-level syntax of the \C{FOREACH}
command shown here is slightly different from the one used in previous
sections; its higher-order function has two arguments, $y$ and $z$,
which are invoked (during the reduction) with the set of
already-iterated items, and the current item, respectively. This form
of \C{FOREACH} lends itself to inductive reasoning that will be useful
for verification (Sec.~\ref{sec:reasoning}). Our language ensures
that all effectful actions are encapsulated within database commands,
and that all shared state among processes are only manipulated via
transactions and its supported operations.  In particular, we do not
consider programs in which objects resident on e.g., the OCaml heap
are concurrently manipulated by OCaml expressions as well as database
actions.

% also We let $T_i$ for $i \in \matshbb{N}$
% range over transaction identifiers. When it is evident we are
% referring to a transaction, we use the number $i$ instead of $T_i$ for
% identification (\eg in $\C{txn}\langle i \rangle$). Like variables,
% transaction identifiers are globally accessible. For notational
% convenience, we let $t$ range over both expressions and commands.

We define a small-step operational semantics for this language in
terms of an abstract machine that executes a command, and updates
either a transaction-local ($\stl$), or global ($\stg$) database, both
of which are modeled as a set of records of a pre-defined type, i.e.,
they all belong to a single table.  The generalization to multiple
tables is straightforward, e.g., by having the machine manipulate a
set of sets, one for each table.  The semantics assumes that records
in $\stg$ can be uniquely identified via their $\idf$ field, and
enforces this property wherever necessary. Certain hidden fields are
treated specially by the operational semantics, and are hidden
from the surface language. These include a $\txnf$ field that tracks
the identifier of the transaction that last updated the record, and a
$\delf$ field that flags deleted records in $\stl$.  For a set $S$ of
records, we define $\dom(S)$ as the set of unique ids of all records
in $S$. Thus $|\dom(\stg)| = |\stg|$. During its execution, a
transaction may write to multiple records in $\stg$. Atomicity
dictates that such writes should not be visible in $\stg$ until the
transaction commits. We therefore associate each transaction with a
local database ($\stl$) that stores such uncommitted
records\footnote{While SQL's \C{UPDATE} admits writes at the
  granularity of record fields, most popular databases enforce
  record-level locking, allowing us to think of ``uncommitted writes''
  as ``uncommitted records''. }. Uncommitted records include deleted
records, whose $\delf$ field is set to \C{true}. When the transaction
commits, its local database is atomically \emph{flushed} to the global
database, committing these heretofore uncommitted records. The flush
operation ($\rhd$) is defined as follows:
\begin{smathpar}
\begin{array}{c}
\forall r.~ r \in (\stl\rhd\stg) ~\Leftrightarrow~ 
  (r.\idf \notin \dom(\stl) \conj r \in \stg)
\disj (r \in \stl \conj \neg r.\delf) 
\end{array}
\end{smathpar}
Let $\stg' = \stl\rhd\stg$. A record $r$ belongs to $\stg'$ iff it
belongs to $\stg$ and has not been updated in $\stl$, i.e., $r.\idf
\notin \dom(\stl)$, or it belongs to $\stl$, i.e., it is either a new
record, or an updated version of an old record, provided the update is
not a deletion ($\neg r.\delf$).  Besides the commit, flush also helps
a transaction read its own writes. Intuitively, the result of a read
operation inside a transaction must be computed on the database
resulting from flushing the current local state ($\stl$) to the global
state ($\stg$). The abstract machine of Fig.~\ref{fig:txnimp},
however, does not let a transaction read its own writes. This
simplifies the semantics, without losing any generality, since
substituting $\stl\rhd\stg$ for $\stg$ at select places in the
reduction rules effectively allows reads of uncommitted transaction
writes to be realized, if so desired.

The small-step semantics is stratified into a transaction-local
reduction relation, and a top-level reduction relation. The
transaction-local relation ($\stg \vdash (c,\stl) \stepsto
(c',\stl')$) defines a small-step reduction for a command inside a
transaction, when the database state is $\stg$; the command $c$
reduces to $c'$, while updating the transaction-local database $\stl$
to $\stl'$. The definition assumes a meta-function $\eval$ that
evaluates closed terms to values. The reduction relation for SQL
statements is defined straightforwardly. \C{INSERT} adds a new record
to $\stl$ after checking the uniqueness of its id. \C{DELETE} finds
the records in $\stg$ that match the search criteria defined by its
boolean function argument, and adds the records to $\stl$ after
marking them for deletion. \C{SELECT} bounds the name introduced by
\C{LET} to the set of records from $\stg$ that match the search
criteria, and then executes the bound command $c$. \C{UPDATE} uses its
first function argument to compute the updated version of the records
that match the search criteria defined by its second function
argument. Updated records are added to $\stl$.

The reduction of \C{FOREACH} starts by first converting it to its
run-time form to keep track of iterated items ($s_1$), as well as
yet-to-be-iterated items ($s_2$).  Iteration involves invoking its
function argument with $s_1$ and the current element $x$ (note:
$\uplus$ in $\{x\} \uplus s_2$ denotes a disjoint union). The
reduction ends when $s_2$ becomes empty. The reduction rules for
conditionals, \C{LET} binders, and sequences are standard, and
omitted for brevity.

The top-level reduction relation defines the small-step semantics of
transactions, and their parallel composition. A transaction comes
tagged with an \emph{isolation specification} $\I$, which has two components
$\I_e$ and $\I_c$, that dictate the timing and nature of interferences
that the transaction can witness, during its execution ($\I_e$), and
when it is about to commit ($\I_c$).  Formally, $\I_e$ and $\I_c$ are
predicates over the (current) transaction-local database state
($\stl$), the state ($\stg$) of the global database when the
transaction last took a step, and the current state ($\stg'$) of the
global database.  Intuitively, $\stg'\neq\stg$ indicates an
interference from another concurrent transaction, and the predicates
$\I_e$ and $\I_c$ decide if this interference is allowed or not,
taking into account the local database state ($\stl$). For instance,
as described in \S\ref{sec:motivation}, an SI transaction on
PostgreSQL defines $\I$ as follows:
\begin{smathpar}
\begin{array}{lcl}
\I_e\,\,(\stl,\stg,\stg') & = & \stg' = \stg\\
\I_c\,\,(\stl,\stg,\stg') & = & \forall(r\in\stl)(r'\in\stg).~ r'.\idf = r.\idf \Rightarrow r'\in\stg'
\end{array}
\end{smathpar}
This definition dictates that no change to the global database state
can be visible to an SI transaction while it executes ($\I_e$), and
there should be no concurrent updates to records written by the
transaction by other concurrently executing ones ($\I_c$).
To simplify the presentation, we use $\I$ instead of $\I_e$ and $\I_c$
when its destructed form is not required.

The reduction of a $\ctxn{i}{\I}{c}$ begins by first converting it to
its run-time form $\ctxnr{i}{\I,\stl,\stg}{c}$, where $\stl =
\emptyset$, and $\stg$ is the current (global) database.  Rule
\rulelabel{E-Txn} reduces $\ctxnr{i}{\I,\stl,\stg}{c}$ under a
database state ($\stg'$), only if the transaction-body isolation
specification ($\I_e$) allows the interference between $\stg$ and
$\stg'$.   Rule \rulelabel{E-Commit} commits the
transaction $\ctxnr{i}{\I,\stl,\stg}{c}$ by flushing its uncommitted
records to the database. This is done only if the interference between
$\stg$ and $\stg'$ is allowed at the commit point by the isolation
specification ($\I_c$).  The distinction between $\I_e$ and $\I_c$
allows us to model the snapshot semantics of realistic isolation
levels that isolate a transaction from interference during its
execution, but expose interferences at the commit point.

\textbf{Local Context Independence} As mentioned previously, our
operational semantics does not let a transaction read its own writes.
It also does not let a transaction overwrite its own writes, due to
the premise $\dom(\stl)\cap\dom(s) = \emptyset$ on the
\rulelabel{E-Delete} and \rulelabel{E-Update} rules. We refer to this
restriction as \emph{local context independence}.  This restriction is
easy to relax in the operational semantics and the reasoning framework
presented in the next section; our inference procedure described in
\S\ref{sec:inference}, however, has a non-trivial dependence on this
assumption.  Nonetheless, we have encountered few instances in
practice where enforcing local context independence turns out to be a
severe restriction. Indeed, all of the transactions we have considered
in our benchmarks (e.g., TPC-C) satisfy this assumption.

\subsection{Isolation Specifications}
\label{sec:isolation}

A distinctive characteristic of our development is that it is
parameterized on a weak isolation specification $\I$ that can be
instantiated with the declarative characterization of an isolation
guarantee or a concurrency control mechanism, regardless of the actual
implementation used to realize it. This allows us to model a range of
isolation properties that are relevant to the theory and practice of
transaction processing systems without appealing to specific
implementation artifacts like locks, versions, logs, speculation, etc.
A few well-known properties are discussed below:

% Databases implement isolation levels through a combination of various
% concurrency control mechanisms, which are absent from
% Fig.~\ref{fig:txnimp}. In this section, we first axiomatize the
% semantics of various concurrency control constructs, and then define
% the precise semantics of the isolation levels as they are implemented
% on two real-world databases -  PostgreSQL and MySQL.
% Any additional concurrency control provided by the database
% implementation, beyond that which is already incuded in the
% isolation specification, also needs to be axiomatized in $\I$. We
% first describe such axiomatizations, and then present the
% specifications of the standard isolation levels as implemented by
% popular off-the-shelf databases.

\textbf{Unique Ids}. As the \C{new\_order} example
(\S\ref{sec:motivation}) demonstrates, enforcing global uniqueness
of ordered identifiers requires stronger isolation levels than the
ones that are default on most databases (e.g., Read
Committed). Alternatively, globally unique sequence numbers,
regardless of the isolation level, can be requested from a relational
database via SQL's \C{UNIQUE} and \C{AUTO\_INCREMENT} keywords. Our
development crucially relies on the uniqueness of record
identifiers\footnote{The importance of unique ids is recognized in
  real-world implementations.  For example, MySQL's InnoDB engine
  automatically adds a 6-byte unique identifier if none exists for a
  record.}, which are checked locally for uniqueness by the
\rulelabel{E-Insert} rule.  The global uniqueness of locally unique
identifiers can be captured as an isolation property thus:
\begin{smathpar}
\begin{array}{lcl}
  \I_{id}(\stl,\stg,\stg') & = & \forall(r\in\stl).~
      r.\idf\notin \dom(\stg) \Rightarrow r.\idf\notin \dom(\stg')
\end{array}
\end{smathpar}
$\I_{id}$ ensures that if the id of a record is globally unique when
it is added to a transaction's $\stl$, it remains globally unique
until the transaction commits. This would be achieved within our
semantic framework by prohibiting the interference from a concurrent
transaction that adds the same id. The axiom thus simulates a global
counter protected by an exclusive lock without explicitly appealing to
an implementation artifact.

\textbf{Write-Write Conflicts}. Databases often employ a combination
of concurrency control methods, both optimistic (e.g., speculation and
rollback) and pessimistic (e.g., various degrees of locking), to
eliminate write-write (\emph{ww}) conflicts among concurrent
transactions. We can specify the absence of such conflicts using our
tri-state formulation thus:
\begin{smathpar}
\begin{array}{lcl}
  \I_{ww}(\stl,\stg,\stg') & = & \forall(r'\in\stl)(r \in \stg).~
      r.\idf = r'.\idf  \Rightarrow r\in\stg'
\end{array}
\end{smathpar}
That is, given a record $r'\in\stl$, if there exists an $r\in\stg$
with the same id (i.e., $r'$ is an updated version of $r$), then $r$
must be present unmodified in $\stg'$. This prevents a concurrent
transaction from changing $r$, thus simulating the behavior of an
exclusive lock or a speculative execution that succeeded (Note: a
transaction writing to $r$ always changes $r$ because its $\txnf$
field is updated). 
% There is however a caveat: if we assume extensional
% equality over records, a write by a concurrent transaction that
% doesn't change $r$'s contents is still allowed. While this in itself
% is not a problem, the concurrent transaction may modify other records,
% which then become visible in the current transaction. A write lock
% prevents this behavior, whereas our axiomatization ($\I_{ww}$) allows
% it. An easy fix for this is to add version timestamps to records that
% effectively intensionalizes equality. Nonetheless, imprecise
% axiomatization of write locks hasn't been a problem in practice.

\textbf{Snapshots} Almost all major relational databases implement
isolation levels that execute transactions against a static snapshot
of the database that can be axiomatized thus:
\begin{smathpar}
\begin{array}{lcl}
  \I_{ss}(\stl,\stg,\stg') & = & \stg' = \stg
\end{array}
\end{smathpar}

\textbf{Read-Only Transactions}. Certain databases implement special
privileges for read-only transactions. Read-only behavior can be
enforced on a transaction by including the following proposition as
part of its isolation invariant:
\begin{smathpar}
\begin{array}{lcl}
  \I_{ro}(\stl,\stg,\stg') & = & \stl = \emptyset\\
\end{array}
\end{smathpar}

In addition to these properties, various specific isolation levels
proposed in the database or distributed systems literature, or
implemented by commercial vendors can also be specified within this
framework:

\textbf{Read Committed (RC) and Monotonic Atomic View (MAV).} RC
isolation allows a transaction to witness writes of committed
transactions at any point during the transaction's execution.
Although it offers only weak isolation guarantees, it nonetheless
prevents witnessing \emph{dirty writes} (i.e., writes performed by
uncommitted transactions).  Monotonic Atomic View
(MAV)~\cite{bailishat} is an extension to RC that guarantees the
continuous visibility of a committed transaction's writes once they
become visible in the current transaction. That is, a MAV transaction
does not witness \emph{disappearing writes}, which can happen on a
weakly consistent machine. Due to the SC nature of our abstract
machine (there is always a single global database state $\stg$; not a
vector of states indexed by vector clocks), and our choice to never
violate atomicity of a transaction's writes, both RC and MAV are
already guaranteed by our semantics.  Thus, defining $\I_e$ and $\I_c$
to \emph{true} ensures RC and MAV behavior under our semantics.

\textbf{Repeatable Read (RR)} By definition, multiple reads to a
transactional variable in a Repeatable Read transaction are required
to return the same value.  RR is often implemented (for e.g., in
~\cite{mysqliso,bailishat}) by executing the transaction against a
(conceptual) snapshot of the database, but committing its writes to
the actual database. This implementation of RR can be axiomatized as
$\I_e = \I_{ss}$ and $\I_{c}=true$. However, this specification of RR
is stronger than the ANSI SQL specification, which requires no more
than the invariance of already read records. In particular, ANSI SQL
RR allows \emph{phantom reads}, a phenomenon in which a repeated
\C{SELECT} query might return newly inserted records that were not
previously returned. This specification is implemented, for e.g., in
Microsoft's SQL server, using record-level exclusive read locks, that
prevent a record from being modified while it is read by an
uncommitted transaction, but which does not prohibit insertion of new
records. The ANSI SQL RR specification can be axiomatized in our
framework, but it requires a minor extension to our operational
semantics to track a transaction's reads. In particular, the records
returned by \C{SELECT} should be added to the local database $\stl$,
but without changing their transaction identifiers ($\txnf$ fields),
and flush ($\rhd$) should only flush the records that bear the current
transaction's identifier. With this extension, ANSI SQL RR can be
axiomatized thus:
\begin{smathpar}
\begin{array}{lcl}
  \I_e(\stl,\stg,\stg') & \Leftrightarrow & \forall(r\in\stl).
      r \in \Delta \Rightarrow r \in \Delta'\\
  \I_c(\stl,\stg,\stg') & \Leftrightarrow & true\\
\end{array}
\end{smathpar}
If a record $r$ belongs to both $\stl$ and $\stg$, then it must be a
record written by a different transaction and read by the current
transaction (since the current transaction's records are not yet
present in $\stg$). By requiring $r\in\stg'$, $\I_e$ guarantees the
invariance of $r$, thus the repeatability of the read. 

\textbf{Snapshot Isolation (SI)} The concept of executing a
transaction against a consistent snapshot of the database was first
proposed as Snapshot Isolation in~\cite{berenson}. SI doesn't admit
write-write conflicts, and the original proposal, which is implemented
in Microsoft SQL Server, required the database to roll-back an SI
transaction if conflicts are detected during the commit. This behavior
can be axiomatized as $\I_e = \I_{ss}$ (execution against a snapshot),
and $\I_c = \I_{ww}$ (avoiding write-write conflicts during the
commit).
% \begin{smathpar}
% \begin{array}{lcl}
% \I_e\,\,(\stl,\stg,\stg') & = & \stg' = \stg\\
% \I_c\,\,(\stl,\stg,\stg') & = & \forall(r\in\stl)(r'\in\stg).~ r'.\idf = r.\idf \Rightarrow r'\in\stg'.
% \end{array}
% \end{smathpar}
Note that the same axiomatization applies to PostgreSQL's RR,
although its implementation (described in Sec.~\ref{sec:motivation})
differs considerably from the original proposal. Thus, reasoning done
for an SI transaction on MS SQL server carries over to PostgreSQL's RR
and vice-versa, demonstrating the benefits of reasoning axiomatically
about isolation properties.

\textbf{Serializability (SER)} The specification of serializability is
straightforward:
\begin{smathpar}
\begin{array}{lcl}
\I_e\,\,(\stl,\stg,\stg') & = & \stg' = \stg\\
\I_c\,\,(\stl,\stg,\stg') & = & \stg' = \stg
\end{array}
\end{smathpar}

\section{The Reasoning Framework}
\label{sec:reasoning}

% \begin{figure}
% \centering
% \begin{tabular}{l|l}
% \begin{txnimpcode}
% $\begin{decoration}
%   \{X = k\}
% \end{decoration}$
%   atomic {
%     v1 := 0; v2 := 1;
%     while(v1<X) {
%       v2 := v2 * (X-v1);
%       v1 := v1+1;
%     }
%     X := v2; Y := v2;
%   }
% $\begin{decoration}
%   \{X = k! \conj Y = X\}
% \end{decoration}$
% \end{txnimpcode}
% &
% \begin{txnimpcode}
% $\begin{decoration}
%   \{X = k\}
% \end{decoration}$
%   transaction {
%     v1 := 0; v2 := 1;
%     while(v1<X) {
%       v2 := v2 * (X-v1);
%       v2 := v2 * v1;
%       v1 := v1+1;
%     }
%     X := v2; Y := v2;
%   }
% $\begin{decoration}
%   \{Y = X\}
% \end{decoration}$
% \end{txnimpcode}
% \end{tabular}
%   \caption{Factorial program written as an atomic block and a weakly
%   isolated transaction (shared variables have uppercase names).
%   Effect of multiple iterations of the \C{atomic} loop can be
%   summarized as $k!$, while same cannot be done with the
%   \C{transaction} loop, where weak isolation allows reads in different
%   iterations to return different values for $X$.  Nonetheless,
%   post-state satisfies $X=Y$ in both cases due to atomicity of
%   writes.}
% \label{fig:atomic-vs-transaction}
% \end{figure}

We now describe a proof system that lets us prove the correctness of a
\txnimp program $c$ w.r.t its high-level consistency conditions $I$,
on an implementation that satisfies the isolation specifications
($\I$) of its transactions\footnote{Note the difference between $I$
  and $\I$. The former constitute \emph{proof} \emph{obligations} for
  the programmer, whereas the latter describes a transaction's
  \emph{assumptions} about the operational characteristics of the
  underlying system.}.  Our proof system is essentially an adaptation
of a rely-guarantee reasoning framework~\cite{rgjones} to the setting
of weakly isolated database transactions.  The primary challenge in
the formulation deals with how we relate a transaction's isolation
specification ($\I$) to its rely relation ($R$) that describes the
transaction's environment, so that interference is considered only
insofar as allowed by the isolation level.  Another characteristic of
the transaction setting that affects the structure of the proof system
is atomicity; we do not permit a transaction's writes to be visible
until it commits.  In the context of rely-guarantee, this means that
the transaction's guarantee ($G$) should capture the aggregate effect
of a transaction, and not its individual writes.  While shared memory
\C{atomic} blocks also have the same characteristic, the fact that
transactions are weakly-isolated introduces non-trivial complexity.
Unlike an \C{atomic} block, the effect of a transaction is \emph{not}
a sequential composition of the effects of its statements because each
statement can witness a potentially different version of the state.

\subsection{The Rely-Guarantee Judgment}
\label{sec:rely-guarantee}

\begin{figure}[t]
\raggedright
\fbox {\( \R \vdash \hoare{P}{c}{Q} \)} 
\quad \fbox {\( \rg{I,R}{c}{G,I} \)}\\[4pt]
%\fbox{\( \rg{\mathbb{I},P,R}{\txnbox{c}_i}{G,Q}\)} \quad
%\fbox{\( \rg{\mathbb{I},P,R}{c}{G,Q}\)} \quad\\
%
\bigskip

\rulelabel{RG-Select}
\begin{center}
\begin{smathpar}
\begin{array}{c}
\RULE
{
  P(\stl,\stg) \conj
    x = \{r \,|\, r\in\Delta \wedge [r/y]e\} \Rightarrow
  P'(\stl,\stg)\spc
  \R \vdash \hoare{P'}{c}{Q}\spc
  \stable(\R,P')
}
{
  \R \vdash \hoare{P}{\lete{x}{\selecte{\lambda y.e}}{c}}{Q}
}
\end{array}
\end{smathpar}
\end{center}
\rulelabel{RG-Insert}
\begin{center}
\begin{smathpar}
\begin{array}{c}
\RULE
{
  \stable(\R,P)\\
  \forall\stl,\stl',\stg,i.~P(\stl,\stg) \conj j \not\in
  \dom(\stl\cup\stg) \conj 
  \stl'=\stl \cup 
  \{\langle x \with \idf=j;\,\txnf=i;\,\delf=\C{false}\rangle\} \Rightarrow Q(\stl',\stg)
}
{
  \R \vdash \hoare{P}{\inserte{x}}{Q}
}
\end{array}
\end{smathpar}
\end{center}
\rulelabel{RG-Update}
\begin{center}
\begin{smathpar}
\begin{array}{c}
\RULE
{
  \hspace*{-1.2in} \stable(\R,P)\spc
  \forall\stl,\stl',\stg.~P(\stl,\stg) \conj 
  \stl' = \stl \cup \{r' \,|\, \exists(r\in\Delta).[r/x]e_2 \conj\\
   \hspace*{1.3in}r'=\langle[r/x]e_1 \with \idf=r.\idf;\,\txnf=i;\,\delf=\C{false}\rangle\} \Rightarrow   Q(\stl',\stg)
}
{
  \R \vdash \hoare{P}{\updatee{\lambda x.e_1}{\lambda x.e_2}}{Q}
}
\end{array}
\end{smathpar}
\end{center}
\begin{minipage}{3.2in}
\rulelabel{RG-Delete}
\begin{smathpar}
\begin{array}{c}
\RULE
{
  \stable(\R,P)\\
  \forall\stl,\stl',\stg.~P(\stl,\stg) \conj 
  \stl' = \stl \cup \{r' \,|\, \exists(r\in\Delta).~ [r/x]e
        \conj r'=\langle r \with \txnf=i; \delf=\C{true}\rangle\}
  \Rightarrow 
  Q(\stl',\stg)
}
{
  \R \vdash \hoare{P}{\deletee{\lambda x.e}}{Q}
}
\end{array}
\end{smathpar}
\end{minipage}
\bigskip

\begin{minipage}{2.8in}
\rulelabel{RG-Foreach}
\begin{smathpar}
\begin{array}{c}
\RULE
{
  \stable(\R,Q)\spc
  \stable(\R,\psi) \spc \stable(\R, P)\\
  P \Rightarrow [\emptyset/y]\psi\spc
  \R \vdash \hoare{\psi \wedge z\in x}{c}{Q_c}\\
  Q_c \Rightarrow [y \cup \{z\}/y]\psi\spc
  [x/y]\psi \Rightarrow Q
}
{
  \R \vdash \hoare{P}{\foreache{x}{\lambda y.\lambda z.c}}{Q}
}
\end{array}
\end{smathpar}
\end{minipage}
\begin{minipage}{2in}
\rulelabel{RG-Conseq}
\begin{smathpar}
\begin{array}{c}
\RULE
{
  \rg{I,R}{\ctxn{i}{\I}{c}}{G,I}\\
  \I' \Rightarrow \I \spc 
  R' \subseteq R \spc
  G \subseteq G' \\
  \stable(R',\I')\spc
  \forall \stg,\stg'.~I(\stg) \wedge G'(\stg,\stg') \Rightarrow I(\stg')\\
}
{
  \rg{I,R'}{\ctxn{i}{\I'}{c}}{G',I}
}
\end{array}
\end{smathpar}
\end{minipage}
\bigskip

\rulelabel{RG-Txn}
\begin{center}
%\begin{minipage}{3.5in}
\begin{smathpar}
\begin{array}{c}
\RULE
{
  \stable(R,\I)\spc
  \stable(R,I)\spc
  \R_e = R \backslash \I_e \spc 
  \R_c = R \backslash \I_c \spc 
  P(\stl,\stg) \Leftrightarrow \stl=\emptyset \wedge I(\stg)\\
   \R_e \vdash \rg{P}{c}{Q} \spc \stable(\R_c,Q) \spc
  \forall \stl,\stg.~ Q(\stl,\stg) \Rightarrow 
    G(\stg, \stl \rhd \stg)\spc
  \forall \stg,\stg'.~I(\stg) \wedge G(\stg,\stg') \Rightarrow I(\stg')\\
}
{
  \rg{I,R}{\ctxn{i}{\I}{c}}{G,I}
}
\end{array}
\end{smathpar}
%\end{minipage}
\end{center}

\caption{\small \txnimp: Rely-Guarantee rules}
\label{fig:rg-rules}
\vspace*{-12pt}
\end{figure}

Fig.~\ref{fig:rg-rules} shows an illustrative subset of the
rely-guarantee (RG) reasoning rules for $\txnimp$. We define two RG
judgments: top-level ($\rg{I,R}c{G,I}$), and transaction-local ($\R
\vdash \hoare{P}c{Q}$).  Recall that the standard RG judgment is the
quintuple $\rg{P,R}{c}{G,Q}$. Instead of separate $P$ and $Q$
assertions, our top-level judgment uses $I$ as both a pre- and
post-condition, because our focus is on verifying that a
\txnimp\ program \emph{preserves} a databases' consistency
conditions\footnote{The terms \emph{consistency condition},
  \emph{high-level invariant}, and \emph{integrity constraint} are
  used interchangeably throughout the paper.}.  A transaction-local RG
judgment does not include a guarantee relation because
transaction-local effects are not visible outside a transaction. Also,
the rely relation ($\R$) of the transaction-local judgment is not the
same as the top-level rely relation ($R$) because it must take into
account the transaction's isolation specification ($\I$). Intuitively,
$\R$ is $R$ modulo $\I$.  Recall that a transaction writes to its
local database ($\stl$), which is then flushed when the transaction
commits. Thus, the guarantee of a transaction depends on the state of
its local database at the commit point. The pre- and post-condition
assertions ($P$ and $Q$) in the local judgment facilitate tracking the
changes to the transaction-local state, which eventually helps us
prove the validity of the transaction's guarantee.  Both $P$ and $Q$
are bi-state assertions; they relate transaction-local database state
($\stl$) to the global database state ($\stg$). Thus, the
transaction-local judgment effectively tracks how transaction-local
and global states change in relation to each other.

\subsubsection{Stability}

A central feature of a rely-guarantee judgment is a stability
condition that requires the validity of an assertion $\phi$ to be
unaffected by interference from other concurrently executing
transactions, i.e., the rely relation $R$. In conventional RG,
stability is defined as follows, where $\sigma$ and $\sigma'$ denote
states:
\begin{smathpar}
\begin{array}{lcl}
\stable(R,\phi) & \Leftrightarrow & \forall \sigma,\sigma'.~
\phi(\sigma) \conj R(\sigma,\sigma') \Rightarrow \phi(\sigma')\\
\end{array}
\end{smathpar}
Due to the presence of local and global database states, and the
availability of an isolation specification, we use multiple
definitions of stability in Fig.~\ref{fig:rg-rules}, but they all
convey the same intuition as above. In our setting, we only need to
prove the stability of an assertion ($\phi$) against those environment
steps which lead to a global database state on which the transaction
itself can take its next step according to its isolation specification
($\I$). 
\begin{smathpar}
\begin{array}{lcl}
\stable(R, \phi) & \Leftrightarrow & \forall \stl, \stg, \stg'. \phi(\stl, \stg) 
  \wedge R^{*}(\stg, \stg') \wedge \I(\stl, \stg, \stg') \Rightarrow \phi(\stl, \stg')
\end{array}
\end{smathpar}
\noindent A characteristic of RG reasoning is that stability of an
assertion is always proven w.r.t to $R$, and not $R^{*}$, although
interference may include multiple environment steps, and $R$ only
captures a single step. This is nonetheless sound due to inductive
reasoning: if $\phi$ is preserved by every step of $R$, then $\phi$ is
preserved by $R^{*}$, and vice-versa.  However, such reasoning does
not extend naturally to isolation-constrained interference because
$R^{*}$ modulo $\I$ is not same as $\R^{*}$; the former is a
transitive relation constrained by $\I$, whereas the latter is the
transitive closure of a relation constrained by $\I$. This means,
unfortunately, that we cannot directly replace $R^{*}$ by $R$ in the
above condition.

To obtain an equivalent form in our setting, we require an additional
condition on the isolation specification, which we call the
\emph{stability condition on $\I$}.  The condition requires $\I$ to
admit the interference of multiple $R$ steps (i.e.,
$R^{*}(\stg,\stg'')$, for two database states $\stg$ and $\stg''$),
only if it also admits interference of each $R$ step along the way.
Formally:
\begin{smathpar}
\begin{array}{lcl}
  \stable(R,\I) & \Leftrightarrow & \forall \stl,\stg,\stg',\stg''.~
  \I(\stl,\stg,\stg'') \conj R(\stg',\stg'') \Rightarrow
  \I(\stl,\stg,\stg') \conj \I(\stl,\stg',\stg'')
\end{array}
\end{smathpar}
It can be easily verified that the above stability condition is
satisfied by the isolation axioms from Sec.~\ref{sec:isolation}. For
instance, $\I_{ss}$, the snapshot axiom, is stable because if a the
state is unmodified between $\stg$ and $\stg''$, then it is clearly
unmodified between $\stg$ and $\stg'$, and also between $\stg'$ and
$\stg''$, where $\stg'$ is an intermediary state.  Modifying and
restoring the state $\stg$ is not possible because each new commit
bears a new transaction id different from the transaction ids (\C{txn}
fields) present in $\stg$. 

The stability condition on $\I$ guarantees that an interference from
$R^*$ is admissible only if the interference due to each individual
$R$ step is admissible. In other words, it makes isolation-constrained
$R^*$ relation equal to the transitive closure of the
isolation-constrained $R$ relation. We call $R$ constrained by
isolation $\I$ as $R$ modulo $\I$ ($R\backslash \I$; written equivalently
as $\R$), which is the following ternary relation:
\begin{smathpar}
  \begin{array}{lcl}
    (R \backslash \I)(\stl, \stg, \stg') & \Leftrightarrow & R(\stg,
      \stg') \wedge \I(\stl, \stg, \stg')\\
  \end{array}
\end{smathpar}
It is now enough to prove the stability of an RG assertion $\phi$
w.r.t $R\backslash \I$:
\begin{smathpar}
  \begin{array}{lcl}
    \stable((R \backslash \I), \phi) & \Leftrightarrow & \forall
      \stl,\stg,\stg'.~ \phi(\stl,\stg) \wedge (R \backslash \I)(\stl,\stg,\stg') \Rightarrow \phi(\stl,\stg')\\
  \end{array}
\end{smathpar}
This condition often significantly simplifies the form of $R
\backslash \I$ irrespective of $R$. For example, when a transaction is
executed against a snapshot of the database (i.e. $\mathbb{I}_{ss}$),
$R \backslash \I_{ss}$ will be the identity function, since any
non-trivial interference will violate the $\stg' = \stg$ condition
imposed by $\I_{ss}$.

\subsubsection{Rules}

\rulelabel{RG-Txn} is the top-level rule that lets us prove a
transaction preserves the high-level invariant $I$ when executed under
the required isolation as specified by $\I$. It depends on a
transaction-local judgment to verify the body ($c$) of a transaction
with id $i$. The precondition $P$ of $c$ must follow from the fact
that the transaction-local database ($\stl$) is initially empty, and
the global database satisfies the high-level invariant $I$. The rely
relation ($\R_e$) is obtained from the global rely relation $R$ and
the isolation specification $\I_e$ as explained above. Recall that
$\I_e$ constrains the global effects visible to the transaction while
it is executing but has not yet committed, and $P$ and $Q$ of the
transaction-local RG judgment are binary assertions; they relate local
and global database states. The local judgment rules require one or
both of them to be stable with respect to the constrained rely
relation $\R_e$.

\raggedbottom
For the guarantee $G$ of a transaction to be valid, it must follow
from the post-condition $Q$ of the body, provided that $Q$ is stable
w.r.t the commit-time interference captured by $\R_c$. $\R_c$, like
$\R_e$, is computed as a rely relation modulo isolation, except that
commit-time isolation ($\I_c$) is considered. The validity of
$G$ is captured by the following implication:
\begin{smathpar}
\begin{array}{c}
  \forall \stl,\stg.~ Q(\stl,\stg) \Rightarrow G(\stg, \stl \rhd \stg)\spc
\end{array}
\end{smathpar}
In other words, if $Q$ relates the transaction-local database state
($\stl$) to the state of the global database ($\stg$) before a transaction
commits, then $G$ must relate the states of the global database before
and after the commit. The act of commit is captured by the flush
action ($\stl\rhd\stg$). Once we establish the validity of $G$ as a
faithful representative of the transaction, we can verify that the
transaction preserves the high-level invariant $I$ by checking the
stability of $I$ w.r.t $G$, i.e., $\forall \stg,\stg'.~I(\stg) \wedge
G(\stg,\stg') \Rightarrow I(\stg')$.

The \rulelabel{RG-Conseq} rule lets us safely weaken the guarantee
$G$, and strengthen the rely $R$ of a transaction. Importantly, it
also allows its isolation specification $\I$ to be strengthened (both
$\I_e$ and $\I_c$). This means that a transaction proven correct under
a weaker isolation level is also correct under a stronger level.
Parametricity over the isolation specification $\I$, combined with the
ability to strengthen $\I$ as needed, admits a flexible proof strategy
to prove database programs correct. For example, programmers can
declare isolation requirements of their choice through $\I$, and then
prove programs correct assuming the guarantees hold. The soundness of
strengthening $\I$ ensures that a program can be safely executed on
any system that offers isolation guarantees at least as strong as
those assumed.

Salient rules of transaction-local RG judgments are shown in
Fig.~\ref{fig:rg-rules}. These rules (\rulelabel{RG-Update},
\rulelabel{RG-Select}, \rulelabel{RG-Delete}, and
\rulelabel{RG-Insert}) reflect the structure of the corresponding
reduction rule from Fig.~\ref{fig:txnimp}.  The rule
\rulelabel{RG-Foreach} defines the RG judgment for a \C{FOREACH} loop.
As is characteristic of loops, the reasoning is pivoted on a loop
invariant $\psi$ that needs to be stable w.r.t $\R$. $\psi$ must be
implied by $P$, the pre-condition of \C{FOREACH}, when no elements
have been iterated, i.e, when $y=\emptyset$. The body of the loop can
assume the loop invariant, and the fact that $z$ is an element from
the set $x$ being iterated, to prove its post-condition $Q_c$. The
operational semantics ensures that $z$ is added to $y$ at the end of
the iteration, hence $Q_c$ must imply $[y\cup\{z\}/y]\psi$. When the
loop has finished execution, $y$, the set of iterated items, is the
entire set $x$. Thus $[x/y]\psi$ is true at the end of the loop, from
which the post-condition $Q$ must follow. As with the other rules, $Q$
needs to be stable. The rules for conditionals, sequencing etc., are
standard, and hence elided.

\subsection{Semantics and Soundness}

We now formalize the semantics of the RG judgments defined in
Fig.~\ref{fig:rg-rules}, and state their soundness guarantees.

\begin{definition}[\bfseries Interleaved step and multi-step relations]
Interleaved step relations interleave global and transaction-local
reductions with interference as captured by the corresponding rely
relations. They are defined thus:
\begin{smathpar}
\begin{array}{lclr}
(c,\stg) \rstepsto (c',\stg') & \defeq &  
  (c,\stg) \stepsto (c',\stg') \disj (c' = c \conj R(\stg, \stg'))&
  \texttt{global}\\
(\tbox{c}_i,\stl,\stg) \rstepsto (\tbox{c'}_i,\stl',\stg') & \defeq & \stg \vdash 
  (\tbox{c}_i,\stl) \stepsto (\tbox{c'}_i,\stl') \conj \stg'=\stg& \texttt{transaction-local}\\
  &   & \disj (c' = c \conj \stl'=\stl \conj \R(\stl, \stg, \stg'))
\end{array}
\end{smathpar}

\noindent An interleaved multi-step relation ($\stepssto{n}$) is the
reflexive transitive closure of the interleaved step relation.  
\end{definition}

\begin{definition}[\bfseries Semantics of RG judgments]
\label{def:rg-semantics}
The semantics of the global and transaction-local RG judgments are
defined thus:
\begin{smathpar}
\begin{array}{lclr}
\R \vdash \hoare{P}{c}{Q} & \defeq & \forall
  \stl,\stl',\stg,\stg'.~ P(\stl,\stg) \conj (\tbox{c}_i,\stl,\stg) \rstepssto{n}
  (\tbox{\cskip}_i, \stl',\stg')
  \Rightarrow Q(\stl',\stg') &\\
\rg{I,R}{c}{G,I} & \defeq &  \forall \stg.\, I(\stg)
  \Rightarrow (\forall \stg'.\; (c,\stg) 
    \rstepssto{n} (\cskip,\stg') \Rightarrow I(\stg')) \\
&&\hspace*{0.6in}\conj \texttt{txn-guaranteed}(R,G,c,\stg)\\
\end{array}
\end{smathpar}

\noindent The
$\texttt{txn-guaranteed}$ predicate used in the semantics of the
global RG judgment is defined below:\vspace*{-10pt}

\begin{smathpar}
\begin{array}{lcl}
\texttt{txn-guaranteed}(R,G,c,\stg) &\defeq& \forall c',c''\stg',\stg''.
(c,\stg) \rstepssto{n} (c',\stg') \conj (c',\stg') \stepsto
  (c'',\stg'') \Rightarrow G(\stg',\stg'')\\
\end{array}
\end{smathpar}
\end{definition}

\noindent Thus, if $\rg{I,R}{c}{G,I}$ is a valid RG judgment, then (a)
every interleaved multi-step reduction of $c$ preserves the database
integrity constraint (consistency condition) $I$, and (b) the effect
that every transaction in $c$ has on the database state is captured by
$G$.
\noindent We can now assert the soundness of the RG judgments in
Fig.~\ref{fig:rg-rules} as follows\footnote{Full proofs for the major
  theorems and lemmas defined in this paper are available from
  ~\cite{KN+18_arxiv}.}:

\begin{theorem}[\bfseries Soundness] 
The rely-guarantee judgments defined by the rules in
Fig.~\ref{fig:rg-rules} are sound with respect to the semantics of
Definition~\ref{def:rg-semantics}.
\end{theorem}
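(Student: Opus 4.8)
I would prove this by a single induction on the derivation of the rely--guarantee judgment, treating the top-level judgment $\rg{I,R}{c}{G,I}$ and the transaction-local judgment $\R \vdash \hoare{P}{c}{Q}$ simultaneously, since \rulelabel{RG-Txn} discharges a transaction-local sub-derivation. Before the induction I would set up two auxiliary lemmas. The first is a routine \emph{stability-absorption} lemma: if $\stable(\R,\phi)$ and $\phi(\stl,\stg)$, then $\phi$ survives any maximal run of consecutive rely steps of the interleaved reduction (proved by induction on the length of the run from the single-step stability premise); the analogue for $\stable(R,\phi)$ holds at the top level. The second, and the delicate one, is an \emph{interference-decomposition} lemma that leans on the stability-of-$\I$ premise $\stable(R,\I)$: whenever $R^{*}(\stg,\stg'')$ and $\I(\stl,\stg,\stg'')$, the path from $\stg$ to $\stg''$ can be re-read as a sequence of single $(R\backslash\I)$-steps (each hop $\sigma_j\!\to\!\sigma_{j+1}$ satisfies $R(\sigma_j,\sigma_{j+1})\wedge\I(\stl,\sigma_j,\sigma_{j+1})$), with $\I(\stl,\stg,\sigma_j)$ holding at every prefix. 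This is exactly the bridge the paper flags between ``$R^{*}$ modulo $\I$'' and ``$\R^{*}$'', and it is what justifies proving stability against single $\R$-steps while the operational semantics exposes interference as one $\I$-checked jump.

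With these in hand, the transaction-local cases---\rulelabel{RG-Select}, \rulelabel{RG-Insert}, \rulelabel{RG-Update}, \rulelabel{RG-Delete}, \rulelabel{RG-Foreach}, plus the elided rules for \C{LET}, conditionals, sequencing, and $\cskip$---go through in the standard rely--guarantee fashion. Given an interleaved multi-step reduction $(\tbox{c}_i,\stl,\stg)\rstepssto{n}(\tbox{\cskip}_i,\stl',\stg')$ from a $P$-state, I would split it into maximal blocks of rely steps separated by single local program steps; absorb each rely block with the stability-absorption lemma against the assertion attached to the relevant program point (e.g.\ $P'$ in \rulelabel{RG-Select}, the loop invariant $\psi$ in \rulelabel{RG-Foreach}); analyze each program step by the matching operational rule of Fig.~\ref{fig:txnimp} (whose premise the RG rule mirrors); and chain the induction hypotheses for sub-commands (splitting at the first point a sequenced $c_1$ reduces to $\cskip$; for \rulelabel{RG-Foreach}, an inner induction on the number of iterations driven by the premises $P\!\Rightarrow\![\emptyset/y]\psi$, $Q_c\!\Rightarrow\![y\cup\{z\}/y]\psi$, $[x/y]\psi\!\Rightarrow\!Q$). \rulelabel{RG-Conseq} is immediate from its monotonicity premises $\I'\!\Rightarrow\!\I$, $R'\!\subseteq\!R$, $G\!\subseteq\!G'$, the new stability premise, and the $G'$-preserves-$I$ premise.

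The crux is \rulelabel{RG-Txn}, where the top-level reduction of $\ctxn{i}{\I}{c}$ interleaved with $R$ must be related to the transaction-local reduction of $c$ interleaved with $\R_e = R\backslash\I_e$. The plan is a \emph{transaction-simulation lemma}: from an \rulelabel{E-Txn-Start} step followed by a run of \rulelabel{E-Txn} steps interleaved with environment $R$-steps, reconstruct a corresponding transaction-local interleaved reduction of $c$ under $\R_e$, letting each \rulelabel{E-Txn} step contribute one local program step and each maximal run of environment $R$-steps between two transaction steps contribute a block of $\R_e$-rely steps---legitimate precisely by applying the interference-decomposition lemma to $\I_e$, which turns the single $\I_e$-checked jump of \rulelabel{E-Txn} into a valid $\R_e$-block. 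Since $P = (\stl=\emptyset \wedge I(\stg))$ holds at transaction start (empty local store, $I$ on the global state then) and is stable w.r.t.\ $R$, the transaction-local judgment for $c$ (the induction hypothesis) yields $Q(\stl,\stg)$ when the body reaches $\cskip$. The atomicity argument then finishes: any remaining environment $R$-steps before the commit, together with the $\I_c$-check of \rulelabel{E-Commit}, form an $\R_c$-block (interference-decomposition again, now for $\I_c$), so $\stable(\R_c,Q)$ gives $Q(\stl,\stg'')$ at the commit's pre-state; the premise $\forall\stl,\stg.\,Q(\stl,\stg)\Rightarrow G(\stg,\stl\rhd\stg)$ certifies that the sole global-state-changing step of the transaction---the flush $\stg''\rightsquigarrow\stl\rhd\stg''$---lies in $G$, discharging the \texttt{txn-guaranteed} conjunct, and $\forall\stg,\stg'.\,I(\stg)\wedge G(\stg,\stg')\Rightarrow I(\stg')$ propagates $I$ across it; preservation of $I$ by the interleaving as a whole additionally uses $\stable(R,I)$ for the pure environment steps. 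The parallel-composition rule (among those not displayed) is then the textbook interleaving argument: project a reduction of $c_1\,\|\,c_2$ onto each component, count the sibling's program steps and the ambient environment as that component's rely (sound by the non-interference side-conditions that force each guarantee into the sibling's rely), apply the two hypotheses, and recombine.

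I expect the main obstacle to be the transaction-simulation lemma itself: reconciling the bookkeeping of witnessed global states in \rulelabel{E-Txn} (the local step reads the pre-interference snapshot while the run-time txn block is refreshed to the current global state, and the $\I_e$-check mentions the pre-step local store) with the cleanly factored program-step/rely-step shape of the transaction-local interleaved reduction, while simultaneously collapsing multi-step environment interference into single $\R_e$-steps. The stability-of-$\I$ condition is the indispensable ingredient that makes this collapse sound, and checking that it does precisely the work required---for $\I_e$ during execution and for $\I_c$ at commit---is the technical heart of the proof.
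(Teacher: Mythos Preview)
Your proposal is correct and takes essentially the same approach as the paper. The paper likewise proceeds rule by rule (as separate soundness theorems rather than one simultaneous induction, but this is cosmetic), with the same decomposition of the \rulelabel{RG-Txn} case into start, execution, and commit phases; your interference-decomposition lemma is the paper's key step stated in the forward direction, whereas the paper argues its contrapositive---once $\I_e$ fails at some intermediate global state it stays failed under further $R$-steps by $\stable(R,\I)$, so the transaction could never fire its next \rulelabel{E-Txn} step, contradicting that the body terminates---and the analogous argument for $\I_c$ before \rulelabel{E-Commit}.
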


\paragraph{{\sc Proof Sketch.}}  The most important rule is the top-level
rule \rulelabel{RG-Txn}, which proves that a transaction $c$ which
begins its execution in global database state satisfying $I$ and
encountering interference $R$ while executing under isolation
specification $\I$ finishes its execution in a database state also
satisfying $I$, and also guarantees that its commit step satisfies
$G$. The rule uses the transaction-local RG judgment $\R_e \vdash
\rg{P}{c}{Q}$. By \rulelabel{E-Txn-Start}, the local and global
database states at the start of a transaction satisfy $P$, and the
only challenge is that environment steps in an execution covered by
$\R_e \vdash \rg{P}{c}{Q}$ are in $\R_e$, while the top-level
judgment requires environment steps in $R$. We show that it is enough
to consider only those environment steps in $\R_e$. First, we use an
inductive argument and stability of $\I_e$ ($\stable(R, \I_e)$) to
show that any execution in which the transaction completes all its
steps must always preserve the isolation specification $\I_e$ after
every environment step. Intuitively, this is because once $\I_e$ gets
broken after some environment step, it will continue to remain broken
and the transaction would not be able to proceed (according to
\rulelabel{E-Txn}). Since $\R_e$ contains exactly those environment
steps which preserve $\I_e$, the local-level RG judgment can be
soundly used, which guarantees that after the transaction finishes its
execution, its local state $\stl$ and global state $\stg$ will satisfy
the assertion $Q$. Environment steps between the last step of the
transaction and its commit step can modify the global state, and hence
we also require $Q$ to be stable against $R$. Again, we use an
inductive argument, the stability of $\I_c$, and the fact that the
transaction must execute its commit step to show that all environment
steps must preserve $\I_c$, and hence it is enough to require
$\stable(\R_c, Q)$. $Q$ guarantees that the commit step is in $G$, and
$G$ in turn guarantees that after execution, the global database state
will obey the invariant $I$.

%% The local RG-judgement rules are mostly straightforward and directly
%% follow from the operational semantics. At this point, the transaction
%% can only change its local database state by adding new records to it,
%% and hence all the rules have the form : $P(\stl, \stg)$ and $\stl'$ is
%% the new state should satisfy $Q(\stl', \stg)$. Note that note is that
%% the {\sc RG-Update}, {\sf RG-Select}, {\sc RG-Delete} and {\sc
%%   RG-Insert} rules only require the pre-condition $P$ to be stable
%% against interference. This is because the these rules only make
%% assertions about the database state just after the transaction takes
%% its last step. For the aforementioned commands, the last step would be
%% the only step taken by the transaction, and hence $Q$ would not be
%% affected by any environment steps, but instead directly reflect the
%% changes made by the command (the environment steps between the last
%% execution step of the transaction and the commit step would be
%% controlled by $\I_c$ and would be taken care of by $\stable(\R_c, Q)$
%% at the top level).

%% The rule $\rulelabel{RG-Foreach}$ makes use of an inductive loop
%% invariant $\psi$ which assumes that the variable $y$ is bound to
%% records which have been processed in previous iterations and $z$ is
%% bound to the record processed in the current iteration. Since the
%% operational semantics ensure that every iteration processes a unique
%% record, the proof is straightforward.

\section{Inference}
\label{sec:inference}

The rely-guarantee framework presented in the previous section
facilitates modular proofs for weakly-isolated transactions, but
imposes a non-trivial annotation burden.  In particular, it requires
each statement ($c$) of the transaction to be annotated with a stable
pre- ($P$) and post-condition ($Q$), and loops to be annotated with
stable inductive invariants ($\psi$). While weakest pre-condition
style predicate transformers can help in inferring intermediate
assertions for regular statements, loop invariant inference remains
challenging, even for the simple form of loops considered here.  As an
alternative, we present an inference algorithm based on state
transformers that alleviates this burden.  The idea is to infer the
logical effect that each statement has on the transaction-local
database state $\stl$ (i.e., how it transforms $\stl$), and compose
multiple such effects together to describe the effect of the
transaction as a whole.  Importantly, this approach generalizes to
loops, where the effect of a loop can be inferred as a well-defined
function of the effect of its body, thanks to certain pleasant
properties enjoyed by the database programs modeled by our core
language.  Interpreting database semantics as functional
transformations on sets (described in terms of their logical effects)
enables an inference mechanism that can leverage off-the-shelf SMT
solvers for automated verification.

\begin{figure}[t]
\begin{smathpar}
\renewcommand{\arraystretch}{1.2}
\begin{array}{lcl}
\multicolumn{3}{c}{
  x,y,\stl,\stg \in \texttt{Variables}\spc
  \varphi \in \Prop^{0}\spc
  \phi \in \Prop^{1}
}\\
s & \coloneqq & x \ALT \stl \ALT \stg \ALT \{x \,|\, \varphi\} 
  \ALT \existsl(\stg,\phi,s) 
  \ALT s_1 \bind \lambda x. s_2 \ALT \itel{\varphi}{s_1}{s_2} 
  \ALT s_1 \cup s_2 \\
\end{array}
\end{smathpar}
\caption{Syntax of the set language $\SL$}
\label{fig:logic-syntax}
\end{figure}

At the core of our approach is a simple language ($\SL$) to express
set transformations (see Fig.~\ref{fig:logic-syntax}). The language
admits set expressions that include variables ($x$), literals of the
form $\{x \,|\, \varphi\}$ where $\varphi$ is a propositional
(quantifier-free) formula on $x$, a restricted form of existential
quantification that binds a set $\stg$ satisfying proposition $\phi$
in a set expression $s$, a monadic composition of two set expressions
($s_1$ and $s_2$) composed using a bind ($\bind$) operation, a
conditional set expression where the condition is a propositional
formula, and a union of two set expressions. Symbols $\stl$ and $\stg$
are also variables in $\SL$, but are used to denote local and database
states (also represented as sets), respectively. Constant sets can be
written using set literal expressions. For example, the set $\{1,2\}$
can be written as $\{x \,|\, x=1 \vee x=2\}$. The language is
carefully chosen to be expressive enough to capture the semantics of
$\txnimp$ statements (as well as SQL operations more generally), yet
simple enough to have a semantics-preserving translation amenable for
automated verification.

\begin{figure}
\raggedright
\fbox {\(\Fx \vdash c \elabsto \F \)}\\
%

% \renewcommand{\arraystretch}{1.2}
% \begin{smathpar}
% \begin{array}{lcl}
% % \inserte{x} & \elabsto & \stabilize{\lambda(\stl,\stg).~
% %   \{ r \,|\, r = \{x \with \idf = \uid(x);\,r.\delf=\C{false};\,
% %                 \txnf = i\} \conj \uid(x) \notin \dom(\stl\cup\stg) \}}\\ 
% %
% %
% \updatee{\lambda x.e_1}{\lambda x.e_2} & \elabsto & \stabilize{
%   \lambda(\stl,\stg).~ \stg \bind (\lambda r.~ 
%   \{r' \,|\, [r/x]e_2 \conj r' = \{[r/x]e_1 \with 
%       \idf=r.\idf;\,\delf = r.\delf;\,\txnf=i \}) \} }\\ 
% %
% \deletee{\lambda x.e} & \elabsto & \stabilize{
%   \lambda(\stl,\stg).~ \stg \bind (\lambda r.~ 
%   \{r' \,|\, [r/x]e \conj r'=\{r \with \delf=\C{true}\})\} }\\ 
% %
% \end{array}
% \end{smathpar}

\begin{center} 
%
%\begin{minipage}{3.5in}
\begin{smathpar}
\begin{array}{c}
\RULE
{
}
{
  \Fx \vdash \inserte{x} ~\elabsto~ \stabilize
    {\inctxt{\Fx}
            {\lambda(\stg).~ \{ r \,|\, r = 
              \{\langle x \with \delf=\mathtt{false};\, 
                \txnf = i\rangle \}}}
}
\end{array}
\end{smathpar}
%\end{minipage}
%\begin{minipage}{3.5in}

\begin{smathpar}
\begin{array}{c}
\RULE
{
  G = \lambda r.~ \itel{[r/x]e_2}
      {\{r' \,|\, r' = \langle[r/x]e_1 \with \idf=r.\idf;\,\delf = r.\delf;\,\txnf=i\rangle \}}
      {\emptyset}
}
{
  \Fx \vdash \updatee{\lambda x.e_1}{\lambda x.e_2}  ~\elabsto~
  \stabilize{\inctxt{\Fx}{\lambda(\stg).~ \stg \bind G }}
}
\end{array}
\end{smathpar}
%\end{minipage}
%\begin{minipage}{3in}

\begin{smathpar}
\begin{array}{c}
\RULE
{
  G = \lambda r.~ \itel{[r/x]e}
      {\{r' \,|\, r' = \langle r \with \delf = \mathtt{true};\,\txnf=i\rangle \}}
      {\emptyset}
}
{
  \Fx \vdash \deletee{\lambda x.e}  ~\elabsto~
  \stabilize{\inctxt{\Fx}{\lambda(\stg).~ \stg \bind G }}
}
\end{array}
\end{smathpar}
%\end{minipage}

%\begin{minipage}{2.7in}
\begin{smathpar}
\begin{array}{c}
\RULE
{
  \Fx \vdash c \elabsto \F \spc
}
{
  \Fx \vdash \lete{x}{e}{c} ~\elabsto~  
    {\lambda(\stg).~ [e/x]\,\F(\stg)}\\
}
\end{array}
\end{smathpar}
%\end{minipage}

%\begin{minipage}{3in}
\begin{smathpar}
\begin{array}{c}
\RULE
{
  \Fx \vdash c \elabsto \F \\
  G = \lambda r.~ \itel{[r/x]e}{\{r' \,|\, r' = r \}}{\emptyset}\spc
  \F' = \stabilize{\inctxt{\Fx}{\lambda(\stg).~\stg \bind G}}

}
{
  \Fx \vdash \lete{y}{\selecte{\lambda x.e}}{c} ~\elabsto~  
    \lambda (\stg).~ [\F'(\stg)/y]\,\F(\stg)\\
}
\end{array}
\end{smathpar}
%\end{minipage}

%\begin{minipage}{3.2in}
\begin{smathpar}
\begin{array}{c}
\RULE
{
  \Fx \vdash c_1 \elabsto \F_1 \spc
  \Fx \vdash c_2 \elabsto \F_2 
}
{
  \Fx \vdash \ite{e}{c_1}{c_2} ~\elabsto~
    \lambda(\stg).~\itel{e}{\F_1(\stg)}{\F_2(\stg)}\\
}
\end{array}
\end{smathpar}
%\end{minipage}

%\begin{minipage}{3in}
\begin{smathpar}
\begin{array}{c}
\RULE
{
  \Fx \vdash c_1 \elabsto \F_1 \spc
  \Fx \cup \F_1 \vdash c_2 \elabsto \F_2 
}
{
  \Fx \vdash c_1;c_2 ~\elabsto~ \F_1 \cup \F_2 
}
\end{array}
\end{smathpar}
%\end{minipage}
%

%
%\begin{minipage}{3in}
\begin{smathpar}
\begin{array}{c}
\RULE
{
  \Fx \vdash c \elabsto \F \spc
}
{
  \Fx \vdash \foreache{x}{\lambda y.\lambda z.~c} ~\elabsto~
  \lambda(\stg).~ x\bind(\lambda z.~\F(\stg))
}
\end{array}
\end{smathpar}
%\end{minipage}

\end{center}

\caption{\txnimp: State transformer semantics. }
\label{fig:inference-rules}
\end{figure}

Fig.~\ref{fig:inference-rules} shows the syntax-directed state
transformer inference rules for $\txnimp$ commands inside a
transaction $\C{TXN}_i$. The rules compute, for each command $c$, a
(meta) function $\F$ that returns a set of records as an expression in
$\SL$, given a global database $\stg$. Intuitively, $\F(\stg)$
abstracts the set of records added to the local database $\stl$ as a
result of executing $c$ under $\stg$ (i.e., $\stg \vdash
(\tbox{c}_i,\stl) \stepsto^{*}_{R} (\tbox{\cskip}_i, \stl \cup
\F(\stg))$)\footnote{Recall that the operational semantics treats 
  deletion of records as the addition of the deleted record with its
  \C{del} field set to true in the local store.}. Note that the
  function $\F$ we call state transformer here is actually the
  \emph{effect} part of the state transformer introduced in
  Sec.~\ref{sec:motivation}, which is a function $\T$ of form
  $\lambda(\stl,\stg).~\stl \cup \F(\stg)$. Nonetheless, for
  simplicity, we will continue to refer to $\F$ as state transformer.
  Since the execution is subject to isolation-constrained
  interference, the inference judgment depends on the
  isolation-constrained rely relation $\R$, which is used to enforce
  the stability of the state transformer $\F$.  Recall that $\R$ is a
  tri-state rely relation over $\stl$, $\stg$ and $\stg'$, that admits
  an interference from $\stg$ and $\stg'$ depending on the local
  database state $\stl$. Thus, the stability of the state transformer
  $\F$ of $c$ with respect to $\R$ needs to take into account the
  (possible) prior state of the local database $\stl$, which depends
  on the context (sequence of previous commands) of $c$, and computed
  by the corresponding state transformer $\Fx$. Thus, the semantics of
  the state transformer can be understood in terms of the RG judgment
  as following (formalized as Theorem~\ref{thm:inference-sound} in
  Sec.~\ref{sec:inference-sound}):
\begin{smathpar}
  \begin{array}{c}
    \R \vdash \hoare{\lambda(\stl,\stg).~ \stl = \Fx(\stg)}{c}
    {\lambda(\stl,\stg).~ \stl = \Fx(\stg) \cup \F(\stg)}
  \end{array}
\end{smathpar}
In the above RG judgment, let $P$ denote the pre-condition
$\lambda(\stl,\stg).~ \stl = \Fx(\stg)$, and let $Q$ denote the
post-condition $\lambda(\stl,\stg).~ \stl = \Fx(\stg) \cup \F(\stg)$.
The stability condition on the state transformer $\F$ can be derived
from the stability condition on $Q$. Observe that for $Q$ to be
stable, $\Fx(\stg') \cup \F(\stg')$ must be equal to $\Fx(\stg) \cup
\F(\stg)$, where $\stg$ and $\stg'$ are related by $R$ (ignore $\I$
for the moment). Assuming that $P$ is stable, $\Fx(\stg')=\Fx(\stg)$
is already given, leaving $\F(\stg')=\F(\stg)$ to be enforced. Thus,
the stability of $\F$ in in the context of $\Fx$ (written
$\inctxt{\Fx}{\F}$) is defined as following:
\begin{smathpar}
\begin{array}{lcl}
  \stable(\R,\inctxt{\Fx}{\F}) & \Leftrightarrow & \forall \stg,\stg',\nubar.~
  \R(\Fx(\stg) \cup \F(\stg),\stg,\stg') \Rightarrow \F(\stg) = \F(\stg')
\end{array}
\end{smathpar}
where $\nubar$ are the variables that occur free in $\F$; this is
possible because of how the inference rules are structured. The
equality in $\SL$ translates to equivalence in first-order logic, as
we describe later. In the inference rules, stability is enforced
constructively by a meta-function $\stabilize{\cdot}$, which accepts
a transformer $\F$ (in its context $\Fx$) and returns a new
transformer that is guaranteed to be stable under $\R$.
$\stabilize{\cdot}$ achieves the stability guarantee by abstracting
away the bound global state ($\stg$) in an unstable $\F$ to an
existentially bound $\stg'$ as described below:
\begin{smathpar}
\begin{array}{lcll}
  \stabilize{\inctxt{\Fx}{\F}} & = & \F & \texttt{if }
  \stable(\R,\inctxt{\Fx}{\F}).\\
  & = & \lambda (\stg).~\existsl(\stg',I(\stg'),\F(\stg')) 
      & \texttt{otherwise. }\stg'\texttt{ is a fresh name.}\\
\end{array}
\end{smathpar}
Observe that when $\F$ is not stable, $\stabilize{\F}$ returns a
transformer $\F'$ that simply ignores its $\stg$ argument in favor of
a generic $\stg'$, making $\F'$ trivially stable. It is safe to assume
$I(\stg')$ because all verified transactions must preserve the
invariant, and hence only valid database states will ever be
witnessed. From the perspective of RG reasoning, $\stabilize{\cdot}$
effectively weakens the post-condition of a statement, as done by the
\rulelabel{RG-Conseq} rule for transaction-bound commands.  The
weakening semantics chosen by $\stabilize{\cdot}$, while being simple,
is nonetheless useful because of the $I(\stg')$ assumption imposed on
an existentially bound $\stg'$. The example in
Fig.~\ref{fig:weakening-example} demonstrates.
\begin{figure}[h]
\begin{center}
\begin{ocaml}
let add_interest acc_id pc = atomically_do @@ fun () ->
  let a = SQL.select1 BankAccount (fun acc -> acc.id = acc_id) in
  let y = a.bal + pc*a.bal in
  SQL.update BankAccount (fun acc -> {acc with bal = acc.bal + y})
                         (fun acc -> acc.id = acc_id)
\end{ocaml}
\end{center}
\caption{A transaction that deposits an interest to a bank account.}
\label{fig:weakening-example}
\end{figure}
Here, an \C{add\_interest} transaction adds a positive interest
(determined by \C{pc}) to the balance of a bank account, which is
required to be non-negative ($I(\stg) \Leftrightarrow
\forall(r\in\stg).~r.\C{bal}\ge 0$). The transaction starts by issuing
a \C{select1} query, whose transformer $\F$ is essentially a singleton
set containing a record $r$ whose id is \C{acc\_id} (i.e.,
$\F(\stg) = \{r \,|\, r\in\stg \wedge r.\C{id} = \C{acc\_id}\}$).
However, $\F$ is unstable because $\F(\stg')$ may not be
the same set as $\F(\stg)$ when $\stg'\neq\stg$. A record
$r\in\stg$ whose $\C{id}=\C{acc\_id}$ may have its balance updated by
a concurrent \C{withdraw} or \C{deposit} transaction in $\stg'$,
making the record in $\stg'$ different from the record in $\stg$.
Hence the stability check fails.  Fortunately, the weakening operator
($\stabilize{\cdot}$) allows us to weaken the effect to
$\existsl(\stg, I(\stg), \{ r \,|\, r\in\stg \wedge
r.\C{id}=\C{acc\_id}\})$, which effectively asserts that the
\C{select1} query returns a record with $\C{id}=\C{acc\_id}$ from
\emph{some} database state that satisfies the non-negative balance
invariant $I$.  This weakened assertion is nonetheless enough to
deduce that $\C{a.bal}\ge0$, and subsequently prove that $\C{a.bal +
pc*a.bal}\ge 0$, allowing us to verify the \C{add\_interest}
transaction.

The state transformer rules, like the earlier RG rules, closely follow
the corresponding reduction rules in Fig.~\ref{fig:txnimp}, except
that their language of expression is $\SL$. For instance, while the
reduction rule for \C{UPDATE} declaratively specifies the set of
updated records, the state transformer rule uses $\SL$'s bind
operation to \emph{compute} the set. Other SQL rules do likewise. The
rules for \C{LET} binders, conditionals, and sequences compose the
effects inferred for their subcommands. Thus, the effect of a sequence
of commands $c_1;c_2$ is the union of effects $\F_1$ and $\F_2$ of
$c_1$ and $c_2$, respectively, except that $\F_2$ is computed in a
context that includes $\F_1$ (we write $\F_1 \cup \F_2$ as a shorthand
for $\lambda(\stg).~\F_1(\stg) \cup \F_2(\stg)$). The inference rule
for \C{FOREACH} takes advantage of the $\SL$'s bind operator to lift
the effect inferred for the loop body to the level of the loop. Since
records added to $\stl$ in each iteration of \C{FOREACH} are
independent of the previous iteration (recall that we make a local
context independence assumption about database programs; Sec.
~\ref{sec:opsem}), sequential composition of the effects of different
iterations is the same as their parallel composition. Since the loop
body is executed once per each $z\in x$, the effect of the the loop is
a union of effects ($\F$) for all $z\in x$, all applied to the same
state ($\stg$).  That is, $\F_{loop}(\stg) = \bigcup_{z\in
x}\F_{body}(\stg)$. From the definition of the set monad's bind
operator, $\F_{loop}(\stg) = x \bind (\lambda z.~F_{body}(\stg))$,
which mirrors the definition of the rule.

\subsection{Soundness of Inference}
\label{sec:inference-sound}

We now formally state the correspondence between the inference rules
given above and the RG judgment of \S\ref{sec:reasoning}:
\begin{theorem}
  \label{thm:inference-sound}
  For all $i$,$R$,$I$,$c$,$\Fx$, $\F$, if $\stable(\R,I)$, $\stable(\R, \Fx)$ and $\Fx
  \vdash c \elabsto \F$, then:\\\vspace*{-0.2cm}
  \begin{smathpar}
  \begin{array}{c}
    \R \vdash \hoare{\lambda(\stl,\stg).~\stl=\Fx(\stg) \conj
    I(\stg)}{c}{\lambda(\stl,\stg).~\stl = \Fx(\stg) \cup \F(\stg)}
  \end{array}
  \end{smathpar}
\end{theorem}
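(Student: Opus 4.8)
The plan is to proceed by structural induction on the derivation of $\Fx \vdash c \elabsto \F$ (equivalently, on the command $c$), showing in each case that the asserted judgment is \emph{derivable} using the rely-guarantee rules of Fig.~\ref{fig:rg-rules}; the claimed (semantic) judgment then follows by the Soundness theorem. Two auxiliary observations are used throughout. First, no local reduction step modifies the global state $\stg$, so together with $\stable(\R,I)$ the assertion $I(\stg)$ is preserved by every step of an execution that begins in a state satisfying it; we therefore re-establish $I(\stg)$ freely wherever an induction hypothesis demands it in its precondition. Second, from $\stable(\R,\Fx)$ and the fact that every transformer returned by $\stabilize{\cdot}$ is stable by construction, the post-condition $\stl = \Fx(\stg)\cup\F(\stg)$ is itself stable under $\R$, which is precisely what the RG rules require of post-conditions.

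For the base cases — $\inserte{x}$, $\updatee{\lambda x.e_1}{\lambda x.e_2}$, and $\deletee{\lambda x.e}$ — the body of the corresponding state-transformer rule is, modulo the $\stabilize{\cdot}$ wrapper, a set-theoretic rendering of exactly the set of records that the matching operational rule (\rulelabel{E-Insert}, \rulelabel{E-Update}, \rulelabel{E-Delete}) adds to $\stl$. The matching RG rule (\rulelabel{RG-Insert}, etc.) establishes the un-weakened judgment $\R \vdash \hoare{\lambda(\stl,\stg).~\stl = \Fx(\stg)\conj I(\stg)}{c}{\lambda(\stl,\stg).~\stl = \Fx(\stg)\cup\F_0(\stg)}$, where $\F_0$ is the transformer before stabilization. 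If $\F_0$ is already stable then $\stabilize{\inctxt{\Fx}{\F_0}}=\F_0$ and we are done. Otherwise $\stabilize{\cdot}$ replaces $\F_0(\stg)$ by $\existsl(\stg',I(\stg'),\F_0(\stg'))$, and weakening the post-condition to this form is sound: the global state against which the operation actually reduced is $\R^{*}$-reachable from the (invariant-satisfying) initial state, hence satisfies $I$ by $\stable(\R,I)$, and thus serves as the existential witness $\stg'$. This weakening is an instance of \rulelabel{RG-Conseq} (only the post is weakened), whose remaining side conditions follow from the two observations above.

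The cases for $\lete{x}{e}{c}$, $\lete{y}{\selecte{\lambda x.e}}{c}$, $\ite{e}{c_1}{c_2}$, and $c_1;c_2$ are routine compositions of the induction hypotheses using the standard (elided) RG rules for binders, conditionals, and sequencing, the $\stabilize{\cdot}$ on the selected set in the \C{SELECT}-binder being handled exactly as in the base cases. In the sequencing case the hypothesis for $c_2$ is applied with context $\Fx\cup\F_1$ (stable whenever $\Fx$ is), and since every assertion is phrased relative to the \emph{final} global state, the resulting post-condition $\stl = \Fx(\stg)\cup\F_1(\stg)\cup\F_2(\stg)$ is literally $\stl = (\Fx\cup(\F_1\cup\F_2))(\stg)$, as required; $I(\stg)$ is re-threaded at the midpoint using the first observation.

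The main obstacle is $\foreache{x}{\lambda y.\lambda z.c}$. I would apply \rulelabel{RG-Foreach} with the loop invariant $\psi$ given by $\psi(\stl,\stg)\Leftrightarrow \stl = \Fx(\stg)\cup\big(\bigcup_{z\in y}\F(\stg)\big)\conj I(\stg)$ (with $y$ occurring free); the boundary conditions $P\Rightarrow[\emptyset/y]\psi$ and $[x/y]\psi\Rightarrow Q$ then hold by the set-monad identity $x\bind(\lambda z.~\F(\stg)) = \bigcup_{z\in x}\F(\stg)$ together with $\bigcup_{z\in\emptyset}\F(\stg)=\emptyset$. The crux is the per-iteration obligation $\R\vdash\hoare{\psi\conj z\in x}{c}{[y\cup\{z\}/y]\psi}$: the induction hypothesis for the body supplies only a judgment whose precondition forces $\stl = \Fx(\stg)$, whereas here $\stl$ already contains the records accumulated by prior iterations. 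Bridging this gap requires a \emph{frame} property for transaction-local judgments — enlarging $\stl$ by a set of fresh-id records neither changes nor invalidates the body's effect — and this is exactly where the \emph{local context independence} restriction of \S\ref{sec:opsem} is indispensable: the premises $\dom(\stl)\cap\dom(s)=\emptyset$ on \rulelabel{E-Update} and \rulelabel{E-Delete} and the id-freshness check on \rulelabel{E-Insert} ensure the body can neither read nor overwrite the accumulated records, so under a fixed $\stg$ its effect is independent of the iteration history. I would isolate this as a lemma (proved by a direct induction on the body's reduction, exploiting that the inferred body transformer already depends only on $\stg$) and then discharge the \C{FOREACH} case by an inner induction on the number of elements remaining in the run-time \C{FOREACH} form, each step of which extends the local state by precisely $\F(\stg)$; stability of $\psi$ under $\R$ follows, as before, from $\stable(\R,I)$, $\stable(\R,\Fx)$, and the stability of the $\stabilize{\cdot}$-produced transformers.
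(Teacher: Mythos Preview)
Your outline is correct and hits the same milestones as the paper's proof sketch: structural induction on $c$, base cases via the matching RG rules, compositional cases routine, and \C{FOREACH} as the real obstacle with the loop invariant $\psi(\stl,\stg)\Leftrightarrow \stl = \Fx(\stg)\cup(y\bind\lambda z.\F(\stg))$ (plus $I$). You also diagnose the obstacle precisely: the induction hypothesis for the body fixes the precondition to $\stl=\Fx(\stg)$, whereas at the start of iteration $k$ the local state already contains the accumulated $\bigcup_{j<k}\F(\stg)$.

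Where you and the paper diverge is in how this obstacle is discharged. The paper does not prove the stated theorem directly by induction; instead it proves a strictly stronger statement (Theorem~\ref{thm:inference-sound-strong}) that carries an extra universally quantified ``frame'' set $s$ through the induction:
\[
\R \vdash \hoare{\lambda(\stl,\stg).~\stl = s\cup\Fx(\stg)\conj I(\stg)}{c}{\lambda(\stl,\stg).~\stl = s\cup\Fx(\stg)\cup\F(\stg)}.
\]
The stated theorem then drops out by instantiating $s=\emptyset$. In the \C{FOREACH} case the per-iteration obligation is met simply by re-instantiating the (now stronger) IH with $s$ replaced by $s\cup(y\bind\lambda z.\F(\stg))$, so no separate frame lemma is needed. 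This is the classic ``generalize the induction hypothesis'' manoeuvre, and it keeps the whole argument uniformly at the level of RG derivability.

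Your alternative---leave the statement as is, and bridge the gap with a semantic frame lemma proved by induction on the body's reduction---would also work, since local context independence does guarantee that the body's effect on $\stl$ is insensitive to extra records already present there. But it is heavier: it mixes proof-rule reasoning with a direct operational argument, and it forces you to be careful about the dependence of $\R$ on $\stl$ (since $\R(\stl,\stg,\stg')$ unfolds to $R(\stg,\stg')\wedge\I(\stl,\stg,\stg')$, enlarging $\stl$ can in principle change which environment steps are admissible). The paper's $s$-strengthening sidesteps all of that in one stroke.
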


\paragraph{{\sc Proof Sketch.}} 
% We prove a slightly stronger statement :
% For all  $i$,$R$,$I$,$c$,$\F$,$s$ if $\stable(\R,I)$ and $c \elabsto \F$, then:
%   \begin{smathpar}
%     \begin{array}{c}
%     \R \vdash \hoare{\lambda(\stl,\stg).~\stl=s \conj
%     I(\stg)}{c}{\lambda(\stl,\stg).\stl = \F(s,\stg)}
%     \end{array}
%   \end{smathpar}
The proof follows by structural induction on $c$. Let $P
=\lambda(\stl,\stg).~\stl=\Fx(\stg) \conj I(\stg)$ and $Q
=\lambda(\stl,\stg).\stl = \Fx(\stg) \cup \F(\stg)$. The base cases
correspond to \C{INSERT}, \C{UPDATE} and \C{DELETE} statements, where
the proof is straightforward. The proofs for \C{SELECT}, sequencing,
and conditionals use the inductive hypothesis to infer the
RG-judgments present in the premises of their corresponding RG-rules.
The interesting case is the \C{FOREACH} statement, for which we use
the loop invariant $\psi(\stl, \stg) \Leftrightarrow \stl = \Fx(\stg)
\cup (y \bind (\lambda z.~\F(\stg)))$, (where assuming that $c$ is the
body of the loop, $c \elabsto \F$) to prove all the premises of
\rulelabel{RG-Foreach}. Using the same notation as the rule
$\rulelabel{RG-Foreach}$, $y$ refers to the records already processed
in previous iterations of the loop, while $z$ refers to the record
being processed in the current iteration.  At the beginning of the
loop $[\phi/y]\psi(\stl, \stg)$ just reduces to $\stl = \Fx(\stg)$
which is implied by the pre-condition $P$. From the inductive
hypothesis, we can infer that each iteration corresponds to the
application of $\F$. Since all iterations are assumed to be
independent of each other, and $z$ is bound to a record in $x$ for
each iteration, we conclude that at the end of every iteration, the
loop invariant $[y \cup \{z\}/y]\psi$ will be satisfied.

\subsection{From $\SL$ to the first-order logic}

\begin{figure}
\begin{smathpar}
\begin{array}{lclc@{\hspace*{-30pt}}r}
  \mssemof{\nubar}{\stl \ALT \stg \ALT \ldots} & = &  
  (\top, \lambda (\vbar,r).\,r\in\stl) \ALT 
  (\top, \lambda (\vbar,r).\,r\in\stg) \ALT \ldots & \texttt{  }
  & |\vbar|=|\nubar|\\
%\fresh(g_x) \spc \fresh(g_\stl) \spc \fresh(g_\stg)\\
%
  \mssemof{\nubar}{\{x \,|\, \varphi\}} & = & (\top, 
    \lambda (\vbar,r).\,[r/x]\varphi) & \texttt{  }
  & |\vbar|=|\nubar|\\
\mssemof{\nubar}{\itel{\varphi}{s_1}{s_2}} & = & (\phi_1 \wedge \phi_2,~
    \lambda (\vbar,r).\, \C{if}\;\varphi\;\C{then}\; \G_1(\vbar,r) 
  & \texttt{  }& (\phi_1,\G_1)=\mssemof{\nubar}{s_1}\\
& & \hspace*{1in}\;\C{else}\; \G_2(\vbar,r) & \texttt{  } 
  & (\phi_2,\G_2)=\mssemof{\nubar}{s_2} \\
  \mssemof{\nubar}{s_1 \cup s_2} & = & (\phi_1 \wedge \phi_2,~ &
    \texttt {  } & (\phi_1,\G_1)=\mssemof{\nubar}{s_1} \\
& & \hspace*{2em}\lambda (\vbar,r).\, \G_1(\vbar,r) \vee \G_2(\vbar,r)) & \texttt{ } 
  & (\phi_2,\G_2)=\mssemof{\nubar}{s_2} \\
\mssemof{\nubar}{s_1 \bind \lambda x. s_2} & = & (\phi_1 \wedge \phi_2
  \wedge \forall \nubar.\forall a.\forall b. ~\pi_1(\nubar)
  \Leftrightarrow & \texttt{  } &  
  \fresh(\pi_1) \spc \fresh(\pi_2) \spc \fresh(g)\\
& & \hspace*{0.65in}\G_1(\nubar,a) \wedge \G_2(\nubar,a,b)
    \Rightarrow g(\nubar,b) & \texttt{  }
  & (\phi_1,\G_1)=\mssemof{\nubar}{s_1} \\
& & \hspace*{0.5in}\wedge \forall \nubar.\forall b.\exists a.~ 
    \pi_2(\nubar) \Leftrightarrow  & \texttt{  } 
  & (\phi_2,\G_2)=\mssemof{\nubar,a}{[a/x]s_2}\\
& & \hspace*{0.65in}g(\nubar,b) \Rightarrow 
    \G_1(\nubar,a) \wedge \G_2(\nubar,a,b), & \texttt{  } 
  & \fresh(a) \spc \fresh(b) \spc\\
& & ~\lambda (\vbar,r).\,\pi_1(\vbar) \conj \pi_2(\vbar) \conj g(\vbar,r))
  & \texttt{  } & |\vbar| = |\nubar|\\
\mssemof{\nubar}{\existsl(\stg,\phi,s)} & = & (\phi_s \wedge 
  \forall \nubar.\forall a.\forall b.~ f(\nubar,a) \wedge 
        f(\nubar,b) \Rightarrow a = b& \texttt{  }
  & \fresh(a) \spc \fresh(b)\\
& &\hspace*{0.2in} \wedge \forall \nubar.\exists a.~ f(\nubar,a) & \texttt{  }
  & \fresh(f) \\
& &\hspace*{0.2in}\wedge \forall \nubar.\forall a.\forall b.~ \pi(\nubar) 
  \Leftrightarrow f(\nubar,a) \wedge [a/\stg]\phi & \texttt{  } 
  & \fresh(\pi)\spc\fresh(g)\\
  & & \hspace*{1.25in}\wedge
  g(\nubar,b)=\G_s(\nubar,b), & \texttt{  } 
  & (\phi_s,\G_s)=\mssemof{\nubar}{[a/\stg]s} \\
& & ~\lambda (\vbar,r).\,\pi(\vbar) \conj g(\vbar,r)) & \texttt{  }
% \conj \forall b.~ g(b) \Leftrightarrow g_1(b)
  & |\vbar| = |\nubar|\\
%
  %\end{array}
\end{array}
\end{smathpar}

\caption{Encoding $\SL$ in first-order logic}
\label{fig:logic}
\end{figure}

Theorem~\ref{thm:inference-sound} lets us replace the local judgment
of the \rulelabel{RG-Txn} rule (Fig.~\ref{fig:rg-rules}) by a state
transformer inference judgment. The soundness of a transaction's
guarantee can now be established w.r.t the effect $\F$ of the body.
The \rulelabel{RG-Txn} rule so updated is shown below ($\Fempty =
\lambda(\stg).~\emptyset$ denotes an empty context):
\begin{smathpar}
\begin{array}{c}
\RULE
{
   \stable(R,\I)\spc
   \stable(R,I)\spc
   \R_e = R \backslash \I_e \spc 
   \R_c = R \backslash \I_c \spc
   \Fempty \vdash c \Longrightarrow_{\langle i,\R_e,I \rangle}\F \\
   \stable(\R_c,\inctxt{\Fempty}{\F}) \spc
   \forall \stg.~ G(\stg,\F(\stg)) \spc
   \forall \stg,\stg'.~I(\stg) \wedge G(\stg,\stg') \Rightarrow I(\stg')\\
}
{
  \rg{I,R}{\ctxn{i}{\I}{c}}{G,I}
}
\end{array}
\end{smathpar}
% \begin{smathpar}
% \begin{array}{c}
% \RULE
% {
%   \stable(R,\I)\spc
%   \stable(R,I)\spc
%   P(\stl,\stg) \Leftrightarrow \stl=\emptyset \wedge I(\stg)\\
%   \R_e = R \backslash \I_e \spc \R_c = R \backslash \I_c \spc 
%    \R_e \vdash \rg{P}{c}{Q} \spc \stable(\R_c,Q) \\ 
%   \forall \stl,\stg.~ Q(\stl,\stg) \Rightarrow 
%     G(\stg, \stl \rhd \stg)\spc
%   \forall \stg,\stg'.~I(\stg) \wedge G(\stg,\stg') \Rightarrow I(\stg')\\
% }
% {
%   \rg{I,R}{\ctxn{i}{\I}{c}}{G,I}
% }
% \end{array}
% \end{smathpar}
Automating the application of the \rulelabel{RG-Txn} rule for a
transaction requires automating the multiple implication checks in
the premise. While $R$, $G$, $\I$ and $I$ are formulas in
first-order logic (FOL) with a relatively simple structure, $\F$
is an expression in the set language $\SL$
(Fig.~\ref{fig:logic-syntax}) with a possibly complex structure.
Fortunately, however, there exists a semantics-preserving translation
from $\SL$ to a restricted subset of first-order logic (FOL) that
lends itself to automatic reasoning. 

The algorithm ($\mssemof{\cdot}{\cdot}$) shown in Fig.~\ref{fig:logic}
translates an $\SL$ expression ($s$) to FOL. The translation is based
on encoding a set of element type $T$ as a unary predicate on $T$.
The predicate is represented as a meta function that accepts an $x:T$
and returns a quantifier-free proposition that evaluates to true
($\top$) if and only if $x$ is present in the set. Alternatively,
the translation may also encode the set as a predicate in the logic
itself, in which case a quantified proposition constraining the
predicate is also generated. For instance, consider the set $\{1,2\}$.
The predicate describing the set can be encoded as the function
$\lambda \v. \v=1 \vee \v=2$, with no further constraints, or it can
be encoded as the function $\lambda \v. g(\v)$ with an associated
constraint, $\phi\in\Prop^1 = \forall \nu.~g(\nu) \Leftrightarrow
\nu=1 \vee \nu=2$, defining the uninterpreted predicate $g$.
The translation adopts one or the other approach, depending on the
need. For uniformity, we consider the encoding of a set as pair
($\phi$,$\G$), where $\G$ is a meta function, and $\phi$ is a FOL
formula constraining any uninterpreted predicates used in $G$.

Due to the presence of bind ($\bind$) in $\SL$, a set expression $s$
may contain free variables introduced by an enclosing binder. For
instance, consider the $\SL$ expression $s_1\bind(\lambda x. \{ y\,|\,
y=x+1\})$, where $s_1$ is an integer set (expression). The
subexpression $\{ y\,|\, y=x+1\}$ (call it $s_2$) contains $x$ as a
free variable. In such cases, the predicate associated with the
subexpression should also be indexed by its free variables so that a
unique set exists for each instantiation of the free variables. Thus,
the predicate ($\G$) associated with the subexpression from the
above example should be $\lambda\v_1.\lambda\v_2.~\v_2 = \v_1 + 1$, so
that the set $\G\; x_1$ is different from the set $\G\; x_2$ for
distinct $x_1,x_2 \in s_1$. Intuitively, the bind expression
$s_1\bind(\lambda x. \{ y\,|\, y=x+1\})$ denotes the set
$\bigcup\limits_{x\in s_1}\G\;x$.

The translation algorithm (Fig.~\ref{fig:logic}) takes free variables
into account. Given a set expression $s\in\SL$, whose (possible) free
variables are $\nubar$ in the order of their introduction (top-most
binder first), $\mssemof{\nubar}{s}$ returns the encoding of $s$ as
$(\phi,\G)$.  The meta-function $\G$ is a predicate indexed by the
(possible) free variables of $s$, and thus its arity is $|\nubar| + 1$.
Note that $\nubar$ is only a sequence of variables introduced by the
enclosing binders of $s$, and not all may actually occur free in $s$.
Nonetheless, its predicate $\G$ is always indexed by $|\nubar|$ free
variables for uniformity. The translation encodes database state as an
uninterpreted sort. Considering that the state is actually a set of
records, we define an uninterpreted relation ``$\in$'' to relate records
and states. Thus, a variable set expression $\stg$ denoting a database
state is encoded as the predicate $\lambda (\vbar,r).~r\in\stg$, where
$|\vbar|=|\nubar|$ (predicates are uncurried for simplicity; $\vbar$
is a comma-separated sequence; $r\not\in\SL$ is a special variable).
The constraints associated with the encoding of a state are trivial
(denoted $\top$). The set literal expression $\{x\,|\, \varphi\}$ is encoded
straightforwardly. The conditional set expression is encoded as an
if-then-else predicate in FOL, where the predicates on true and false
branches are computed from the set subexpressions $s_1$ and $s_2$,
respectively. The conjunction of constraints $\phi_1$ and $\phi_2$,
from $\mssemof{\nubar}{s_1}$ and $\mssemof{\nubar}{s_2}$ (resp.), is
propagated upwards as the constraint of the conditional expression.
A set union expression is encoded similarly.

The first-order encoding of a bind expression describes the semantics
of the set monad's bind operator in FOL. Let $s_1$ be a set, and let
$f$ be a function that maps each variable in $s_1$ to a new set. Then,
$s_2 = s_1 \bind f$ if and only if for all $y\in s_2$, there exists an
$x \in s_1$ such that $y = f(x)$, and forall $x\in s_1$, $f(x)\in
s_2$. The encoding essentially adds new constraints to this effect.
The translation first encodes $s_1$ and $s_2$ to obtain
$(\phi_1,\G_1)$ and $(\phi_2,\G_2)$, respectively. Since $s_2$ is
under a new binder that binds $x$, the free variable sequence of $s_2$
is $\nubar,x$. In the interest of hygiene, we substitute a fresh $a$
for $x$, making the sequence $\nubar,a$. The set $s$ is encoded as a
new uninterpreted predicate $g$ indexed by $s$'s free variables
($\nubar$). Since the set denoted by $g$ is the result of the bind
$s_1 \bind \lambda x.s_2$, first-order constraints defining the bind
operation (as described above) are generated. The constraints relate
the predicates $\G_1$ and $\G_2$, representing $s_1$ and $s_2$
(resp.), to the uninterpreted predicate $g$ that represents $s$. The
constraints are assigned names ($\pi_1$ and $\pi_2$) to give them an
easy handle.

The first-order encoding of the $\existsl(\stg,\phi,s)$ expression
essentially Skolemizes the existential. Skolemizing is the
process of substituting an existentially bound $x$ in
$\phi_x\in\Prop^1$ with $f(\nubar)$, where $f$ is a fresh
uninterpreted function (called the Skolem function), and $\nubar$ are
the free variables in $\phi_x$ bound by enclosing universal
quantifiers. Due to the
decidability restrictions (Sec.~\ref{sec:decidability}), the only
uninterpreted functions we admit in our logic are boolean (i.e.,
predicates/relations). Consequently, we cannot define the Skolem
function $f$ directly. Instead, we define it via an uninterpreted
relation, by explicitly asserting the function property:
\begin{smathpar}
  \begin{array}{c}
    (\forall \nubar.\forall a.\forall b.~f(\nubar,a) \wedge f(\nubar,b)
    \Rightarrow a = b)
    ~~\conj~~ (\forall \nubar.\exists a. f(\nubar,a))
  \end{array}
\end{smathpar}
We then replace the existentially bound $\stg$ with a new universally
bound $a$ in $\phi$ and $s$, such that $f(\nubar,a) $ holds, before
encoding the existentially bound $s$.

\noindent {\bf Example} Let us reconsider the  TPC-C \C{new\_order}
transaction from Sec.~\ref{sec:motivation}. Recall that the state
transformer ($\T$) for the \C{foreach} loop shown in
Fig.~\ref{fig:foreach_code} is (\C{k1}, \C{k2}, and \C{k3} are
constants):
\begin{smathpar}
\begin{array}{l}
  \lambda(\stl,\stg).~ \stl \cup \C{item\_reqs}\bind
      (\lambda\C{item\_req}.~ \F_U(\stg) \cup \F_I(\stg))
\end{array}
\end{smathpar}
where:
\begin{smathpar}
  \begin{array}{lcl}
    \F_U & = & \lambda(\stg).~ \stg \bind(\lambda s. 
                \itel{{\sf table}(s) = \C{Stock} \conj 
                  s.\C{s\_i\_id} = \C{item\_req.ol\_i\_id}\\
         &   & \hspace*{0.8in}}
                     {\{ \langle \C{s\_i\_id}=s.\C{s\_i\_id};\, 
                                 \C{s\_d\_id}=s.\C{s\_d\_id};\,
                                 \C{s\_qty} = \C{k1}\rangle \}\\
         &   & \hspace*{0.8in}} {\emptyset}) \\
    \F_I & = & \lambda(\stg).~ \{\langle\C{ol\_o\_id}=\C{k2};\,
                 \C{ol\_d\_id}=\C{k3};\,
                 \C{ol\_i\_id}=\C{item\_req.ol\_i\_id};\,\\
         &   & \hspace*{2.3in}\C{ol\_qty}=\C{item\_req.ol\_qty}\rangle\}\\
  \end{array}
\end{smathpar}
For any $\stg$, $\F_U(\stg)$ and $\F_I(\stg)$ are expressions in
$\SL$, so can be translated to FOL by the encoding algorithm in
Fig.~\ref{fig:logic}. Since the iteration variable \C{item\_req}
occurs free in these expressions, the appropriate application of the
encoding algorithm is $\mssemof{\C{item\_req}}{\F_U(\stg)}$ and
$\mssemof{\C{item\_req}}{\F_I(\stg)}$, which results in
$(\phi_U,\G_U)$ and $(\phi_I,\G_I)$, respectively, where $\phi_U$,
$\phi_I$, $\G_U$, $\G_I$ are as shown below:
\begin{smathpar}
  \begin{array}{lcl}
    \phi_U & = & \hspace*{0.1in}\forall \C{item\_req}.\forall s. \forall s'.~ 
        \pi_1(\C{item\_req}) \Leftrightarrow \\
    & & \hspace*{0.3in}(s \in \stg) \conj 
        (\itec{{\sf table}(s)=\C{Stock} \wedge 
              s.\C{s\_i\_id} = \C{item\_req.ol\_i\_id}\\
    & & \hspace*{0.85in}}
              {s' = \langle \C{s\_i\_id}=s.\C{s\_i\_id};\, 
                            \C{s\_d\_id}=s.\C{s\_d\_id};\,
                            \C{s\_qty} = \C{k1}\rangle\\
    & & \hspace*{0.85in}} {\bot}) \Rightarrow g_0(\C{item\_req},s')\\
    & & \hspace*{-0.05in}\conj \forall \C{item\_req}.\forall s'. \exists s.~ 
        \pi_2(\C{item\_req}) \Leftrightarrow \\
    & & \hspace*{0.15in} g_0(\C{item\_req},s') \Rightarrow s \in \stg \conj
        \itec{{\sf table}(s)=\C{Stock} \wedge 
              s.\C{s\_i\_id} = \C{item\_req.ol\_i\_id}\\
    & & \hspace*{1.5in}}
              {s' = \langle \C{s\_i\_id}=s.\C{s\_i\_id};\, 
                            \C{s\_d\_id}=s.\C{s\_d\_id};\,
                            \C{s\_qty} = \C{k1}\rangle\\
    & & \hspace*{1.5in}} {\bot} \\
    \G_U & = & \lambda (\C{item\_req},r).~ \pi_1(\C{item\_req}) \wedge 
        \pi_2(\C{item\_req}) \wedge g_0(\C{item\_req},r)\\
        \phi_I & = & \top \\ 
    \G_I & = & \lambda (\C{item\_req},r).~ 
        r = \langle \C{ol\_o\_id}=\C{k2};\,
                    \C{ol\_d\_id}=\C{k3};\, 
                    \C{ol\_i\_id}=\C{item\_req.ol\_i\_id};\,\\
    & & \hspace*{3in}\C{ol\_qty}=\C{item\_req.ol\_qty} \rangle\\
  \end{array}
\end{smathpar}
Since the transformer ($\T$) of the \C{foreach} loop is not nested
does not contain any free iteration variables, the appropriate
application of the encoding algorithm  is
$\mssemof{\emptyset}{\T(\stl,\stg)}$, which results in the
$(\phi_I \wedge \phi_U \wedge \phi_1 \wedge \phi_2,\G)$, where
$\phi_1$, $\phi_2$, and $\G$ are as defined below:
\begin{smathpar}
  \begin{array}{lcl}
    \phi_1 & = & \forall \C{item\_req}. \forall s.~ 
        \pi_3 \Leftrightarrow \C{item\_req} \in \C{item\_reqs} \wedge
        \G_U(\C{item\_req},s') \vee \G_I(\C{item\_req},s')
        \Rightarrow g_1(s)\\
    \phi_2 & = & \forall s.\exists \C{item\_req}. ~ 
        \pi_4 \Leftrightarrow g_1(s) \Rightarrow \C{item\_req} \in \C{item\_reqs} 
        \wedge \G_U(\C{item\_req},s') \vee \G_I(\C{item\_req},s')\\
    \G & = & \lambda(r).~\pi_3 \wedge \pi_4 \wedge g_1(r)\\
  \end{array}
\end{smathpar}

\subsection{Decidability}
\label{sec:decidability} %this label is referenced earlier.

Observe that the encoding shown in Fig.~\ref{fig:logic} maps to
a fragment of FOL that satisfies the following syntactic properties:
\begin{itemize}
  \item All function symbols, modulo those that are drawn from
    $\Prop^0$ and $\Prop^1$, are uninterpreted and boolean.
  \item All quantification is first-order; second-order objects, such
    as sets and functions, are never quantified.
  \item Quantifiers appear only at the prenex position, i.e., at the
    beginning of a quantified formula.
\end{itemize}
The simple syntactic structure of the fragment already makes is
amenable for automatic reasoning via an off-the-shelf SMT solver, such
as Z3. The decidability of this fragment, however, is more subtle and
discussed below.

Consider a set expression $s$ with no free variables (i.e., $\nubar =
\emptyset$, like $\T(\stl,\stg)$ from the above example). Let
$(\phi,G) = \mssemof{\emptyset}{s}$. Note that $\phi$ is a conjunction
of (a).  $\phi_i$'s, where each $\phi_i$ results from encoding a
subexpression $s_i$ of $s$, and (b). a $\phi_s$, resulting from
encoding $s$ itself (i.e., its top-level expression). From Fig.~\ref{fig:logic},
it is clear that $\phi_s$ is either $\top$ (for the first four cases),
or it is a prenex-quantified formula, where quantification is either
$\forall^2$, or $\exists$, or $\forall\exists$. Generalizing this
observation, for a set expression $s$ with $|\nubar|$ free variables,
$\phi_s$, if quantified, is a prenex-quantified formula, where
quantification assumes one among the forms of $\forall^{|\nubar|+2}$,
or $\forall^{|\nubar|}\exists$, or $\forall^{|\nubar|+1}\exists$.  In
other words, the number of $\forall$ quantifiers preceding an
$\exists$ quantifier is utmost one more than the number of free
variables ($\nubar$) in $s$. For the convenience of this discussion,
let us call $\forall^{|\nubar|+1}\exists$ as the prenex signature of
$\phi_s$. 

Next, in Fig.~\ref{fig:logic}, observe that the (ordered) set $\nubar$
is extended only in the encoding rule for $\bind$. Since an occurrence
of $\bind$ adds a quantifier to $|\nubar|$, if $s$ is a bind
expression nested inside a top-level bind expression (like
$\F_U(\stg)$ from the above example), then the prenex signature
of $\phi_s$ is $\forall^2\exists$.  Furthermore, if the subexpressions
of $s$ are neither {bind} nor $\existsl$ expressions, then none of the
$\phi_i$'s are quantified, and the prenex signature of $\phi = \bigwedge_i\phi_i \wedge \phi_s$
remains $\forall^2\exists$. A similar
observation holds when $s$ is an $\existsl$ expression nested inside a
top-level {bind} expression.  Since $\existsl$ is generated as a
result of stabilizing a SQL command transformer, which is always a
non-nested bind expression, the subexpression ($s'$) of $\existsl$ is
a non-nested bind expression. $s'$ is however nested inside a
top-level bind expression, hence its prenex signature is
$\forall^2\exists$.  Since $\existsl$ does not extend $\nubar$, the
prenex signature of $s$ remains $\forall^2\exists$. When $s$ is an
expression other than $\bind$ or $\existsl$, then $\phi_s$ is not a
quantified formula, and its prenex signature is trivially subsumed by
$\forall^2\exists$. Thus, for the subset of $\SL$, where bind
expressions are restricted to one level of nesting, the FOL formulas
generated by the encoding have the prenex signature as
$\forall^2\exists$.

The fragment of FOL that admits formulas with prenex signatures of the
form $\forall^2\exists^*$ is called the G\"odel-K\'almar-Sch\"utte ({\sf
GKS}) fragment~\cite{gks}, which is known to be decidable. The
language of encoding, however, is a combination of {\sf GKS} with (a).
$\Prop^0$, the theory from which quantifier-free propositions
($\varphi$) that encode object language expressions are drawn, and
(b). $\Prop^1$, the theory from which invariants ($I$) are drawn. Thus,
the encoding of the subset of $\SL$ described above is decidable if the
combination of ${\sf GKS} + \Prop^0 + \Prop^1$ is decidable. We write
$\SL[\Prop^0,\Prop^1]$ to highlight the parameterization of $\SL$ on
$\Prop^0$ and $\Prop^1$.  The discussion in the previous paragraph
points to the existence of non-trivial subsets in $\SL[\Prop^0,\Prop^1]$ that are
decidable:
\begin{theorem}
  There exist $\SL'[\Prop^0,\Prop^1] \subset \SL[\Prop^0,\Prop^1]$
  such that $\SL'$ is decidable if ${\sf GKS}+\Prop^0+\Prop^1$ is
  decidable.
\end{theorem}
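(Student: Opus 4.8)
\emph{Proof proposal.} The plan is to formalize the quantifier-counting argument sketched in this subsection as a structural induction over $\SL$ expressions, and then read the theorem off of it. Concretely, I would take $\SL'[\Prop^0,\Prop^1]$ to be the subset of $\SL[\Prop^0,\Prop^1]$ in which $\bind$ subexpressions are nested at most one level deep — equivalently, at every subexpression the ordered list $\nubar$ of variables introduced by enclosing binders satisfies $|\nubar|\le 1$ — and in which every $\existsl$ subexpression has as its body a non-nested $\bind$ expression, matching exactly the shape produced by $\stabilize{\cdot}$ when it weakens the transformer of a single SQL command. This is a subset of $\SL$ in the sense demanded by the statement, and, crucially, it is the fragment into which the rules of Fig.~\ref{fig:inference-rules} map the transformers $\F$ of transaction bodies whose \C{FOREACH} loops are not themselves nested — i.e., essentially all transactions that arise in practice.

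I would then prove, by induction on the structure of $s\in\SL'$, that for $|\nubar|\le 1$ the encoding $\mssemof{\nubar}{s}=(\phi,\G)$ of Fig.~\ref{fig:logic} produces a constraint $\phi$ that is logically equivalent to a prenex sentence whose quantifier prefix lies in $\forall^2\exists^{*}$ and whose matrix is a quantifier-free formula over equality, uninterpreted boolean (relation) symbols, atoms of $\Prop^0$ (contributed by set literals $\{x\mid\varphi\}$), and atoms of $\Prop^1$ (the invariant $I$, which enters only via the $\existsl$ clause through $\stabilize{\cdot}$). The variable, conditional, and union clauses add no quantifiers and merely conjoin the constraints of subexpressions; the $\bind$ clause extends $\nubar$ by one variable and introduces a bounded amount of quantification, so it stays within $\forall^2\exists^{*}$ exactly because the one-level restriction keeps $|\nubar|=0$ at each $\bind$; and the $\existsl$ clause Skolemizes, contributing its witness together with universally-quantified defining axioms for the Skolem relation, which — reusing the variables carried in $\nubar$ — again stays inside $\forall^2\exists^{*}$. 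The step that needs care is the running conjunction $\bigwedge_i\phi_i\wedge\phi_s$ the rules accumulate: the observation that makes it go through is that a conjunction of $\forall^2\exists^{*}$ sentences is (equivalent to) a $\forall^2\exists^{*}$ sentence — rename so the at-most-two leading universals are shared, then pull all existentials inward — so the accumulated constraint never grows an alternation such as $\forall^2\exists^{*}\forall^2\exists^{*}$ lying outside the ${\sf GKS}$ prefix class.

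With the invariant established, the theorem follows. Each proof obligation discharged by the inference-updated \rulelabel{RG-Txn} rule is a Boolean combination of, and implication among, the (simple first-order) formulas $R$, $G$, $\I$, $I$ together with encodings $\mssemof{\emptyset}{\cdot}$ of transformers drawn from $\SL'$; by the invariant all of these are sentences over the $\forall^2\exists^{*}$ prefix class with matrices in the combined background theory $\Prop^0+\Prop^1$, i.e., sentences of ${\sf GKS}+\Prop^0+\Prop^1$. Hence if ${\sf GKS}+\Prop^0+\Prop^1$ is decidable then the validity of every such obligation is decidable, which is precisely what it means for $\SL'$ to be decidable.

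The main obstacle I anticipate is the bookkeeping of the quantifier prefix through the composition rules — in particular pinning down the definition of $\SL'$ tightly enough that nested $\bind$ and the Skolemization performed for $\existsl$ never push the prenex signature past $\forall^2\exists^{*}$ (this is where the ``one more $\forall$ than $|\nubar|$'' slack noted in the discussion is consumed), while still keeping $\SL'$ broad enough to carry the transformers of realistic transactions (non-nested loops, SQL statements, \C{LET}, conditionals, sequencing, and one layer of stabilization). Once that definition is fixed, the structural induction itself and the appeal to the decidability hypothesis are routine.
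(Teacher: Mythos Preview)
Your proposal is correct and follows essentially the same route as the paper: the paper does not give a separate proof of this theorem but treats the preceding discussion as the argument, exhibiting the one-level-nested-$\bind$ fragment (later named $\SL^1$), tracking how the encoding rules of Fig.~\ref{fig:logic} bound the prenex prefix by $\forall^{|\nubar|+1}\exists$, and observing that $\existsl$ only ever wraps the non-nested $\bind$ produced by stabilizing a SQL command. Your structural-induction packaging and the explicit handling of the accumulated conjunction $\bigwedge_i\phi_i\wedge\phi_s$ make the argument a bit more precise than the paper's prose, but the choice of $\SL'$, the quantifier-counting invariant, and the final appeal to decidability of ${\sf GKS}+\Prop^0+\Prop^1$ are the same.
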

One interesting example of such an $\SL'$ is the subset described above: $\SL$ with
bind expressions confined to one level of nesting. We denote this
subset as $\SL^1[\Prop^0,\Prop^1]$, for which we assert decidability:
\begin{corollary}
  $\SL^1[\Prop^0,\Prop^1]$ is decidable if ${\sf GKS}+\Prop^0+\Prop^1$ is
  decidable.
\end{corollary}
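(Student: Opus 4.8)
}
The corollary is exactly the preceding theorem instantiated with the explicit witness $\SL' := \SL^1$, so the plan is to verify the theorem's hypothesis for this choice: namely, that the (semantics-preserving) translation $\mssemof{\cdot}{\cdot}$ of Fig.~\ref{fig:logic} maps every $\SL^1[\Prop^0,\Prop^1]$ expression into the syntactic fragment ${\sf GKS}+\Prop^0+\Prop^1$, so that every decision problem needed to discharge the premises of the (state-transformer) \rulelabel{RG-Txn} rule over $\SL^1$ reduces to a decision problem in that fragment. Since $\mssemof{\cdot}{\cdot}$ is semantics preserving --- set equivalences in $\SL$ become logical equivalences in FOL --- this reduction is faithful, and decidability of ${\sf GKS}+\Prop^0+\Prop^1$ then gives a decision procedure for $\SL^1$, proving the corollary.

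The core is a structural induction on $\SL$ expressions that tracks the ordered free-variable context $\nubar$ with which each subexpression is encoded. First I would make ``bind expressions confined to one level of nesting'' precise: a $\bind$ subexpression has nesting depth $d$ if exactly $d$ other $\bind$'s strictly contain it (not counting the non-nested $\bind$ that sits inside the body of an $\existsl$ --- which by construction is always the stabilized transformer of a SQL command produced by $\stabilize{\cdot}$), and $\SL^1$ consists of those $s$ all of whose $\bind$'s have depth $\le 1$. Inspecting Fig.~\ref{fig:logic}, the context $\nubar$ is extended only by the $\bind$ rule, so the invariant to prove by induction is: every subexpression that is a $\bind$ or an $\existsl$ --- the only two shapes whose emitted $\phi_s$ is nontrivially quantified --- is encoded with $|\nubar|\le 1$. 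Granting this, direct reading of Fig.~\ref{fig:logic} shows each such $\phi_s$ is a conjunction of a purely universal formula of quantifier rank $\le 3$ with a formula whose prefix is $\forall^{\le 2}\exists$ (the Skolem-relation axioms emitted by $\existsl$ have the same bounded shape, with at most one existential), while every other $\phi_i$ is quantifier-free; hence the total constraint $\phi=\bigwedge_i\phi_i\wedge\phi_s$ has prenex signature $\forall^2\exists$, and its matrix uses only uninterpreted boolean symbols together with symbols and axioms inherited from $\Prop^0$ (encodings of object-language expressions) and $\Prop^1$ (the invariant $I$ appearing in $\existsl$ and in the $P/Q$ assertions). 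Finally I would check that each premise of the updated \rulelabel{RG-Txn} rule is a single implication between such constraints and $\Prop^0/\Prop^1$ formulas, whose negation --- after the standard manipulation that merges the purely universal conjuncts into the leading $\forall$-block and uses the $I(\stg')$-assumption on existentially bound snapshots --- stays within $\exists^{*}\forall^{2}\exists^{*}$ relative to $\Prop^0+\Prop^1$, i.e.\ in ${\sf GKS}+\Prop^0+\Prop^1$.

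The main obstacle is that last bookkeeping step: showing the fragment is closed under the particular Boolean combinations the encoding produces. Naive prenexing of a conjunction of several $\forall^2\exists^{*}$ sentences sharing vocabulary can in principle push the universal prefix past two quantifiers, so the argument must genuinely exploit that the extra conjuncts introduced by nested structure are purely universal and that the $\existsl$ Skolem axioms introduce only fresh relation symbols, so that after normalization the universal prefix governing any existential witness remains bounded by two. I expect this is precisely where the ``one level of nesting'' hypothesis is consumed --- a second level would force some $\bind$ to be encoded with $|\nubar|=2$, yielding a $\forall^3\exists$ conjunct that cannot be absorbed --- and I would devote the bulk of the proof to formalizing the depth-tracking invariant and this normalization lemma; the remaining cases (variables, set literals, conditionals, unions) are immediate, since their $\phi_s$ is quantifier-free and they neither extend $\nubar$ nor introduce new symbols.
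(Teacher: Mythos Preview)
Your proposal is correct and follows essentially the same route as the paper. The paper does not give a separate proof of the corollary at all: it is stated as immediate from the preceding theorem together with the informal prenex-signature counting argument in the paragraphs just before (that $\nubar$ is extended only by $\bind$, so one level of nesting bounds $|\nubar|\le 1$, hence every emitted $\phi_s$ has prefix at worst $\forall^2\exists$, landing in {\sf GKS}). Your structural induction on $\SL$ tracking $|\nubar|$ is exactly a formalization of that same counting, and your identification of $\SL^1$ as the witness $\SL'$ for the theorem is precisely how the paper intends the corollary to follow.

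The one place you go beyond the paper is your final ``bookkeeping'' concern about closure of the fragment under the Boolean combinations actually needed (conjoining several $\forall^2\exists$ sentences, negating an implication). The paper simply does not address this; it tacitly treats ``each $\phi_s$ has prefix $\forall^2\exists$'' as sufficient. Your observation that the extra conjuncts are purely universal (or introduce only fresh Skolem relations), so that prenexing does not inflate the universal block past two, is the right patch and makes your argument more careful than the paper's, but it is an elaboration of the same approach rather than a different one.
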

$\SL^1$ is a useful subset of $\SL$, for it corresponds to $\txnimp$
programs without nested \C{foreach} loops. Observe that the
\C{new\_order} transaction (Fig.~\ref{fig:new_order_code}) belongs to
this subset.  Indeed, $\SL^1$, while being a restricted version of
$\SL$, is nonetheless expressive enough to cover all the benchmarks we
considered in Sec.~\ref{sec:case-studies}.

A useful instantiation of $\SL^1$ is $\SL[{\sf BV},{\sf GKS}+{\sf
BV}]$, where ${\sf BV}$ is the theory of bit-vector arithmetic, which
is often used to encode the finite-bit integer arithmetic of real
programs. Finite-bit integer arithmetic has a finite axiomatization in
{\sf GKS}. For instance, 32-bit integers can be encoded as $2^{32}$
distinct constants of an uninterpreted sort $T$, while integer
operations like addition and multiplication can be encoded as
uninterpreted functions whose properties are enumerated for the entire
domain of $T$. Thus {\sf BV} is subsumed by {\sf GKS}. Since the
latter is decidable, the combination is decidable:
\begin{theorem}
  $\SL^1[{\sf BV},{\sf GKS}+{\sf BV}]$ is decidable.
\end{theorem}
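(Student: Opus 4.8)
\emph{Proof plan.}
The plan is to reduce the statement to the decidability of plain ${\sf GKS}$ using the preceding Corollary. Instantiating that Corollary with $\Prop^0 = {\sf BV}$ and $\Prop^1 = {\sf GKS} + {\sf BV}$, it tells us that $\SL^1[{\sf BV},\,{\sf GKS}+{\sf BV}]$ is decidable as soon as the combined theory ${\sf GKS} + {\sf BV} + ({\sf GKS}+{\sf BV})$ is decidable. Since combining a theory with a copy of itself leaves it unchanged, this combination is just ${\sf GKS} + {\sf BV}$, so the whole argument comes down to establishing that ${\sf GKS} + {\sf BV}$ is decidable.

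For that, I would make precise the observation already sketched in the discussion above: fixed-width bit-vector arithmetic has a finite, \emph{quantifier-free} axiomatization inside ${\sf GKS}$. Concretely, for a fixed word width $w$, introduce a fresh uninterpreted sort $T$, declare $2^{w}$ constants $c_0,\dots,c_{2^{w}-1} : T$ naming the bit-vector values, and add (i) the ground formula $\bigwedge_{i \neq j} c_i \neq c_j$ asserting pairwise distinctness of the values, and (ii) for each ${\sf BV}$ operation (addition, multiplication, the comparisons $\le,\ge$, equality, etc.) a ground ``lookup table'': a finite conjunction of atoms of the form $\mathit{plus}(c_i,c_j) = c_k$, or, if one prefers to stay relational, $R_{\mathit{plus}}(c_i,c_j,c_k)$ together with the functionality and totality constraints written out explicitly over the named constants $c_i$. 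Because $w$ is a fixed numeral, all of this is a finite set of variable-free (hence quantifier-free) sentences. Conjoining such ground sentences with a formula whose prenex signature is $\forall^{2}\exists^{*}$ keeps it in the Gödel class, since a ground conjunct can be pushed underneath the innermost quantifier without altering the prefix. Therefore ${\sf GKS} + {\sf BV}$ is, after this conservative translation, just ${\sf GKS}$, which the paper takes to be decidable; the theorem follows.

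\emph{Main obstacle.}
The delicate step, and the one I expect to need the most care, is checking that none of these manipulations escalates the quantifier prefix beyond $\forall^{2}\exists^{*}$. This has to be verified in two places. First, the ${\sf BV}$ axiomatization must genuinely be ground — in particular the functionality and totality axioms for each operation must be enumerated over the constants $c_0,\dots,c_{2^{w}-1}$ rather than stated with a fresh universal quantifier, since a stray $\forall$ in front of an $\exists$ would break the $\forall^{2}\exists^{*}$ shape. Second, using ${\sf GKS}+{\sf BV}$ itself in the $\Prop^{1}$ slot means database invariants $I$ may be ${\sf GKS}$-shaped formulas that are substituted into the encoding rule for $\existsl$ in Fig.~\ref{fig:logic}; one must confirm that the $[a/\stg]\phi$ conjunct contributed there occupies a position compatible with the $\forall^{2}\exists$ skeleton established in the decidability analysis of $\SL^{1}$, so that no $\forall^{2}\exists\forall\cdots$ alternation is introduced. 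This is precisely the content guaranteed by the hypothesis of the Corollary, and is why the theorem is phrased with $\Prop^{1}={\sf GKS}+{\sf BV}$ rather than a richer logic; spelling out this compatibility is the only non-routine part of the argument.
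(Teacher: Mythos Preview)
Your proposal is correct and follows essentially the same route as the paper: instantiate the Corollary with $\Prop^0={\sf BV}$ and $\Prop^1={\sf GKS}+{\sf BV}$, reduce to the decidability of ${\sf GKS}+{\sf BV}$, and argue that fixed-width bit-vector arithmetic admits a finite ground axiomatization (enumerated constants and operation tables) so that ${\sf GKS}+{\sf BV}$ collapses to plain ${\sf GKS}$. Your ``main obstacle'' discussion is more explicit than the paper's about keeping the axiomatization ground and about the interaction with the $\existsl$ encoding, but the paper's argument relies on exactly the same observations, stated more tersely.
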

This instantiation requires $I$ to be drawn from {\sf GKS}+{\sf BV},
which is expressive enough to describe common database integrity
constraints, such as referential integrity, non-negativeness of all
integer values in a column etc.  The isolation specifications
presented in \S\ref{sec:isolation} are already simple first-order
formulas that can be encoded in {\sf GKS}.  Furthermore, it is also
reasonable to expect the guarantee ($G$) of a transaction to be
expressible in the same logic as its inferred $\F$, since $\F$
(without the stability check) is essentially a complete
characterization of the transaction, while $G$ is only an abstraction.
Thus, with $\SL^1[{\sf BV},{\sf GKS}+{\sf BV}]$ as the language of
inference, the verification problem for weakly isolated transactions
is decidable.

\section{Implementation}
\label{sec:implementation}

\begin{figure}
\begin{ocaml}
type table_name =  District | Order | Order_line | Stock

type district = {d_id: int; d_next_o_id: int}
type order = {o_id: int; o_d_id: int; o_c_id: int; o_ol_cnt: int}
type order_line = {ol_o_id: int; ol_d_id: int; ol_i_id: int; ol_qty: int}
type stock = {s_i_id: int; s_d_id:int; s_qty: int}
\end{ocaml}
\caption{OCaml type definitions corresponding to the TPC-C schema from
Fig.~\ref{fig:schema}}
\label{fig:ocaml-schema}
\vspace*{-10pt}
\end{figure}

We have implemented our DSL to define transactions as monadic
computations in OCaml (modulo some syntactic sugar), and our automatic
reasoning framework as an extra frontend pass (called \thetool) in the
ocamlc 4.03 compiler\footnote{The source code is available at
available at \url{https://github.com/gowthamk/acidifier}}. The input
to \thetool is a program in our DSL that describes the schema of the
database as a collection of OCaml type definitions, and transactions
as OCaml functions, whose top-level expression is an application of
the \C{atomically\_do} combinator. For instance, TPC-C's schema from
Fig.~\ref{fig:schema} can be described via the OCaml type definitions
shown in Fig.~\ref{fig:ocaml-schema}.  \thetool also requires a
specification of the program in the form of a collection of guarantees
($G$), one per transaction, and an invariant $I$ that is a conjunction
of the integrity constraints on the database. An auxiliary DSL that
includes the first-order logic (FOL) combinators has been implemented
for this purpose. \thetool's verification pass follows OCaml's type
checking pass, hence the concrete artifact of verification is OCaml's
typed AST. The tool is already equipped with  an axiomatization of
PostgreSQL and MySQL's isolation levels expressed in our FOL DSL.
Other data stores can be similarly axiomatized. The concrete result of
verification is an assignment of an isolation level of the selected
data store to each transaction in the program.

At the top-level, \thetool runs a loop that picks an unverified
transaction and progressively strengthens its isolation level until it
passes verification. If the selected data store provides a
serializable isolation level, and if the program is sequentially
correct, then the verification is guaranteed to succeed. Within the
loop, \thetool first computes the various rely relations needed for
verification ($R$, $\R_l$, and $\R_c$). It then traverses the AST of a
transaction, applying the inference rules to construct a state
transformer, checks its stability, and weakens it ($\stabilize{\cdot}$)
if it is not stable. The result of traversing the transaction's AST is
therefore a state transformer ($\F$) that is stable w.r.t $\R_l$, which
is also stabilized against $\R_c$ (using $\stabilize{\cdot}$), and
then checked against the transaction's stated guarantee ($G$). If the
check passes, then the guarantee is verified to check if it preserves
the invariant $I$. The successful result from both checks results
in the transaction being certified correct under the current choice of
its isolation level. Successful verification of all transactions
concludes the top-level execution, returning the inferred isolation
levels as its output.  \thetool uses the Z3 SMT solver as its underlying reasoning engine. Each
implication check described above is first encoded in FOL, applying
the translation described in \S\ref{sec:inference} wherever
necessary.

\subsection{Pragmatics}

\textbf{Real-World Isolation Levels} The axiomatization of the
isolation levels presented in \S\ref{sec:isolation} leaves out
certain nuances of their implementations on real data stores, which
need to be taken into account for verification to be effective in
practice.  We take these into account while linking \thetool with
store-specific semantics (isolation specifications, etc.). As an
example, consider how PostgreSQL implements an \C{UPDATE} operation.
\C{UPDATE} first selects the records that meet the search criteria
from the snapshot against which it is executing (the snapshot is
established at the beginning of the transaction if the isolation level
is SI, or at the beginning of the \C{UPDATE} statement if the
isolation level is RC). The selected records are then visited in the
actual database (if they still exist), write locks are obtained, and
the update is performed, provided that each matched record still meets
\C{UPDATE}'s search criteria. If a record no longer meets the
search criteria (due to a concurrent update), it is excluded
from the update, and the write lock is immediately released.
Otherwise, the record remains locked until the transaction commits. 

Clearly, this sequence of events is not atomic, unlike the assumption
made by our formal model because  the implementation admits interference
between the updates of individual records that meet the search
criteria.  Nonetheless, through a series of relatively straightforward
deductions, we can show that PostgreSQL's \C{UPDATE} is in fact
equivalent (in behavior) to a sequential composition of two atomic
operations $c_1;c_2$, where $c_1$ is effectively a \C{SELECT}
operation with the same search criteria as \C{UPDATE}, and $c_2$ is
a slight variation of the original \C{UPDATE} that updates a
record only if a record with the same id is present in the set of records
returned by \C{SELECT}: % This transformation is summarized below:
\begin{smathpar}
\begin{array}{lcl}
\updatee{(\lambda x. ~e_1)}{(\lambda x.~e_2)}
&
\longrightarrow
&
\lete{y}{\selecte{(\lambda x.~e_1})}
     {\updatee{(\lambda x.~e_1 \wedge x.\idf\in\dom(y))}
              {(\lambda x.~e_2})}\\
\end{array}
\end{smathpar}
The intuition behind this translation is the observation that all
interferences possible during the execution of the \C{UPDATE} can be
accommodated between the time the records are selected from the
snapshot, and the time they are actually updated.  \thetool performs this
translation if the selected store is PostgreSQL, allowing it to reason
about \C{UPDATE} operations in a way that is faithful to its semantics
on PostgreSQL. \thetool also admits similar compensatory logic for
certain combinations of isolation levels and operations on MySQL.

\textbf{Set functions} SQL's \C{SELECT} query admits projections of
record fields, and also application of auxiliary functions such as
\C{MAX} and \C{MIN}, e.g., \C{SELECT MAX(ol\_o\_id) FROM
Order\_line WHERE $\ldots$}, etc. We admit such extensions as set functions
in our DSL (e.g., \C{project}, \C{max}, \C{min}), and axiomatize their
behavior. For instance:
\begin{smathpar}
\begin{array}{lcl}
  s_2 \;=\;\C{project}\,s_1\,(\lambda z.~e) & \Leftrightarrow &
  \forall y.~y\in s_2 \Leftrightarrow  \exists(x \in s_1).~y = [x/z]e\\
  x \;=\; \C{max}\,s & \Leftrightarrow & x \in s \conj \forall(y \in
  s).~y\le x\\
\end{array}
\end{smathpar}
There are however certain set functions whose behavior cannot be
completely axiomatized in FOL. These include \C{sum}, \C{count} etc.
For these, we admit imprecise axiomatizations. % expressed as lemmas over these functions.

\textbf{Annotation Burden} \thetool significantly reduces the
annotation burden in verifying a weakly isolated transactions by
eliminating the need to annotate intermediate assertions and loop
invariants.  Guarantees ($G$) and global invariants ($I$), however,
still need to be provided. Alternatively, a weakly isolated
transaction $T$ can be verified against a generic serializability
condition,  eliminating the need for guarantee annotations. In this
mode, \thetool first infers the transformer $\F_{SER}$ of $T$ without
considering any interference, which then becomes its guarantee ($G$).
Doing likewise for every transaction results in a rely relation ($R$)
that includes $\F_{SER}$ of every transaction. Verification now
proceeds by taking interference into account, and verifying that each
transaction still yields the same $F$ as its $F_{SER}$. The result of
this verification is an assignment of (possibly weak) isolation levels
to transactions which nonetheless guarantees behavior equivalent to a
serializable execution.

\section{Evaluation}
\label{sec:case-studies}

In this section, we present our experience in running \thetool on two
different applications: \emph{Courseware}: a course registration
system described by~\cite{gotsmanpopl16}, and TPC-C.

\begin{figure}[t]
\begin{ocaml}
  type table_name = Student | Course | Enrollment
  type student = {s_id: id; s_name: string}
  type course = {c_id: id; c_name: string; c_capacity: int}
  type enrollment = {e_id: id; e_s_id: id; e_c_id: id}

  let enroll_txn sid cid = 
    let crse = SQL.select1 [Course] (fun c -> c.c_id = cid) in
    let s_c_enrs = SQL.select [Enrollment] (fun e -> e.e_s_id = sid && 
                                                     e.e_c_id = cid) in
    if crse.c_capacity > 0 && Set.is_empty s_c_enrs then
      (SQL.insert Enrollment {e_id=new_id (); e_s_id=sid; e_c_id=cid};
       SQL.update Course (fun c -> {c with c_capacity = c.c_capacity - 1})
                         (fun c -> c.c_id = cid)) else ()
       
  let deregister_txn sid = 
    let s_enrs = SQL.select [Enrollment] (fun e -> e.e_s_id = sid) in
    if Set.is_empty s_enrs then
      SQL.delete Student (fun s -> s.s_id = sid) else ()
\end{ocaml}
\caption{Courseware Application}
\label{fig:courseware_code}
\vspace*{-10pt}
\end{figure}

\textbf{Courseware} The Courseware application allows new courses to be
added (via an \C{add\_course} transaction), and new students to be
registered (via a \C{register} transaction) into a database. A registered
student can enroll (\C{enroll}) in an existing course,
provided that enrollment has not already exceeded the course
capacity (\C{c\_capacity}). A course with no enrollments can be
canceled (\C{cancel\_course}). Likewise, a student who is not enrolled
in any course can be deregistered (\C{deregister}). Besides
\C{Student} and \C{Course} tables, there is also an \C{Enrollment}
table to track the many-to-many enrollment relationship between
courses and students. The simplified code for the Courseware
application with only \C{enroll} and \C{deregister}
transactions is shown in Fig.~\ref{fig:courseware_code}. The
application is required to preserve the following invariants on the
database:

\begin{enumerate}
\item  $I_1$: An enrollment record should always refer to an existing student and an existing course.
\item  $I_2$: The capacity (\C{c\_capacity}) of a course should always be a
  non-negative quantity.
\end{enumerate}
\noindent Both $I_1$ and $I_2$ can be violated under weak isolation.
$I_1$ can be violated, for example, when \C{deregister} runs
concurrently with \C{enroll}, both at RC isolation. While the former
transaction removes the student record after checking that no
enrollments for that student exists, the latter transaction
concurrently adds an enrollment record after checking the student
exists.  Both can succeed concurrently, resulting in an invalid state.
Invariant $I_2$ can be violated by two \C{enroll}s, both reading
\C{c\_capacity}=1, and both (atomically) decrementing it, resulting in
\C{c\_capacity}=-1.  We ran \thetool on the Courseware application
(Fig.~\ref{fig:courseware_code}) after annotating transactions with
their respective guarantees, and asserting $I = I_1 \wedge I_2$ as the
correctness condition. The guarantees $G_e$ and $G_d$ for \C{enroll}
and \C{deregister} transactions, respectively, are shown below:
\begin{smathpar}
  \begin{array}{lcl}
    G_e(\stg,\stg') & \Leftrightarrow & \stg_s'=\stg_s
      \conj \exists\C{cid}.\exists\C{sid}.\\
    & & \hspace*{0.3in} \stg_c' = \stg_c \bind \lambda c.~
        \itel{c.\C{c\_id}=\C{cid}\\
    & & \hspace*{1.15in}}
          {\existsl(c',~c'.\C{id}=c.\C{id} \wedge
              c'.\C{c\_name}=c.\C{c\_name} \\
    & & \hspace*{2.6in}\wedge~ c'.\C{c\_capacity}\ge0, ~\{c'\})\\
    & & \hspace*{1.15in}}
          {\{c\}}\\
    & & \hspace*{0.15in}\conj \stg_e = \stg_e' \bind \lambda e.~ 
        \itel{e.\C{e\_c\_id}=\C{cid} \wedge e.\C{e\_s\_id}=\C{sid}}
          {\emptyset}{\{e\}}\\
%   & & \hspace*{1.15in}} 
%         {\emptyset} {\{e\}}\\
    G_d(\stg,\stg') & \Leftrightarrow & \stg_c' = \stg_c \conj
      \stg_e' = \stg_e \conj \exists \C{sid}.~
      \itel{\forall(e \in \stg_e).~e.\C{e\_s\_id}\neq\C{sid}\\
%   & & \hspace*{0.3in} \itel{\forall(e \in
%           \stg_e).~e.\C{e\_s\_id}\neq\C{sid}\\
    & & \hspace*{1.55in}}
          {\stg_s' = \stg_s \bind \lambda s.~
           \itel{s.\C{id}=\C{sid}}{\emptyset}{\{s\}}\\
    & & \hspace*{1.55in}}
          {\stg_s' = \stg_s}\\
  \end{array}
\end{smathpar}
For the sake of this presentation we split $\stg$ into three disjoint
sets of records, $\stg_s$, $\stg_c$, and $\stg_e$, standing for
\C{Student}, \C{Course}, and \C{Enrollment} tables, respectively.
Observing that the set language $\SL$ (Sec.~\ref{sec:inference}),
besides being useful for automatic verification, also facilitates
succinct expression of transaction semantics, we define $G_e$ and
$G_d$ in a combination of FOL and $\SL$. $G_e$ essentially says that
the \C{enroll} transaction leaves the \C{Student} table unchanged,
while it may update the \C{c\_capacity} field of a \C{Course} record
to a non-negative value (even when it doesn't update, it is the case
that $c'.\C{c\_capacity}\ge0$, because $c'=c$, and $c\in\stg_c$, and
we know that $I_2(\stg_c)$). $G_e$ also conveys that \C{enroll} might
insert a new \C{Enrollment} record by stating that $\stg_e$, the
\C{Enrollment} table in the pre-state, contains all records $e$ from
$\stg_e'$, the table in the post-state, except when $e.\C{e\_c\_id}$ and
$e.\C{e\_s\_id}$ match \C{cid} and \C{sid}, respectively. The
guarantee $G_d$ of \C{deregister} asserts that the
transaction doesn't write to \C{Course} and \C{Enrollment} tables. The
transaction might however delete a \C{Student} record bearing an
\C{id}=\C{sid} (formally, $\stg_s' = \stg_s \bind \lambda s.~
\itel{s.\C{id}=\C{sid}}{\emptyset}{\{s\}}$), for some \C{sid} for
which no corresponding \C{Enrollment} records are present in the
pre-state (in other words, $\forall(e \in
\stg_e).~e.\C{e\_s\_id}\neq\C{sid}$).

With help of the guarantees, such as those described above, \thetool
was able to automatically discover the aforementioned anomalous
executions, and was subsequently able to infer that the anomalies can
be preempted by promoting the isolation level of \C{enroll} and
\C{deregister} to SER (on both MySQL and PostgreSQL), leaving
the isolation levels of remaining transactions at RC. The total time
for inference and verification took less than a minute running on a
conventional laptop.

\begin{table}[]
\centering
\begin{tabular}{l|c|c|c|c|c|}
\cline{2-6}
                                 & \multicolumn{1}{l|}{\C{new\_order}} & \multicolumn{1}{l|}{\C{delivery}} & \multicolumn{1}{l|}{\C{payment}} & \multicolumn{1}{l|}{\C{order\_status}} & \multicolumn{1}{l|}{\C{stock\_level}} \\ \hline
\multicolumn{1}{|l|}{MySQL}      & SER                                   & SER                                 & RC                                 & RC                                       & RC                                      \\ \hline
\multicolumn{1}{|l|}{PostgreSQL} & SI                                    & SI                                  & RC                                 & RC                                       & RC                                      \\ \hline
\end{tabular}
\caption{The discovered isolation levels for TPC-C transactions}
\label{tab:tpcc}
\vspace*{-10pt}
\end{table}

\textbf{TPC-C} The simplified schema of the TPC-C benchmark has been
described in Sec.~\ref{sec:motivation}. In addition to the tables
shown in Fig.~\ref{fig:schema}, the TPC-C schema also has
\C{Warehouse} and \C{New\_order} tables that are relevant for
verification.  To verify TPC-C, we examined four of the twelve
consistency conditions specified by the standard, which we name $I_1$
to $I_4$:

\begin{enumerate}
\item Consistency condition $I_1$  requires that the sales bottom line
of each warehouse equals the sum of the sales bottom lines of all
districts served by the warehouse.

\item Conditions $I_2$ and $I_3$ effectively enforce uniqueness of ids assigned
  to \C{Order} and \C{New\_order} records, respectively, under a district.

\item  Condition $I_4$ requires that the number of order lines under a district
  must match the sum of order line counts of all orders under the district.
% There are five transactions in TPC-C, out of which two are read-only.
% The three remaining transactions are \C{new\_order}, \C{delivery}, and
  % \C{payment}.
\end{enumerate}

Similar to the example discussed in Sec.~\ref{sec:motivation}, there
are a number of ways TPC-C's transactions violate the aforementioned
invariants under weak isolation. \thetool was able to discover all such
violations when verifying the benchmark against $I =
\bigwedge_{i}I_i$, with guarantees of all three transactions
provided. The isolation levels were subsequently strengthened  as
shown in Table.~\ref{tab:tpcc}.  As before, inference and verification
took less than a minute.

%% \begin{table*}[t]\small
%% \centering
%% \begin{tabular}{|c|c|c|c|}
%%   \hline
%% \textbf{Transaction}   & \textbf{Invariant} 
%% & \textbf{MySQL-Isolation} & \textbf{PostgreSQL-Isolation} \\ 
%% \hline
%% \multirow{4}{*}{\C{New\_Order} }  & $I_1$ 
%% & RC &  RC\\ 
%% &  $I_2$ &SER & RR \\
%% &  $I_3$ & SER  &  RR  \\
%% & $I_4$ & RC & RC   \\
%% \hline
%% \multirow{4}{*}{\C{Payment}}  & $I_1 $ 
%% & RC &  RC\\ 
%% &  $I_2$  &RC & RC \\
%% &  $I_3 $ & RC  &  RC  \\
%% & $I_4$  & RC & RC   \\
%% \hline
%% \multirow{4}{*}{\C{Delivery}}  & $I_1$  
%% & RC &  RC \\ 
%% &  $I_2$ &SER & RR \\
%% &  $I_3$ & SER  &  RR \\
%% & $I_4$  & RC & RC   \\
%% \hline
%% \end{tabular}
%% \caption{Various invariant violations witnessed for the TPC-C
%%   benchmark on MySQL and PostgreSQL}
%% \label{tab:tpcc-eval}
%% \end{table*}

To sanity-check the results of \thetool, we conducted experiments with a
high-contention OLTP workload on TPC-C aiming to explore the space of
correct isolation levels for different transactions. The workload
involves a mix of all five TPC-C transactions executing against a
TPC-C database with 10 warehouses. Each warehouse has 10 districts,
and each district serves 3000 customers. There are a total of 5
transactions in TPC-C, and given that MySQL and PostgreSQL support 3
isolation levels each, there are a total of $3^5 = 243$ different
configurations of isolation levels for TPC-C transactions on MySQL and
PostgreSQL. We executed the benchmark with all 243 configurations, and
found 171 of them violated at least one of the four invariants we
considered.  As expected, the isolation levels that \thetool infers for the
TPC-C transactions do not result in invariant violations, either on
MySQL or on PostgreSQL, and were determined to be the weakest safe
assignments possible.

\section{Related Work}
\label{sec:relatedwork}

\paragraph{Specifying weak isolation.}
Adya~\cite{adyaphd} specifies several weak isolation levels in terms
of \emph{dependency graphs} between transactions, and the kinds of
dependencies that are forbidden in each case. The operational nature
of Adya's specifications make them suitable for runtime monitoring and
anomaly detection~\cite{kemmevldb,feketesigmod08,pssi2011}, whereas
the declarative nature of our specifications make them suitable for
formal reasoning about program behavior. ~\cite{pldi15} specify
isolation levels declaratively as trace well-formedness conditions,
but their specifications implicitly assume a complete trace with only
committed transactions, making it difficult to reason about a program
as it builds the trace. ~\cite{gotsmanconcur15} specify isolation
levels with atomic visibility, but their specifications are also for
complete traces.  Both the aforementioned specification frameworks use
the vocabulary introduced in~\cite{burckhardt14}. However, none of
them are equipped with a reasoning framework that can use such
specifications to verify programs under weak isolation.

Recent work described in~\cite{CPA+17} also explores the
use of a state-based interpretation of isolation as we do, and 
like our approach, develops specifications of weak isolation that are not
tied to implementation-specific artifacts.  However, they do not
consider verification (manual or automated) of client programs, and it
is not immediately apparent if their specification formalism is
amenable for use within a verification toolchain. ~\cite{WB17}
present a dynamic analysis for weak isolation that attempts to
discover weak isolation anomalies from SQL log files.  Their solution,
while capable of identifying database attacks due to the use of
incorrect isolation levels, does not consider how
to verify application correctness, infer proper isolation levels, or
formally reason about the relationship between weak-isolation levels
and application invariants.
\paragraph{Reasoning under weak isolation.} ~\cite{feketessi} propose
a theory to characterize non-serializable executions that arise under
{\sc si}. They also propose an algorithm that
allocates either {\sc si} or {\sc ser} isolation levels to
transactions while guaranteeing
serializability. ~\cite{gotsmanpodc16} improve
on Adya's {\sc si} specification and use it to derive a static
analysis that determines the safety of substituting {\sc si} with a
weaker variant called \iso{Parallel Snapshot Isolation}~\cite{psi}.
These efforts focus on establishing the equivalence of executions
between a pair of isolation levels, without taking application
invariants into account. ~\cite{bern2000}
propose informal semantic conditions to ensure the satisfaction of
application invariants under weaker isolation levels.  All these
techniques are tailor-made for a finite set of well-understood
isolation levels (rooted in~\cite{berenson}).
\vspace*{-6pt}
\paragraph{Reasoning under weak consistency.} There have been several
recent proposals to reason about programs executing under weak
consistency~\cite{bailisvldb, alvarocalm, gotsmanpopl16,redblueatc,
redblueosdi, ecinec}. All of them assume a system model that offers a
choice between a \emph{coordination-free} weak consistency level
(\emph{e.g.}, eventual consistency~\cite{redblueosdi, redblueatc,
ecinec, alvarocalm, bailisvldb}) or causal
consistency~\cite{lbc16,gotsmanpopl16}). All these efforts involve
proving that atomic and fully isolated operations preserve application
invariants when executed under these consistency levels.  In contrast,
our focus in on reasoning in the presence of weakly-isolated
transactions under a strongly consistent store.  ~\cite{gotsmanpopl16}
adapt \iso{Parallel Snapshot Isolation} to a transaction-less setting
by interpreting it as a consistency level that serializes writes to
objects; a dedicated proof rule is developed to help prove prove
program invariants hold under this model. By parameterizing our proof
system over a gamut of weak isolation specifications, we avoid the
need to define a separate proof rule for each new isolation level we
may encounter.
%% \vspace*{-4pt}
%% \paragraph{Reasoning under relaxed memory.} The reasoning mechanisms
%% used to describe and prove properties about weakly-isolated
%% transactions bear some resemblance to those used to formalize relaxed
%% memory behaviour~\cite{battycpp}.  Ridge~\cite{rgtso} generalizes
%% rely-guarantee reasoning to the x86-TSO memory model.  Likewise,
%% Vafeiadis \emph{et al.}~\cite{rsl13} generalize concurrent separation
%% logic (CSL)~\cite{csl} to the C11 relaxed memory model.  Ferreira
%% \emph{et al.}~\cite{ferreira10} propose a parameterized operational
%% semantics for relaxed memory models, but the parameterization is over
%% a relation between relaxed memory programs and related {\sc sc}
%% programs. Demange \emph{et al.}~\cite{DLZ+13} present a \emph{buffered
%%   memory model} for Java that defines an axiomatic definition for the
%% JMM in terms of memory reorderings, and an operational instantiation
%% consistent with the TSO memory model.
\paragraph{Inference.}  ~\cite{Vaf10,Vaf10a}
describe \emph{action inference}, an inference procedure for computing
rely and guarantee relations in the context of RGSep~\cite{VP07}, an
integration of rely-guarantee and separation logic~\cite{Rey02} that
allows one to precisely reason about local and shared state of a
concurrent program. The ideas underlying action inference have been
used to prove memory safety, linearizability, shape invariant
inference, etc.  of fine-grained concurrent data structures.  While
our motivation is similar (automated inference of intermediate
assertions and local invariants), the context of study (transactions
vs. shared-memory concurrency), the objects being analyzed (relational
database tables vs. concurrent data structures), the properties being
verified (integrity constraints over relational tables vs. memory
safety, or linearizability of concurrent data structure operations)
and the analysis technique used to drive inference (state transformers
vs. abstract interpretation) are quite different.
\vspace*{-7pt}
\section{Conclusions}
\label{sec:conclusions}

To improve performance, modern database systems employ techniques that
weaken the strong isolation guarantees provided by serializable
transactions in favor of alternatives that allow a transaction to
witness the effects of other concurrently executing transactions that
happen commit during its execution.  Typically, it is the
responsibility of the database programmer to determine if an available
weak isolation level would violate a transaction's consistency
constraints.  Although this has proven to be a difficult and
error-prone process, there has heretofore been no attempt to formalize
notions of weak isolation with respect to application semantics, or
consider how we might verify the correctness of database programs that
use weakly-isolated transactions.  In this paper, we provide such a
formalization.  We develop a rely-guarantee proof framework cognizant
of weak isolation semantics, and build on this foundation to devise an
inference procedure that facilitates automated verification of
weakly-isolated transactions, and have applied our ideas on
widely-used database systems to justify their utility.  Our solution
enables database applications to leverage the performance advantages
offered by weak isolation, without compromising correctness.

\section*{Acknowledgements}

We thank KC Sivaramakrishnan for numerous helpful discussions about
weak isolation, and for thorough analysis of the material presented in
this paper. We are grateful to the anonymous reviewers, and our
shepherd, Peter M{\"u}ller, for their careful reading and insightful
comments.  This material is based upon work supported by the National
Science Foundation under Grant No. CCF-SHF 1717741 and the Air Force
Research Lab under Grant No.  FA8750-17-1-0006.

% We recommend abbrvnat bibliography style.

%\bibliographystyle{plainnat}
\small
\bibliography{all}

\newpage
\appendix

\section{Full Operational Semantics}

%\renewcommand{\ctxn}[3]{\C{TXN}_{#1}\langle #2 \rangle\{#3\}}
%\begin{figure*}[H]
%\raggedright
%
\textbf{Syntax}\\
\begin{smathpar}
\renewcommand{\arraystretch}{1.2}
\begin{array}{lclcl}
\multicolumn{5}{c} {
  {x,y} \in \mathtt{Variables}\qquad
  {f} \in \mathtt{Field\;Names} \qquad
% {\tau} \in \mathtt{Table\;Names}
  {i,j} \in \mathbb{N} \qquad
  {\odot} \in \{+,-,\le,\ge,=\}\qquad
  {k} \in \mathbb{Z}\cup\mathbb{B} \qquad
  {\rec} \in \{\bar{f}=\bar{k}\}\
}\\
{\stl,\stg,s} & \in & \mathtt{State} & \coloneqq &  \Pow{\{\bar{f}=\bar{k}\}} \\
{\I_e, \I_c }  & \in & \mathtt{Isolation Spec} & \coloneqq & (\stl,\stg,\stg') \rightarrow \Prop\\
v & \in & \mathtt{Values} & \coloneqq & k \ALT \rec \ALT s\\
e & \in & \mathtt{Expressions} & \coloneqq & k \ALT x \ALT x.f 
    \ALT \{\bar{f}=\bar{e}\} \ALT e_1 \odot e_2\\ 
c & \in & \mathtt{Commands} & \coloneqq & \cskip \ALT \lete{x}{e}{c}
    \ALT \ite{e}{c_1}{c_2}\ALT c_1;c_2 \ALT \inserte{x}  \\
&&&&\ALT \deletee{\lambda x.e}
    \ALT \lete{x}{\selecte{\lambda x.e}}{c}
    \ALT \updatee{\lambda x.e_1}{\lambda x.e_2}\\
&&&&\ALT \foreache{x}{\lambda y.\lambda z. c} 
    \ALT \foreachr{s_1}{s_2}{\lambda x.\lambda y. e}\\
&&&&\ALT \ctxn{i}{\I_e,\I_c}{ c } \ALT \ctxn{i}{\I_e,\I_c,\stl,\stg}{c} \ALT c1 || c2\\
% t & \in & \mathtt{Terms} & \coloneqq & e \ALT c\\
\ectx & \in & \mathtt{Eval\;Ctx} & ::= & \bullet \ALT  
  \bullet || c_2 \ALT c_1 || \bullet \ALT \bullet;\,c_2 
  \ALT \ctxn{i}{\I_e,\I_c,\stl,\stg}{\bullet} \\
\end{array}
\end{smathpar}
\bigskip

\renewcommand{\arraystretch}{1.2}

\textbf{Local Reduction} \quad 
\fbox {\(\stg \vdash (c,\stl) \stepsto (c',\stl')\)}\\
\begin{minipage}{2.8in}
\rulelabel{E-Insert}
\begin{smathpar}
\begin{array}{c}
\RULE
{
  j \not\in \dom(\stl \cup \stg)\\
  r' = \langle r \;\C{with}\; \idf=j;\,\txnf=i;\,\delf=\C{false} \rangle
 % r' = r \with \{\bar{f}=\bar{k};\,\idf=i;\,\delf=\C{false}\}
}
{
  \stg \vdash (\tbox{\inserte{r}}_i,\stl) \stepsto
  (\tbox{\cskip}_i,\stl \cup \{r'\})
}
\end{array}
\end{smathpar}
\end{minipage}
\begin{minipage}{2.8in}
\rulelabel{E-Delete}
\begin{smathpar}
\begin{array}{c}
\RULE
{
  s = \{r' \,|\, \exists(r\in\Delta).~ \eval([r/x]e)=\C{true} \\
        \hspace*{0.7in}\conj r'=\langle r \;\C{with}\;\delf=\C{true};\,\txnf=i \rangle\}\\
 \dom(s) \cap \dom(\delta) = \emptyset
}
{
  \stg \vdash (\tbox{\deletee{\lambda x.e}}_i,\stl) \stepsto (\tbox{\cskip}_i,\stl \cup s)
}
\end{array}
\end{smathpar}
\end{minipage}
\bigskip

%
%\begin{minipage}{2.8in}
\rulelabel{E-Select}
\begin{smathpar}
\begin{array}{c}
\RULE
{
  s = \{r\in\Delta \,|\, \eval([r/x]e)=\C{true}\}\spc
  c' = [s/y]c
}
{
  \stg \vdash (\tbox{\lete{y}{\selecte{\lambda x.e}}{c}}_i, \stl) \stepsto 
              (\tbox{c'}_i,\stl)
}
\end{array}
\end{smathpar}
%\end{minipage}
%
%
%\begin{minipage}{2.8in}
\rulelabel{E-Update}
\begin{smathpar}
\begin{array}{c}
\RULE
{
  s = \{r' \,|\, \exists(r\in\Delta).~ \eval([r/x]e_2)=\C{true} \conj r'=\langle [r/x]e_1 \;\C{with}\;\\ 
\idf=r.\idf;\,\txnf=i;\,\delf=r.\delf \rangle\} \spc \dom(\stl) \cap \dom(s) = \emptyset
}
{
  \stg \vdash (\tbox{\updatee{\lambda x.e_1}{\lambda x.e_2}}_i,\stl) \stepsto 
              (\cskip,\stl \cup s)
}
\end{array}
\end{smathpar}
%\end{minipage}
%

\begin{minipage}{2.8in}
\rulelabel{E-Seq1}
\begin{smathpar}
\begin{array}{c}
\RULE
{
 \stg \vdash (\tbox{c1}_i, \stl) \stepsto (\tbox{c1'}_i, \stl') \spc c1 \neq \cskip
}
{
  \stg \vdash (\tbox{c1;c2}_i, \stl) \stepsto 
              (\tbox{c1';c2}_i,\stl')
}
\end{array}
\end{smathpar}
\end{minipage}
\begin{minipage}{2.8in}
\rulelabel{E-Seq2}
\begin{smathpar}
\begin{array}{c}
\RULE
{
 \stg \vdash (\tbox{c1}_i, \stl) \stepsto (\tbox{\cskip}_i, \stl')
}
{
  \stg \vdash (\tbox{c1;c2}_i, \stl) \stepsto 
              (\tbox{c2}_i,\stl')
}
\end{array}
\end{smathpar}
\end{minipage}

\begin{minipage}{2.8in}
\rulelabel{E-IfTrue}
\begin{smathpar}
\begin{array}{c}
\RULE
{
 \eval(e) = \C{true}
}
{
  \stg \vdash (\tbox{\ite{e}{c_1}{c_2}}_i, \stl) \stepsto 
              (\tbox{c1}_i,\stl)
}
\end{array}
\end{smathpar}
\end{minipage}
\begin{minipage}{2.8in}
\rulelabel{E-IfFalse}
\begin{smathpar}
\begin{array}{c}
\RULE
{
 \eval(e) = \C{false}
}
{
  \stg \vdash (\tbox{\ite{e}{c_1}{c_2}}_i, \stl) \stepsto 
              (\tbox{c2}_i,\stl)
}
\end{array}
\end{smathpar}
\end{minipage}

%\begin{minipage}
\begin{smathpar}
\begin{array}{ll}
  \rulelabel{E-Foreach1} & \stg \vdash (\tbox{\foreache{s}{\lambda y.\lambda
  z.c}}_i,\stl) \stepsto (\tbox{\foreachr{\emptyset}{s}{\lambda y.\lambda z. c}}_i, \stl)\\
  \rulelabel{E-Foreach2} & \stg \vdash (\tbox{\foreachr{s_1}{\{r\} \uplus s_2}{\lambda y.\lambda
  z.c}}_i,\stl) \stepsto (\tbox{[r/z][s_1/y]c;\,\foreachr{s_1 \cup \{r\}}{s_2}{\lambda y.\lambda z. c}}_i, \stl)\\
  \rulelabel{E-Foreach3} & \stg \vdash (\tbox{\foreachr{s}{\emptyset}{\lambda y.\lambda
  z.c}}_i,\stl) \stepsto (\tbox{\cskip}_i,\stl)\\
\end{array}
\end{smathpar}
%
%\bigskip

%
\textbf{Top-Level Reduction} \quad 
\fbox {\((c,\stg) \stepsto (c',\stg')\)}\\
%
%\begin{minipage}{3in}
  \rulelabel{E-Txn-Start}
  \begin{smathpar}
  \begin{array}{c}
    \RULE{}
         {(\ctxn{i}{\I_e,\I_c}{c},\stg) \stepsto (\ctxn{i}{\I_e,\I_c,\emptyset,\stg}{c},\stg)}
  \end{array}
  \end{smathpar}
%\end{minipage}%
\hfill
%\begin{minipage}{3in}
\rulelabel{E-Txn}
\begin{smathpar}
\begin{array}{c}
\RULE
{
  \I_e\,\,(\stl,\stg,\stg')\spc
  \stg \vdash (\tbox{c}_i,\stl) \stepsto (\tbox{c'}_i,\stl')
}
{
  (\ctxn{i}{\I_e,\I_c,\stl,\stg}{c},\stg') \stepsto
  (\ctxn{i}{\I_e,\I_c,\stl',\stg'}{c'},\stg')
}
\end{array}
\end{smathpar}
%\end{minipage}\\

\begin{center}
%\begin{minipage}{3in}
\rulelabel{E-Commit}
\begin{smathpar}
\begin{array}{c}
\RULE
{
  \I_c\,\,(\stl,\stg,\stg')
}
{
  (\ctxn{i}{\I_e,\I_c,\stl,\stg}{\cskip},\stg') \stepsto (\cskip,\stl \rhd \stg')
}
\end{array}
\end{smathpar}
%\end{minipage}
\end{center}
\hfill
%
%\caption{\small \txnimp: Syntax and Small-step semantics}
%\label{fig:txnimp}
%\end{figure*}

\section{Rely-Guarantee Reasoning}

%\begin{figure}[t]
%\raggedright
%
\textbf{Txn-Local Reasoning} \quad 
  \fbox {\( \R \vdash \hoare{P}{c}{Q} \)} \\
%\fbox{\( \rg{\mathbb{I},P,R}{\txnbox{c}_i}{G,Q}\)} \quad
%\fbox{\( \rg{\mathbb{I},P,R}{c}{G,Q}\)} \quad\\
%
%\begin{minipage}{0.6\textwidth}
\rulelabel{RG-Insert}
\begin{smathpar}
\begin{array}{c}
\RULE
{
  \stable(\R,P)\\
  \forall\stl,\stl',\stg,i.~P(\stl,\stg) \conj j \not\in
  \dom(\stl\cup\stg) \\
  \conj \stl'=\stl \cup 
  \{\langle x \with \idf=j;\,\txnf=i;\,\delf=\C{false}\rangle\} \Rightarrow Q(\stl',\stg)
}
{
  \R \vdash \hoare{P}{\inserte{x}}{Q}
}
\end{array}
\end{smathpar}
%\end{minipage}
%\begin{minipage}{0.4\textwidth}
\rulelabel{RG-Delete}
\begin{smathpar}
\begin{array}{c}
\RULE
{
  \stable(\R,P)\\
  \forall\stl,\stl',\stg.~P(\stl,\stg) \conj 
  \stl' = \stl \cup \{r' \,|\, \exists(r\in\Delta).~ [r/x]e \\
        \conj r'=\langle r \with \txnf=i; \delf=\C{true}\rangle\}
  \Rightarrow 
  Q(\stl',\stg)
}
{
  \R \vdash \hoare{P}{\deletee{\lambda x.e}}{Q}
}
\end{array}
\end{smathpar}
%\end{minipage}
%

%
%\begin{minipage}{0.6\textwidth}
\rulelabel{RG-Update}
\begin{smathpar}
\begin{array}{c}
\RULE
{
  \stable(\R,P)\\
  \forall\stl,\stl',\stg.~P(\stl,\stg) \conj 
  \stl' = \stl \cup \{r' \,|\, \exists(r\in\Delta).[r/x]e_2 \conj\\
   r'=\langle[r/x]e_1 \with \idf=r.\idf;\,\txnf=i;\,\delf=\C{false}\rangle\} \Rightarrow   Q(\stl',\stg)
}
{
  \R \vdash \hoare{P}{\updatee{\lambda x.e_1}{\lambda x.e_2}}{Q}
}
\end{array}
\end{smathpar}
%\end{minipage}
%\begin{minipage}{0.4\textwidth}
\rulelabel{RG-Select}
\begin{smathpar}
\begin{array}{c}
\RULE
{
   \R \vdash \hoare{P'}{c}{Q}\spc
  \stable(\R,P)\\
  P'(\stl,\stg) \Leftrightarrow P(\stl,\stg) \\
  \hspace*{0.5in}\wedge
  x = \{r' \,|\, \exists(r\in\Delta).~ [r/y]e_2\} \\
}
{
  \R \vdash \hoare{P}{\lete{y}{\selecte{\lambda x.e}}{c}}{Q}
}
\end{array}
\end{smathpar}
%\end{minipage}
%

%
%\begin{minipage}{3in}
\rulelabel{RG-Foreach}
\begin{smathpar}
\begin{array}{c}
\RULE
{
  \stable(\R,Q)\spc
  \stable(\R,\psi)\\
  P \Rightarrow [y/\phi]\psi\spc
  \R \vdash \hoare{\psi \wedge z\in x}{c}{Q_c}\\
  Q_c  \Rightarrow [y \cup \{z\} / y]\psi \spc
  [x / y]\psi \Rightarrow Q
}
{
  \R \vdash \hoare{P}{\foreache{x}{\lambda y.\lambda z.c}}{Q}
}
\end{array}
\end{smathpar}
%\end{minipage}
%\begin{minipage}{3in}
\rulelabel{RG-Seq}
\begin{smathpar}
\begin{array}{c}
\RULE
{
\R \vdash \hoare{P} {c1} {Q^{'}} \spc \R \vdash \hoare{Q^{'}}{c2}{Q}\\
\stable(\R, Q^{'})
}
{
  \R \vdash \hoare{P}{c1;c2}{Q}
}
\end{array}
\end{smathpar}
%\end{minipage}

%\begin{minipage}{3in}
\rulelabel{RG-If}
\begin{smathpar}
\begin{array}{c}
\RULE
{
\R \vdash \hoare{P \wedge e} {c1} {Q} \spc \R \vdash \hoare{P \wedge \neg e}{c2}{Q}\\
\stable(\R, P)
}
{
  \R \vdash \hoare{P}{\ite{e}{c_1}{c_2}}{Q}
}
\end{array}
\end{smathpar}
%\end{minipage}
%
%\begin{minipage}{3in}
\rulelabel{RG-Conseq}
\begin{smathpar}
\begin{array}{c}
\RULE
{
  \R \vdash \hoare{P}{c}{Q}\\
  P' \Rightarrow P \spc
  Q \Rightarrow Q' \spc
  \stable(\R,P')\spc
  \stable(\R,Q')\spc
}
{
  \R \vdash \hoare{P'}{c}{Q'}
}
\end{array}
\end{smathpar}
%\end{minipage}
%
\bigskip

\textbf{Top-Level Reasoning} \quad \fbox {\(
\rg{I,R}{c}{G,I} \)}\\
%
%
%\begin{minipage}{3.5in}
\rulelabel{RG-Txn}
\begin{smathpar}
\begin{array}{c}
\RULE
{
  \stable(R,\I)\spc
  \stable(R,I)\spc
  P(\stl,\stg) \Leftrightarrow \stl=\emptyset \wedge I(\stg)\\
  \R_e = R \backslash \I_e \spc \R_c = R \backslash \I_c \spc 
   \R_e \vdash \rg{P}{c}{Q} \spc \stable(\R_c,Q) \\ 
  \forall \stl,\stg.~ Q(\stl,\stg) \Rightarrow 
    G(\stg, \stl \rhd \stg)\spc
  \forall \stg,\stg'.~I(\stg) \wedge G(\stg,\stg') \Rightarrow I(\stg')\\
}
{
  \rg{I,R}{\ctxn{i}{\I}{c}}{G,I}
}
\end{array}
\end{smathpar}
%\end{minipage}
%

%
%\begin{minipage}{3.6in}
\rulelabel{RG-Par}
\begin{smathpar}
\begin{array}{c}
\RULE
{
 \rg{I,R \cup G_2}{t_1}{G_1,I} \\
 \rg{I,R \cup G_1}{t_2}{G_2,I}
}
{\rg{I,R}{t_1||t_2}{G_1\cup G_2,I}}
\end{array}
\end{smathpar}
%\end{minipage}
%
%
%\begin{minipage}{2in}
\rulelabel{RG-Conseq2}
\begin{smathpar}
\begin{array}{c}
\RULE
{
  \rg{I,R}{\ctxn{i}{\I}{c}}{G,I}\\
  \I' \Rightarrow \I \spc 
  R' \subseteq R \spc 
  \stable(R',\I')\\
  G \subseteq G' \spc
  \forall \stg,\stg'.~I(\stg) \wedge G'(\stg,\stg') \Rightarrow I(\stg')\\
}
{
  \rg{I,R'}{\ctxn{i}{\I'}{c}}{G',I}
}
\end{array}
\end{smathpar}
%\end{minipage}
%

%\caption{\small \txnimp: Rely-Guarantee Rules}
%\label{fig:rg-rules}
%\vspace*{-12pt}
%\end{figure}

\section{Soundness of RG-Reasoning}
\begin{definition}[ Step-indexed reflexive transitive closure]
For all $A:\text{Type}$, $R: A \rightarrow A \rightarrow \mathbb{P}$, and $n :
\mathbb{N}$, the step-indexed reflexive transitive closure $R^n$ of $R$ is
the smallest relation satisfying the following
properties:
\begin{itemize}
\item $\forall (x:A).\, R^0 (x,x)$
\item $\forall (x,y,z : A).\, R(x,y) \conj R^{n-1}(y,z) \Rightarrow
R^{n}(x,z)$
\end{itemize}
\end{definition}

\begin{definition} [Interleaved step relation]
The interleaved step relation (denoted as $\rightarrow_R$) interleaves transaction local reduction with interference from concurrent transactions captured as the Rely relation (R). It is defined as follows:
\begin{mathpar}
\begin{array}{lcl}
(t, \stg) \rightarrow_{R}  (t^{'}, \stg^{'}) & \defeq & (t = t^{'} \wedge R(\stg, \stg^{'}))  \vee ((t, \stg) \rightarrow (t^{'}, \stg^{'}))% \wedge\\
%& & (t.c = \texttt{SKIP} \Rightarrow \mathbb{I}_c(t.\sigma_l, t.\sigma_g, \sigma)) \wedge (\neg(t.c = \texttt{SKIP}) \Rightarrow \mathbb{I}_r(t.\sigma_l, t.\sigma_g, \sigma)))
\end{array}
\end{mathpar}

The interleaved multistep relation (denoted as $\rightarrow_R^n$) is the step-indexed reflexive transitive closure of $\rightarrow_R$.
\end{definition}

Given a transaction $t = \texttt{txn}\langle \mathbb{I}, \stl, \stg \rangle \{c\}$, we use the notation $t.\stl, t.\stg, t.\mathbb{I}$ and $t.c$ to denote the various components of $t$. Below, we provide a more precise definition of the transaction-local RG judgement:

\begin{mathpar}
\begin{array}{lcl}
\R \vdash \hoare{P}{c}{Q} & \defeq & \forall t, \stg,\stg'.
  P(t.\stl,\stg) \conj t.c = c \wedge (t,\stg) \rightarrow_{\R}^{n} (t_2, \stg^{'}) \rightarrow (t^{'}, \stg^{'}) \wedge t^{'}.c = \texttt{SKIP} \wedge Q(t^{'}.\stl, \stg^{'})
\end{array}
\end{mathpar}

Note that even though $\R$ is a ternary relation, its step-indexed reflexive-transitive closure can be defined in a similar fashion as R. In the above definition, we have explicitly stated that the last step in the reduction sequence is taken by the transaction (and not by the environment),  finishing in the state satisfying the assertion $Q$. The nature of interference before and after the last step of the transaction are different (after the last step and before the commit step, the interference is controlled by $\mathbb{I}_c$, while before the last step, the interference is controlled by $\mathbb{I}_e$).

\begin{lemma}
If $\stable(R, Q)$, then $\forall \stl, \stg, \stg', k. Q(\stl, \stg) \wedge R^{k}(\stg, \stg') \Rightarrow Q(\stl, \stg')$
\end{lemma}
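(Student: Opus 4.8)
} The plan is to prove the lemma by induction on the step index $k$, exploiting the single-step stability hypothesis $\stable(R,Q)$ together with the fact that $R$ is a relation on global database states only, so it never alters the transaction-local component $\stl$. First I would dispatch the base case $k=0$: by the definition of the step-indexed reflexive transitive closure, $R^{0}(\stg,\stg')$ forces $\stg'=\stg$, so $Q(\stl,\stg')$ is literally $Q(\stl,\stg)$ and holds by hypothesis.

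For the inductive step I would assume the statement for $k-1$ and take arbitrary $\stl,\stg,\stg'$ with $Q(\stl,\stg)$ and $R^{k}(\stg,\stg')$. Unfolding $R^{k}$ according to its definition produces an intermediate state $\stg''$ such that $R(\stg,\stg'')$ and $R^{k-1}(\stg'',\stg')$. Applying $\stable(R,Q)$ to $Q(\stl,\stg)$ and the single edge $R(\stg,\stg'')$ yields $Q(\stl,\stg'')$, and the induction hypothesis applied to $Q(\stl,\stg'')$ and $R^{k-1}(\stg'',\stg')$ then yields $Q(\stl,\stg')$, closing the induction.

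I do not expect a genuine obstacle here; the lemma is essentially a bookkeeping induction. The two points that need to be stated carefully are that $\stable(R,Q)$ is the one-\emph{step} notion of stability (the interference is a single $R$-edge, not $R^{*}$), so that the inductive step is doing real work rather than being immediate, and that $R$ leaves $\stl$ fixed, so the same local state can be threaded through every step of the closure. This lemma is precisely what licenses proving stability of RG assertions against the one-step rely throughout Fig.~\ref{fig:rg-rules}, even though the semantics of the judgments in Definition~\ref{def:rg-semantics} quantifies over arbitrarily long interleaved multi-step reductions.
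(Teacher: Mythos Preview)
Your proof is correct and follows essentially the same approach as the paper: induction on the step index $k$, with the base case trivial and the inductive step peeling off one $R$-edge and invoking single-step stability. The only cosmetic difference is that you unfold $R^{k}$ as $R;R^{k-1}$ (matching the paper's definition of the step-indexed closure directly), whereas the paper's proof peels the last step and decomposes as $R^{k'};R$ before applying the induction hypothesis first and stability second.
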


\begin{proof}
We use induction on $k$. 

\textbf{Base Case}: For $k = 0$, $\stg = \stg'$ and hence $Q(\stl, \stg')$. 

\textbf{Inductive Case}: For the inductive case, assume that for $k'$, $\forall \stl, \stg, \stg'. Q(\stl, \stg) \wedge R^{k'}(\stg, \stg') \Rightarrow Q(\stl, \stg')$.
Given $\stl, \stg, \stg_1$ such that $Q(\stl, \stg)$, $R^{k'+1}(\stl, \stg_1)$, we have to show $Q(\stl, \stg_1)$. There exists $\stg'$ such that $R^{k'}(\stg, \stg')$ and $R(\stg', \stg_1)$. By the inductive hypothesis, $Q(\stl, \stg')$. $\stable(R,Q)$ is defined as follows:
$$
\stable(R,Q) = \forall \stl, \stg, \stg'. Q(\stl, \stg) \wedge R(\stg, \stg') \Rightarrow Q(\stl, \stg')
$$
Instantiating the above statement with $\stl, \stg', \stg_1$, we get $Q(\stl, \stg_1)$
\end{proof}

\begin{theorem}
RG-Txn is sound.
\end{theorem}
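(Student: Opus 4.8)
}
I would prove that whenever all premises of \rulelabel{RG-Txn} hold, the conclusion $\rg{I,R}{\ctxn{i}{\I}{c}}{G,I}$ holds in the sense of Definition~\ref{def:rg-semantics}: for every $\stg_0$ with $I(\stg_0)$, (a) every completed execution $(\ctxn{i}{\I}{c},\stg_0) \rstepssto{n} (\cskip,\stg_f)$ has $I(\stg_f)$, and (b) every $\stepsto$-step along any such reduction lies in $G$. For (b), note that \rulelabel{E-Txn-Start} and \rulelabel{E-Txn} leave the global state unchanged, hence lie in $G$ by reflexivity of the guarantee, so the only non-trivial case is an \rulelabel{E-Commit} step — and that step can fire only when the transaction body has already reduced to $\cskip$. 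Thus both obligations reduce to analysing an execution in which the body completes. Such an execution has a forced shape: one administrative \rulelabel{E-Txn-Start} step to the run-time form $\ctxnr{i}{\I,\emptyset,\stg_0}{c}$; a \emph{body phase} of interleaved \rulelabel{E-Txn} steps (each advancing the local store and the body while leaving the global state fixed) and environment $R$-steps, ending when the body is $\cskip$; and a \emph{commit phase} of environment $R$-steps followed by the unique \rulelabel{E-Commit} step.

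The crux, and what I expect to be the main obstacle, is a \emph{factoring lemma}: every $R$-step of the body phase actually satisfies $\I_e$ (hence lies in $\R_e = R\backslash\I_e$), and every $R$-step of the commit phase satisfies $\I_c$ (hence lies in $\R_c$). The naive rely-guarantee argument fails here because $R^{*}$ constrained by $\I$ differs from the transitive closure of $R$ constrained by $\I$ — which is precisely why the premises demand the stability conditions $\stable(R,\I_e)$ and $\stable(R,\I_c)$ on $\I$. Concretely, consider a maximal run of environment steps $\stg = \stg'_0 \rstepsto \stg'_1 \rstepsto \cdots \rstepsto \stg'_m$ between two consecutive transaction steps, where $\stg$ is the global state the transaction last stepped at (or $\stg_0$) and the transaction's next step — another \rulelabel{E-Txn} step, or \rulelabel{E-Commit} — is taken at $\stg'_m$; its premise gives $\I_e(\stl,\stg,\stg'_m)$ (resp.~$\I_c(\stl,\stg,\stg'_m)$ at commit). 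Feeding $\I_e(\stl,\stg,\stg'_m)$ and $R(\stg'_{m-1},\stg'_m)$ into the stability condition on $\I$ yields $\I_e(\stl,\stg,\stg'_{m-1})$ and $\I_e(\stl,\stg'_{m-1},\stg'_m)$, and a backward induction on the length of the run then gives $\I_e(\stl,\stg'_j,\stg'_{j+1})$ for each $j$, i.e., each step lies in $\R_e$. The subtlety is that this needs the execution to actually complete — we must know the transaction does take a further step (and ultimately executes its commit step) after each environment segment — which is where the ``body completes'' reduction above is used. The commit-phase case is identical with $\I_c$ in place of $\I_e$.

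With the factoring lemma the rest is routine assembly. From $\stl=\emptyset$, $I(\stg_0)$ and the premise $P(\stl,\stg) \Leftrightarrow \stl=\emptyset \wedge I(\stg)$ we get $P(\emptyset,\stg_0)$; modulo the bookkeeping that recasts the top-level \rulelabel{E-Txn} and environment steps of the body phase as a transaction-local interleaved reduction (tracking how the tag $\ctxnr{i}{\I,\stl,\stg}{c}$ evolves), the body phase is an $\R_e$-interleaved multi-step reduction from $(\tbox{c}_i,\emptyset,\stg_0)$ to $(\tbox{\cskip}_i,\stl_f,\stg_b)$, so the premise $\R_e \vdash \rg{P}{c}{Q}$ and Definition~\ref{def:rg-semantics} give $Q(\stl_f,\stg_b)$. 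The commit phase is a chain of $\R_c$-steps $\stg_b \rstepsto \cdots \rstepsto \stg_c$ ending just before the commit, so iterating $\stable(\R_c,Q)$ along the chain gives $Q(\stl_f,\stg_c)$; the premise $\forall \stl,\stg.~Q(\stl,\stg) \Rightarrow G(\stg,\stl\rhd\stg)$ then yields $G(\stg_c,\stl_f\rhd\stg_c)$, which is exactly the transition of the \rulelabel{E-Commit} step, completing (b). Finally, $\stg_c$ is reached from $\stg_0$ by $R$-steps alone (the program $\stepsto$-steps before the commit do not alter the global state), so $\stable(R,I)$ and the lemma preceding this theorem give $I(\stg_c)$, and the premise $\forall\stg,\stg'.~I(\stg)\wedge G(\stg,\stg')\Rightarrow I(\stg')$ applied to $G(\stg_c,\stg_f)$ with $\stg_f=\stl_f\rhd\stg_c$ gives $I(\stg_f)$, completing (a).
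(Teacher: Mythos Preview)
Your proof follows the paper's argument closely: the same phase decomposition (start, body, commit), the same key ``factoring'' step showing that each environment $R$-step in the body (resp.\ commit) phase satisfies $\I_e$ (resp.\ $\I_c$) by backward induction from the next transaction step using $\stable(R,\I)$, and the same assembly via the local judgment $\R_e \vdash \rg{P}{c}{Q}$ followed by $\stable(\R_c,Q)$ and the $Q\Rightarrow G$ and $I\wedge G\Rightarrow I$ premises. You omit only two bookkeeping phases the paper makes explicit --- environment steps \emph{before} \rulelabel{E-Txn-Start} fires and \emph{after} \rulelabel{E-Commit}, both dispatched by $\stable(R,I)$ --- so your ``$\stg_0$'' should be taken as the state at which \rulelabel{E-Txn-Start} actually fires (with $I$ propagated there from the initial state), and $I(\stg_f)$ must likewise be pushed through any trailing $R$-steps.
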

\begin{proof}
\begin{mathpar}
\begin{array}{lc}
  \stable(R, I) & HI\\
  \stable(R,\I) & H\I \\
  P(\stl,\stg) \Leftrightarrow \stl=\emptyset \wedge I(\stg) & HP\\ 
  \R_e(\stl,\stg,\stg') \Leftrightarrow \exists \stg_1. R(\stg, \stg') \wedge \I_e(\stl, \stg_1, \stg) \wedge  \I_e(\stl, \stg_1, \stg') & H\R_l \\
  \R_e\vdash \rg{P}{c}{Q} & Hc\\
  \R_c(\stl,\stg,\stg') \Leftrightarrow \exists \stg_1. R(\stg, \stg') \wedge \I_c(\stl, \stg_1, \stg) \wedge  \I_c(\stl, \stg_1, \stg') & H\R_c \\
  \stable(\R_c,Q) & HQ \\
  \forall \stl,\stg.~ Q(\stl,\stg) \Rightarrow 
    G(\stg, \stl \rhd \stg)\spc  & HQG \\
  \forall \stg,\stg'.~I(\stg) \wedge G(\stg,\stg') \Rightarrow I(\stg') & HG\\
\end{array}
\end{mathpar}

Let $t_s = \ctxn{i}{\I}{c}$. Consider $\stg$ such that $I(\stg)$, and let $(t_s, \stg) \rightarrow_{R}^{n} (\texttt{SKIP}, \stg^{'})$. We have to show (1) $I(\stg^{'})$ and (2) $\texttt{step-guaranteed}(R, G, t_s, \stg)$. We break down the sequence of reductions into four parts :

\begin{itemize}
\item $\pi_1 = (t_s, \stg) \rightarrow_{R}^{n_1} (t_s, \stg_1) \rightarrow (t, \stg_1)$, where initially only the environment takes steps and the last step in the sequence is the start of the transaction using the rule E-Txn-Start. 
\item $\pi_2 = (t, \stg_1) \rightarrow_{R}^{n_2} (t^{'}, \stg_2)$, which begins from $t$ taking its first step at state $\stg_1$ and ends at the first configuration where $t^{'}.c = \texttt{SKIP}$.
\item $\pi_3 = (t^{'}, \stg_2) \rightarrow_{R}^{n_3} (\texttt{SKIP}, \stg_3)$ which ends at the step where $t$ commits.
\item $\pi_4 = (\texttt{SKIP}, \stg_3) \rightarrow_{R}^{n_4} (\texttt{SKIP}, \stg^{'})$ where only the environment takes a step.
\end{itemize}

In the sequence $\pi_1$, $R^{n_1}(\stg, \stg_1)$. By $I(\stg)$, $HI$ and Lemma 3.3, $I(\stg_1)$. By the rule E-Txn-Start, $t.\stl = \phi, t.\stg = \stg_1$ and $t.c = c$. Hence $P(t.\stl, \stg_1)$.

Expanding the definition of the assertion $Hc$ and instantiating it with $\stg = \stg_1$ and $\stg^{'} = \stg_2$, we would get $Q(t^{'}.\stl, \stg_2)$. However, the environment steps in sequence $\pi_2$ are in $R$, while the environment steps in assertion $Hc$ are in $\R_e$. By definition of $\R_e$, every step of $\R_e$ corresponds to a unique step in $R$. We will now show that only those environment steps in R which correspond to steps in $\R_e$ can happen in the sequence $\pi_2$. We will show this in two steps. In the first step, we will prove that for all configurations $(t_p, \stg_p)$ in the sequence $\pi_2$ except possibly the last configuration, $\I_e(t_p.\stl, t_p.\stg, \stg_p)$. 

We will prove this by contradiction. Assume that there is a configuration $(t_1, \stg_b)$ such that $\neg \I_e( t_1.\stl, t_1.\stg, \stg_b)$. Let $(t_1, \stg_{b}^{'}) \rightarrow (t_{1}^{'}, \stg_{b}^{'})$ be the next step in $\pi_2$ taken by the transaction. We know that this step always exists because the last step in $\pi_2$ is taken by the transaction.  Then $\mathbb{I}_e(t_1.\stl, t_1.\stg, \stg_{b}^{'})$. All steps between $(t_1, \stg_b)$ and $(t_1, \stg_{b}^{'})$ are taken by the environment, i.e. $R^{k}(\stg_b, \stg_{b}^{'})$ for some $k$. However,  $\neg \I_e(t_1.\stl, t_1.\stg, \stg_b)$, the assertion $H\I$ and a simple induction on $k$ would imply that $\neg \I_e(t_1.\stl, t_1.\stg, \stg_{b}^{'})$. This is a contradiction. Hence, for all configurations $(t_p, \stg_p)$ in the sequence $\pi_2$ except possibly the last configuration, $\I_e(t_p.\stl, t_p.\stg, \stg_p)$.

Now, we will show that every environment step in $\pi_2$ corresponds to a step in $\R_e$. Assume that $(t_1, \stg_a) \rightarrow_R (t_1, \stg_b)$ is an environment step such that $R(\stg_a, \stg_b)$. Then, we know that $\I_e(t_1.\stl, t_1.\stg, \stg_b)$ and $\I_e(t_1.\stl, t_1.\stg, \stg_a)$. Hence, $t_1.\stg$ provides the existence of $\stg_1$ in the definition of $\R_e$. Thus, $\R_e(t_1.\stl, \stg_a, \stg_b)$. This implies that we can use $Hc$ and make the assertion $Q(t^{'}.\stl, \stg_2)$.

Note that $t^{'}.\stg = \stg_2$. Also, since all the changes in the global database state have so far been made by the environment, $I(\stg_2)$.

$\pi_3 = (t^{'}, \stg_2) \rightarrow_{R}^{n_3 - 1} (t^{'}, \stg_{2}^{'}) \rightarrow (\texttt{SKIP}, \stg_3)$, where the first $n_3-1$ steps are only performed by the environment. Since the transaction commits at state $\stg_{2}^{'}$, by the E-Commit rule, $\I_c(t^{'}.\stl, \stg_2, \stg_{2}^{'})$. We will now show that all environment steps in the above sequence must be correspond to steps in $\R_c$. Again, we will show this in two steps. Let $m = n_3 - 1$ and $(t^{'}, \stg_2) \rightarrow_R (t^{'}, \stg_{21}) \rightarrow_R (t^{'}, \stg_{22}) \ldots \rightarrow_R (t^{'}, \stg_{2m}) \rightarrow (\texttt{SKIP}, \stg_3)$. We will show that $\I_c(t^{'}.\stl, \stg_2, \stg_{2k})$ for all $k, 1 \leq k \leq m$. 

We will prove this by contradiction. Suppose for some $i$, $\neg \I_c(t^{'}.\stl, \stg_2, \stg_{2i})$. Consider $j$ such that $R^{j}(\stg_{2i}, \stg^{'})$. Then, by $H\I$ and a simple induction on $j$, we can show that $\neg \I_c(t^{'}.\stl, \stg_2, \stg_{2}^{'})$. However, this is a contradiction. Hence, $\forall k$, $\I_c(t^{'}.\stl, \stg_2, \stg_{2k})$.

Now, we will show that every environment step in $\pi_3$ corresponds to a step in $\R_c$. Consider the step $(t^{'}, \stg_{2k}) \rightarrow_R (t^{'}, \stg_{2(k+1)})$. We have $\I_c(t^{'}.\stl, \stg_2, \stg_{2(k+1)})$ and $\I_c(t^{'}.\stl, \stg_2, \stg_{2k})$. Hence, $\stg_2$ provides the existence of $\stg_1$ in the definition of $\R_c$. Thus, $\R_c(t^{'}.\stl, \stg_{2k}, \stg_{2(k+1)})$.

By $HQ$, $Q(t^{'}.\stl, \stg_2)$ and by $\stable(\R_c, Q)$ we have $Q(t^{'}.\stl, \stg_{2}^{'})$. Since all state changes upto $\stg_{2}^{'}$ have been made by the environment, $I(\stg_{2}^{'})$. By $HQG$, $G(\stg_{2}^{'}, t^{'}.\stl \rhd \stg_{2}^{'})$. By the E-Commit rule, $\stg_3 = (t^{'}.\stl \rhd \stg_{2}^{'})$. Hence, $G(\stg_{2}^{'}, \stg_3)$. All the steps of the transaction except the commit step do not change the global database state and the commit step satisfies $G$. This proves the \texttt{step-guaranteed} assertion. Finally, by $HG$, $I(\stg_3)$.

%{\color{red}Note : Since E-Commit is applied at $(t^{'}, \sigma_{2}^{'})$ and $t^{'}.\sigma_g = \sigma_2$, $\mathbb{I}_c(t^{'}.\sigma_l, \sigma_2, \sigma_2^{'})$. $\sigma_{2}^{'}$ can be reached from $\sigma_2$ by any number of environment steps, each of which may not obey $\mathbb{I}_c$. That is, there may exist $\sigma_3$ such that $R(\sigma_2, \sigma_3)$ and $\neg \mathbb{I}_c(t^{'}.\sigma_l, \sigma_2, \sigma_3)$. Such states will be overlooked by the $\texttt{stable}_c$ function}.

All the steps in $(\texttt{SKIP}, \stg_3) \rightarrow_{R}^{n_4} (\texttt{SKIP}, \stg^{'})$ are performed by the environment. Since $I(\stg_3)$, by $HI$ and Lemma 3.3, $I(\stg^{'})$.
\end{proof}

%\RULE
%{
%  R \vdash \hoare{P'}{c}{Q}\spc
%  \stable(R,P')\\
%  \hspace*{-0.2in}P'(\stl,\stg) \Leftrightarrow P(\stl,\stg) \wedge
%  x = \{r' \,|\, \exists(r\in\Delta).~ [r/x]e_2 =\C{true}\} \\
%}
%{
%  R \vdash \hoare{P}{\selecte{\lambda x.e}}{Q}
%}

% s = \{r\in\Delta \,|\, \eval([r/x]e)=\C{true}\}\spc
%  c' = [s/y]c
\begin{theorem}
RG-Select is sound
\end{theorem}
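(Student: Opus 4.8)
} The plan is to unfold the semantics of the transaction-local judgment and push a reduction sequence of $\lete{y}{\selecte{\lambda x.e}}{c}$ through the single \rulelabel{E-Select} step that this command can ever perform, leaving a residual reduction of the body $c$ to which the premise $\R \vdash \hoare{P'}{c}{Q}$ applies. So fix a transaction record $t$ with $t.c = \lete{y}{\selecte{\lambda x.e}}{c}$, assume $P(t.\stl,\stg)$, and suppose $(t,\stg) \longrightarrow_{\R}^{n} (t_2,\stg') \longrightarrow (t',\stg')$ with $t'.c = \cskip$; the goal is $Q(t'.\stl,\stg')$. First I would observe that $\lete{y}{\selecte{\lambda x.e}}{c}$ admits no local reduction but \rulelabel{E-Select}, so every step before the transaction's first step is an environment step that leaves $t$ (hence $t.\stl$ and $t.c$) unchanged; the sequence therefore factors as $(t,\stg) \longrightarrow_{\R}^{m} (t,\stg_1) \longrightarrow (\bar t,\stg_1) \longrightarrow_{\R}^{k} (t_2,\stg') \longrightarrow (t',\stg')$, where the $(m{+}1)$-st step is \rulelabel{E-Select} fired inside \rulelabel{E-Txn}. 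Reading that step off: $\bar t.\stl = t.\stl$, $\bar t.c = [s/y]c$, and $s = \{r \mid r\in\sigma \wedge \eval([r/x]e)\}$ for $\sigma$ the global state the transaction witnesses there --- which, by \rulelabel{E-Txn}, is the current global state $\stg_1$. (The sub-case $m=0$ is immediate; the sub-case with no transaction step is vacuous, since then $t'.c = t.c \neq \cskip$.)

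The argument proper is then three short moves. \emph{(i) Carry $P$ across the environment prefix.} The $m$ prefix steps are $\R$-steps with $\stl$ held at $t.\stl$, so $\stable(\R,P)$ plus the one-line induction on $m$ (the ternary analogue of Lemma 3.3) gives $P(t.\stl,\stg_1)$. \emph{(ii) Recover $P'$ at the \rulelabel{E-Select} point.} The defining equivalence of $P'$, read at $(\bar t.\stl,\stg_1) = (t.\stl,\stg_1)$ with the \texttt{SELECT}-bound variable set to $s$, reduces to the conjunction of $P(t.\stl,\stg_1)$ --- which holds by (i) --- and $s = \{r \mid r\in\stg_1 \wedge \eval([r/x]e)\}$ --- which holds by the construction of $s$ above; hence $[s/y]P'(\bar t.\stl,\stg_1)$. \emph{(iii) Apply the premise to the residual reduction.} Since $y$ is free in $c$ and $P'$ but bound --- hence absent --- in $Q$, the premise $\R \vdash \hoare{P'}{c}{Q}$ is (as usual for RG judgments with free logical variables) implicitly quantified over $y$; instantiating at $y := s$ gives $\R \vdash \hoare{[s/y]P'}{[s/y]c}{Q}$. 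Feeding it the residual reduction $(\bar t,\stg_1) \longrightarrow_{\R}^{k} (t_2,\stg') \longrightarrow (t',\stg')$, whose command component starts at $\bar t.c = [s/y]c$ and ends at $\cskip$, together with $[s/y]P'(\bar t.\stl,\stg_1)$ from (ii), yields $Q(t'.\stl,\stg')$, which is the goal.

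The real obstacle is not any single move but the bookkeeping that makes (ii)--(iii) precise, namely reconciling the operational and logical views of \texttt{SELECT}. Operationally, \rulelabel{E-Select} substitutes the \emph{concrete} set $s$ into $c$, whereas $P'$ mentions the \texttt{SELECT}-bound variable symbolically and ``$\Delta$'' abstractly; one must (a) fix the free-variable convention for the premise judgment so that instantiation by $s$ is legitimate and $Q$ is untouched, and, more delicately, (b) argue that $s$ is computed against exactly the global state at which $P'$ is evaluated. Point (b) is where care with the operational semantics is needed: it follows from reading \rulelabel{E-Txn} so that the transaction's recorded state is refreshed to the state it currently observes before each local step, so the $\Delta$ of $P'$ coincides with the state \rulelabel{E-Select} queries; for snapshot-style $\I_e$ this alignment is automatic, since $\I_e$ already pins the observed state. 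Beyond this the proof is routine: no case analysis on the shape of $c$ is needed, and --- unlike the \rulelabel{RG-Txn} case --- no argument about preservation of $\I$ across environment steps is required, because in the transaction-local judgment the rely relation is already the isolation-constrained $\R$.
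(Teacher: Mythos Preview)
Your proposal is correct and follows essentially the same approach as the paper's proof: factor the reduction into an environment prefix, the single \rulelabel{E-Select} step, and the residual execution of $c$; use $\stable(\R,P)$ to transport $P$ across the prefix; observe that $P'$ holds at the point of \rulelabel{E-Select}; and then discharge the residual via the premise $\R \vdash \hoare{P'}{c}{Q}$ after substituting the concrete result set for the bound variable. The paper packages your step (iii) as the equivalence $\R \vdash \hoare{P \wedge y = s}{c}{Q} \Leftrightarrow \R \vdash \hoare{P}{[s/y]c}{Q}$, which it simply states as ``holds trivially'', whereas you spell it out as free-variable instantiation of the premise judgment; these are the same device. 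Your discussion of the alignment between the $\Delta$ appearing in $P'$ and the state actually queried by \rulelabel{E-Select} is more careful than the paper, which just writes $s = \{r\in\Delta_1 \mid \ldots\}$ without comment.
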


\begin{proof}
Given the premises of RG-Select, $t, \stg$ such that $t.c \sim \lete{x}{\selecte{\lambda y.e}}{c}$, $(t, \stg) \rightarrow_{\R}^{m} (t_2, \stg^{'}) \rightarrow (t^{'}, \stg^{'})$, $P(t.\stl, \stg)$ and $t^{'}.c = \texttt{SKIP}$, we have to show that $Q(t^{'}.\stl, \stg^{'})$. The reduction sequence can be broken down into following parts:

\begin{itemize}
\item $\pi_1 = (t, \stg) \rightarrow_{\R}^{n_1} (t, \stg_1) \rightarrow (t_1, \stg_1)$ where initially only the environment takes steps, and ends with the application of the E-Select rule.
\item $\pi_2 = (t_1, \stg_1) \rightarrow_{\R}^{n_2} (t_2, \stg^{'}) \rightarrow (t^{'}, \stg^{'}) $ which corresponds to the execution of $\C{c}$
\end{itemize}

In $\pi_1$, $\R^{n_1}(t.\stl, \stg, \stg_1)$. By $P(t.\stl, \stg)$ and $\stable(\R, P)$, we get $P(t.\stl, \stg_1)$. By applying the E-Select rule, $t_1.\stl = t.\stl$, $t_1.c = [s/x]c$, where $s = \{r\in\Delta_1 \,|\, \eval([r/y]e)=\C{true}\}$. By definition of $P'$, $P'(t_1.\stl, \stg_1)$. The following property holds trivially:

$$
\R \vdash \hoare{P \wedge x = s} {c} {Q} \Leftrightarrow \R \vdash \hoare{P} {[s/x]c} {Q}
$$

Since $\R \vdash \hoare{P'}{c}{Q}$, by the above property, $\R \vdash \hoare{P}{[s/x]c}{Q}$. Since $P(t_1.\stl, \stg_1)$, by definition of $\R \vdash \hoare{P}{[s/x]c}{Q}$, we get $Q(t^{'}.\stl, \stg')$.

%Suppose the transaction next takes a step at state $\stg_1$, i.e. for some $k$ $R^{k}(\stg, \stg_1)$ and the rest of reduction is $(t_1, \stg_1) \rightarrow (t_{1}^{'}, \stg_1) \rightarrow_{R}^{m^{'}}  (t_2, \stg^{'}) \rightarrow (t^{'}, \sigma^{'})$. Now, $P'(t_1.\stl, \stg)$ and by $\stable(R, P')$, $P'(t_1.\stl, \stg_1)$. Since $P'(t_1.\stl, \stg)$ binds $y$ to $s$ and $t_1.c = [s/y]c$, by the assertion $R \vdash \hoare{P'}{c}{Q}$, $Q(t^{'}.\stl, \stg^{'})$.

\end{proof}

\begin{theorem}
RG-Update is sound
\end{theorem}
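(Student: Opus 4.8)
The plan is to mirror the structure of the soundness argument already given for \rulelabel{RG-Select}, exploiting the fact that the \C{UPDATE} case is strictly simpler: the surface command $\updatee{\lambda x.e_1}{\lambda x.e_2}$ reduces to \cskip\ in a single transaction-local step, so there is no ``execute the continuation'' phase and no post-step interference to account for. Unfolding the semantics of the transaction-local judgment, I would fix a transaction $t$ with $t.c = \updatee{\lambda x.e_1}{\lambda x.e_2}$, states $\stg,\stg'$ with $P(t.\stl,\stg)$, and a reduction $(t,\stg)\rightarrow_{\R}^{n}(t_2,\stg')\rightarrow(t',\stg')$ with $t'.c=\cskip$; the goal is $Q(t'.\stl,\stg')$.

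First I would argue that \rulelabel{E-Update} is necessarily the unique, and final, transaction-local step of the sequence: the command $\updatee{\lambda x.e_1}{\lambda x.e_2}$ can change only via \rulelabel{E-Update}, which yields \cskip, and \cskip\ admits no further transaction-local reduction (so it cannot be the pen-ultimate command before a non-environment final step). Hence every step inside $\rightarrow_{\R}^{n}$ is an environment step, $t_2.\stl = t.\stl$, $t_2.\stg = \stg'$, and $\R^{n}(t.\stl,\stg,\stg')$ along a chain whose first (local-state) component is fixed. Combining $P(t.\stl,\stg)$ with the first premise $\stable(\R,P)$ and the step-indexed stability lemma (adapted to the ternary relation $\R$, whose first argument is constant along such a chain) gives $P(t.\stl,\stg')$.

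Next I would read off the effect of the \rulelabel{E-Update} step $(t_2,\stg')\rightarrow(t',\stg')$: it gives $t'.\stg=\stg'$ and $t'.\stl = t.\stl\cup s$ with $s=\{\,r'\mid \exists(r\in\stg').~\eval([r/x]e_2)=\C{true}\conj r'=\langle[r/x]e_1\with\idf=r.\idf;\,\txnf=i;\,\delf=r.\delf\rangle\,\}$. A small bridging observation lines this up with the set in the second premise of \rulelabel{RG-Update}, which instead fixes $\delf=\C{false}$: records resident in a global store always have $\delf=\C{false}$ (the flush operation $\rhd$ discards records flagged for deletion), so $r.\delf=\C{false}$ for $r\in\stg'$ and the two sets coincide. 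Then I instantiate the second premise with $\stl := t.\stl$, $\stg := \stg'$, $\stl' := t'.\stl$; since $P(t.\stl,\stg')$ holds and $t'.\stl$ has exactly the stipulated shape, this yields $Q(t'.\stl,\stg')$, as required. The side-condition $\dom(t_2.\stl)\cap\dom(s)=\emptyset$ on \rulelabel{E-Update} comes for free: we are reasoning about an execution in which the step actually fires, and if that condition failed the transaction would be stuck, the antecedent of the judgment vacuous, and there would be nothing to prove.

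The argument carries essentially no creative content; the only points needing care are (i) justifying that \rulelabel{E-Update} is the unique and last transaction-local step, so that stability of $Q$ is genuinely not needed, and (ii) the bookkeeping to reconcile the $\delf$ field between the operational and reasoning rules and to propagate $P$ across the environment prefix via the step-indexed ternary-rely stability lemma. I do not anticipate a substantive obstacle.
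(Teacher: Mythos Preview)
Your proposal is correct and follows essentially the same approach as the paper's own proof: identify the last step as \rulelabel{E-Update}, propagate $P$ across the environment prefix via $\stable(\R,P)$, then instantiate the second premise of \rulelabel{RG-Update} at $\stg'$ to conclude $Q$. Your explicit reconciliation of the $\delf$ field (the operational rule writes $\delf=r.\delf$ while the reasoning rule writes $\delf=\C{false}$) is a point the paper simply glosses over, so your version is in fact slightly more careful.
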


\begin{proof}
Given the premises of RG-Update, $t, \stg$ such that $t.c = \updatee{\lambda x.e_1}{\lambda x.e_2}$, $(t, \stg) \rightarrow_{\R}^{m} (t_2, \stg^{'}) \rightarrow (t^{'}, \stg^{'})$, $P(t.\stl, \stg)$ and $t^{'}.c = \texttt{SKIP}$, we have to show that $Q(t^{'}.\stl, \stg^{'})$. 

Since only a single step needs to be taken by the transaction (by applying the E-Update rule), $t_2.c = t.c$, $t_2.\stl = t.\stl$ and $\R^m(t.\stl, \stg, \stg')$. By $\stable(\R, P)$, $P(t_2.\stl, \stg')$. According to E-Update,  $t^{'}.\stl = t_2.\stl \cup  \{r' \,|\, \exists(r\in \stg').~ \eval([r/x]e_2)=\C{true} \conj r'=\langle[r/x]e_1 \with \idf=r.\idf;\,\txnf=i;\,\delf=\C{false}\rangle\ \}$. From the premise of RG-Update, we know that 

\begin{eqnarray*}
\forall\stl,\stl',\stg.~P(\stl,\stg) \conj 
  \stl' = \stl \cup \{r' \,|\, \exists(r \in \stg).~ [r/x]e_2 =\C{true} \\
  \conj r'=\langle[r/x]e_1 \with \idf=r.\idf;\,\txnf=i;\,\delf=\C{false}\rangle\} \Rightarrow 
  Q(\stl',\stg)
\end{eqnarray*}

Instantiating the above statement with $\stl = t_2.\stl$ and $\stg = \stg'$, we get $Q(\stl',\stg')$. However, $\stl^{'} = t^{'}.\stl$. Hence, $Q(t^{'}.\stl,\stg')$.
\end{proof}

\begin{theorem}
RG-Insert is sound
\end{theorem}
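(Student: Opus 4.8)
}
The plan is to reuse, almost verbatim, the template already applied to \rulelabel{RG-Update} and \rulelabel{RG-Delete}, since $\inserte{x}$ is likewise a command that reaches $\cskip$ in a single local reduction step. Concretely, I would fix a transaction $t$ with $t.c = \inserte{x}$ and a global state $\stg$ with $P(t.\stl,\stg)$, together with an interleaved reduction $(t,\stg) \rightarrow_{\R}^{m} (t_2,\stg') \rightarrow (t',\stg')$ ending in $t'.c = \texttt{SKIP}$, and the goal is to establish $Q(t'.\stl,\stg')$.

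First I would observe that, because $\inserte{x}$ is consumed entirely by one application of \rulelabel{E-Insert}, all $m$ steps in the prefix of the reduction must be environment steps; hence $t_2.c = t.c = \inserte{x}$, $t_2.\stl = t.\stl$, and $\R^m(t.\stl,\stg,\stg')$. From $P(t.\stl,\stg)$ and $\stable(\R,P)$ (propagated along the $m$ environment steps exactly as in the \rulelabel{RG-Update} proof, via the step-indexed stability argument), I obtain $P(t_2.\stl,\stg')$.

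Next I would unfold the single transaction step via \rulelabel{E-Insert}: it chooses some fresh $j \notin \dom(t_2.\stl \cup \stg')$ and produces $t'.\stl = t_2.\stl \cup \{\langle x \with \idf=j;\, \txnf=i;\, \delf=\C{false}\rangle\}$. Instantiating the second premise of \rulelabel{RG-Insert} with $\stl := t_2.\stl$, $\stg := \stg'$, this particular $j$, and the transaction identifier $i$, the already-established facts $P(t_2.\stl,\stg')$ and $j \notin \dom(t_2.\stl \cup \stg')$ discharge its antecedent, yielding $Q(t'.\stl,\stg')$, as required.

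I do not expect a genuine obstacle here: the proof is entirely routine, and the only mildly delicate point — lifting $\stable(\R,P)$ over a multi-step environment interference $\R^m$ — is handled by the same inductive lemma used for the other single-step SQL rules. The remaining care is purely bookkeeping: making sure the freshly chosen id $j$ and the transaction id $i$ produced by the \rulelabel{E-Insert} instance are exactly the witnesses fed to the universally quantified premise of \rulelabel{RG-Insert}.
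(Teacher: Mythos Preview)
Your proposal is correct and follows essentially the same approach as the paper's own proof: both argue that the $m$ prefix steps are all environment steps (since \rulelabel{E-Insert} consumes the command in one shot), propagate $P$ across them via $\stable(\R,P)$, then instantiate the second premise of \rulelabel{RG-Insert} with the fresh $j$ and the local/global states produced by \rulelabel{E-Insert} to obtain $Q(t'.\stl,\stg')$. The only additional remark the paper makes is that $x$ must already be bound to a concrete record $r$ at this point (via an enclosing \C{LET}), which is implicit in your treatment.
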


\begin{proof}
Given the premises of RG-Insert, $t, \stg$ such that $t.c = \inserte{x}$, $(t, \stg) \rightarrow_{\R}^{m} (t_2, \stg^{'}) \rightarrow (t^{'}, \stg^{'})$, $P(t.\stl, \stg)$ and $t^{'}.c = \texttt{SKIP}$, we have to show that $Q(t^{'}.\stl, \stg^{'})$. 

Since only a single step needs to be taken by the transaction (by applying the E-Insert rule), $t_2.c = t.c$, $t_2.\stl = t.\stl$ and $\R^m(t.\stl, \stg, \stg')$. By $\stable(\R, P)$, $P(t_2.\stl, \stg')$. Since the transaction takes is able to take the step according to the rule E-Insert, $P$ must assert that $x$ is bound to a record $r$, and $x$ must have been substituted with $r$, thus $t^{'}.\stl = t_2.\stl \cup \{ \langle r \;\C{with}\; \idf=j;\,\txnf=i;\,\delf=\C{false} \rangle\}$ and $j \not\in \dom(t_2.\stl \cup \stg')$. From the premise of RG-Insert, we know that 

$$
\forall\stl,\stl',\stg,i.~P(\stl,\stg) \conj j \not\in
  \dom(\stl\cup\stg) 
  \conj \stl'=\stl \cup 
  \{\langle x \;\C{with}\; \idf=j;\,\txnf=i;\,\delf=\C{false} \rangle\} \Rightarrow 
  Q(\stl',\stg)
$$

Instantiating the above statement with $\stl = t_2.\stl$ and $\stg = \stg'$, we get $Q(\stl',\stg')$. Since $x=r$, $\stl^{'} = t^{'}.\stl$. Hence, $Q(t^{'}.\stl,\stg)$.
\end{proof}

\begin{theorem}
RG-Delete is sound
\end{theorem}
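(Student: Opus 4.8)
The plan is to follow the template already used for the other single-step base cases,
\rulelabel{RG-Insert} and \rulelabel{RG-Update}: a \C{DELETE} reduces to \C{SKIP} in exactly one
transaction step via \rulelabel{E-Delete}, and that step merely adjoins a fixed set of tombstone
records to the transaction-local store $\stl$. Concretely, I would fix a transaction $t$ with
$t.c = \deletee{\lambda x.e}$, a global state $\stg$, and an interleaved reduction
$(t,\stg) \rightarrow_{\R}^{m} (t_2,\stg') \stepsto (t',\stg')$ with $P(t.\stl,\stg)$ and
$t'.c = \C{SKIP}$, and then establish $Q(t'.\stl,\stg')$.

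The key steps, in order, are: (i) since $t.c \neq \C{SKIP}$ and \C{DELETE} becomes \C{SKIP} after a
single transaction step, the transaction contributes precisely the trailing $\stepsto$ step, so all
$m$ preceding steps are environment steps; hence $t_2.c = t.c$, $t_2.\stl = t.\stl$, and
$\R^m(t.\stl,\stg,\stg')$. (ii) From the premise $\stable(\R,P)$, together with the step-indexed
form of stability used in the \rulelabel{RG-Txn} and \rulelabel{RG-Update} proofs, we get
$P(t_2.\stl,\stg')$. (iii) Reading off \rulelabel{E-Delete} taken at state $\stg'$ gives
$t'.\stl = t_2.\stl \cup s$ with $s = \{r' \,|\, \exists(r\in\stg').~\eval([r/x]e)=\C{true}
\conj r' = \langle r \with \delf=\C{true};\,\txnf=i\rangle\}$; its side condition
$\dom(s)\cap\dom(t_2.\stl)=\emptyset$ held by assumption of the step and is otherwise unused.
(iv) Instantiating the second premise of \rulelabel{RG-Delete} with $\stl := t_2.\stl$,
$\stg := \stg'$, and the induced $\stl' = t_2.\stl \cup s = t'.\stl$ discharges $Q(t'.\stl,\stg')$.

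I do not anticipate a genuine obstacle; the argument is a near-verbatim transcription of the
\rulelabel{RG-Insert}/\rulelabel{RG-Update} proofs, with \emph{no} induction on program structure
(this is a leaf of that induction) and only the trivial single-step reasoning above. The one point
worth checking is that the set $s$ produced operationally by \rulelabel{E-Delete} (quantifying over
$r \in \stg'$) agrees with the set written in the premise of \rulelabel{RG-Delete} (which, by the
convention of Fig.~\ref{fig:rg-rules}, quantifies over $r$ drawn from the current global state
$\stg$, instantiated here to $\stg'$); once $\stg'$ is fixed the two descriptions are syntactically
identical, so the premise applies without modification.
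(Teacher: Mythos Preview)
Your proposal is correct and follows essentially the same approach as the paper's own proof: split the reduction into $m$ environment steps followed by a single \rulelabel{E-Delete} step, use $\stable(\R,P)$ to transport $P$ to the intermediate state $\stg'$, read off the resulting $\stl'$ from \rulelabel{E-Delete}, and instantiate the semantic premise of \rulelabel{RG-Delete} at $\stl = t_2.\stl$, $\stg = \stg'$. The paper's proof is virtually identical in structure and in the instantiations chosen.
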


\begin{proof}
Given the premise of RG-Delete $t, \stg$ such that $t.c = \deletee{\lambda x.e}$, $(t, \stg) \rightarrow_{\R}^{m} (t_2, \stg^{'}) \rightarrow (t^{'}, \sigma^{'})$, $P(t.\stl, \stg)$ and $t^{'}.c = \texttt{SKIP}$, we have to show that $Q(t^{'}.\stl, \stg^{'})$. 

Since only a single step needs to be taken by the transaction (by applying the E-Delete rule), $t_2.c = t.c$, $t_2.\stl = t.\stl$ and $\R^m(t.\stl, \stg, \stg')$. By $\stable(\R, P)$, $P(t_2.\stl, \stg')$. According to E-Delete,  $t^{'}.\stl = t_2.\stl \cup \{r' \,|\, \exists(r\in\stg').~ \eval([r/x]e)=\C{true} \conj r'=\langle r \;\C{with}\; \idf=r.\idf;\,\txnf=i;\,\delf=\C{true}\}\}$. From the premise of RG-Delete, we know that 

$$
\forall \stl,\stl',\stg.~P(\stl,\stg) \conj 
  \stl' = \stl \cup \{r' \,|\, \exists(r\in\Delta).~ [r/x]e=\C{true}
        \conj r'=\langle r \;\C{with}\; \idf=r.\idf;\,\txnf=i;\,
        \delf=\C{true} \rangle\}  \Rightarrow 
  Q(\stl',\stg)
$$

Instantiating the above statement with $\stl = t_2.\stl$ and $\stg = \stg'$, we get $Q(\stl',\stg')$. However, $\stl^{'} = t^{'}.\stl$. Hence, $Q(t^{'}.\stl,\stg')$.
\end{proof}

%\RULE
%{
%  \stable(R,Q)\spc
%  \stable(R,I)\\
%  P \wedge y=\emptyset \Rightarrow I\spc
%  R \vdash \hoare{I \wedge z\in x}{c}{Q_c}\\
%  Q_c \wedge z\in y \Rightarrow I\spc
%  I \wedge y=x \Rightarrow Q
%}
%{
%  R \vdash \hoare{P}{\foreache{x}{\lambda y.\lambda z.c}}{Q}
%}

\begin{theorem}
RG-Foreach is sound
\end{theorem}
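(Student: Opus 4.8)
The argument will follow the same template as the soundness proofs of the other transaction-local rules, the only new ingredient being an auxiliary lemma about the runtime form $\foreachr{s_1}{s_2}{\lambda y.\lambda z.c}$. Fix a transaction $t$ with $t.c = \foreache{x}{\lambda y.\lambda z.c}$, global states $\stg,\stg'$ with $P(t.\stl,\stg)$, and an interleaved reduction $(t,\stg)\,\rightarrow_{\R}^{*}\,(t_2,\stg')\rightarrow(t',\stg')$ whose final step is a transaction step and with $t'.c = \cskip$; the goal is $Q(t'.\stl,\stg')$. I will first split this reduction at the unique application of \rulelabel{E-Foreach1}: a prefix of environment steps only, over which $P$ is preserved because $\stable(\R,P)$, followed by the single step rewriting the loop to $\foreachr{\emptyset}{x}{\lambda y.\lambda z.c}$ without touching $\stl$. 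Since $P \Rightarrow [\emptyset/y]\psi$, the resulting configuration satisfies $[\emptyset/y]\psi$, and the remainder of the reduction is handled by the auxiliary lemma instantiated with $s_1 = \emptyset$ and $s_2 = x$.

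\paragraph{The auxiliary lemma.}
For all $s_1,s_2$ with $s_1 \uplus s_2 = x$: if $u.c = \foreachr{s_1}{s_2}{\lambda y.\lambda z.c}$, the current state satisfies $[s_1/y]\psi$, and $(u,\sigma)\,\rightarrow_{\R}^{*}\,(u_2,\sigma')\rightarrow(u',\sigma')$ has a transaction step as its final step with $u'.c = \cskip$, then $[x/y]\psi(u'.\stl,\sigma')$. I will prove it by induction on $|s_2|$. In the base case $s_2 = \emptyset$, the reduction is a run of environment steps --- over which $[s_1/y]\psi$, being an instance of $\psi$, is preserved by $\stable(\R,\psi)$ --- followed by \rulelabel{E-Foreach3}, which produces $\cskip$ with $\stl$ unchanged; since $s_1 = x$ this gives $[x/y]\psi$. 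In the inductive case $s_2 = \{r\}\uplus s_2'$, the reduction begins with environment steps (again preserving $[s_1/y]\psi$), then \rulelabel{E-Foreach2} rewrites the command to $[r/z][s_1/y]c;\,\foreachr{s_1\cup\{r\}}{s_2'}{\lambda y.\lambda z.c}$ while leaving the state fixed. At that point $[s_1/y]\psi$ still holds, and since $r \in s_2 \subseteq x$ also $r \in x$ holds, so the substituted instance $[r/z][s_1/y](\psi \wedge z \in x)$ of the body's precondition is satisfied. Using the substitution property of transaction-local RG judgments --- the same one appealed to in the soundness proof of \rulelabel{RG-Select} --- applied to the premise $\R \vdash \hoare{\psi \wedge z\in x}{c}{Q_c}$, together with the standard decomposition of a sequential-composition reduction (the reasoning that underlies soundness of \rulelabel{RG-Seq}), I will factor the rest of the reduction into (i) a reduction of $[r/z][s_1/y]c$ to $\cskip$, after which $[r/z][s_1/y]Q_c$ holds and hence, by $Q_c \Rightarrow [y\cup\{z\}/y]\psi$, the state satisfies $[(s_1\cup\{r\})/y]\psi$; and (ii) the continuation $\foreachr{s_1\cup\{r\}}{s_2'}{\lambda y.\lambda z.c}$, which --- since $(s_1\cup\{r\})\uplus s_2' = x$ and the state satisfies $[(s_1\cup\{r\})/y]\psi$ --- meets the hypotheses of the induction hypothesis. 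Applying the induction hypothesis yields $[x/y]\psi(u'.\stl,\sigma')$.

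\paragraph{Finishing up, and the main difficulty.}
Composing the two parts, the final configuration satisfies $[x/y]\psi(t'.\stl,\stg')$, and $[x/y]\psi \Rightarrow Q$ delivers $Q(t'.\stl,\stg')$; $\stable(\R,Q)$ is used to absorb any residual environment interference so that the conclusion matches the precise shape required by the judgment semantics. The routine portions are the usual interleaving bookkeeping and the substitution/sequencing plumbing. I expect the delicate point to be the formulation of the auxiliary lemma over the runtime loop form: picking the right generalization --- an arbitrary partition $s_1 \uplus s_2 = x$, so that the invariant instance available at each iteration is exactly $[s_1/y]\psi$ --- and the right well-founded metric ($|s_2|$), so that each iteration's appeal to the body's RG judgment lines up with the loop invariant, and so that the invariant $s_1 \uplus s_2 = x$ maintained across \rulelabel{E-Foreach2} steps supplies the fact ``every processed element lies in $x$'' needed to discharge the $z \in x$ conjunct in the body's precondition.
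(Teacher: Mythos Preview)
Your proposal is correct and follows essentially the same approach as the paper: both arguments establish the instance $[s_1/y]\psi$ of the loop invariant after each iteration and then use $[x/y]\psi \Rightarrow Q$ at the end. The only cosmetic difference is packaging --- you factor the core argument into an auxiliary lemma about the runtime form $\foreachr{s_1}{s_2}{\ldots}$ and induct on $|s_2|$, whereas the paper inducts directly on the iteration index $i$, maintaining $[\{r_1,\ldots,r_i\}/y]\psi$ at the end of each iteration; the two inductions are dual. One small remark: in your structure the base case of the lemma already absorbs the trailing environment steps before \rulelabel{E-Foreach3} using $\stable(\R,\psi)$, so the appeal to $\stable(\R,Q)$ in your ``Finishing up'' paragraph is not actually needed --- the lemma's conclusion is already at the final transaction step, exactly as the judgment semantics requires.
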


\begin{proof}
\begin{mathpar}
\begin{array}{lc}
\stable(\R,Q) & HQ\\
  \stable(\R,\psi) & HI\\
  \stable(\R,P) & HP\\
  P  \Rightarrow [\phi / y]\psi & H1\\
  \R \vdash \hoare{\psi \wedge z\in x}{c}{Q_c} & Hc\\
  Q_c \Rightarrow [y \cup \{z\} / y]\psi & H2\\
  %[x / y] \psi \Rightarrow Q & HF \\
 \end{array}
  \end{mathpar}
Given $t, \stg$ such that $t.c = \foreache{x}{\lambda y.\lambda z.c}$, $(t, \stg) \rightarrow_{\R}^{n} (t_2, \stg^{'}) \rightarrow (t^{'}, \stg^{'})$, $P(t.\stl, \stg)$ and $t^{'}.c = \texttt{SKIP}$, we have to show that $Q(t^{'}.\stl, \stg^{'})$. 

The operational semantics of \texttt{foreach} (E-Foreach1, E-Foreach2, E-Foreach3) essentially execute the command $\texttt{c}$ for a number of iterations, where in each iteration, $\texttt{z}$ is bound to a record $r \in \texttt{x}$, while $\C{y}$ is bound to a set containing records which were bound to $\C{z}$ in previous iterations. $\C{z}$ is bound to a different record in each iteration, and the loop stops when all records in $\C{x}$ are iterated over. 

Assuming that $|x| = s$, the reduction sequence for \texttt{foreach} will have the following structure : 

$$
(t, \stg) \rightarrow_{\R}^{m} (t_{1}, \stg_{1}) \rightarrow_{\R}^{n_1} (t_{2}^{'}, \stg_{2}^{'}) \rightarrow_{\R}^{n_{1}^{'}} (t_{2}, \stg_{2}) \rightarrow_{\R}^{n_2} (t_{3}^{'},\stg_{3}^{'}) \rightarrow_{\R}^{n_{2}^{'}} (t_{3}, \stg_{3}) \ldots (t_{s}, \stg_{s}) \rightarrow_{\R}^{n_s} (t_{s+1}^{'}, \stg_{s+1}^{'}) \rightarrow_{\R}^{l} (t^{'}, \stg^{'})
$$
 
The reduction sequence $\pi_i = (t_{i}, \stg_{i}) \rightarrow_{\R}^{n_i} (t_{i+1}^{'}, \stg_{i+1}^{'})$ corresponds to the execution of the command $\C{c}$ in the $i$th iteration, such that the first and last steps in $\pi_i$ are not environment steps. The sequence $\pi_0 = (t, \stg) \rightarrow_{\R}^{m} (t_{1}, \stg_{1})$ corresponds to the steps E-Foreach1 and E-Foreach2 along with environment steps. Similarly, the sequence $\pi_{i}^{'} = (t_{i+1}^{'}, \stg_{i+1}^{'}) \rightarrow_{\R}^{n_{1}^{'}} (t_{i+1}, \stg_{2})$ corresponds to the execution of the E-Foreach2 step required to prepare the $(i+1)$th iteration along with environment steps.  

Let $x = \{r_1, \ldots, r_s\}$, and assume that the records are picked in the increasing order. Then at the start of the $i$th iteration, $\C{z}$ is bound to $r_i$, while $\C{y}$ is bound to $\{r_1, \ldots, r_{i-1}\}$. We will show that $[\{r_1,\ldots, r_i\} / y]\psi$ holds at the end of iteration $i$, for all $1 \leq i \leq s$. More precisely, we will show $[\{r_1,\ldots, r_i\} / y]\psi(t_{i+1}^{'}.\stl, \stg_{i+1}^{'})$. We will use induction on $i$.

\textbf{Base Case}: The steps E-Foreach1 and E-Foreach2 do not change $\stl$. Also, $P(t.\stl, \stg)$ and $\stable(\R, P)$. Hence, at the end of the sequence $\pi_0$, $P(t_{1}.\stl, \stg_{1})$. By $H1$, this implies $[\phi / y] \psi(t_1.\stl, \stg_1)$. The sequence $\pi_1 = (t_{1}, \stg_{1}) \rightarrow_{\R}^{n_1} (t_{2}^{'}, \stg_{2}^{'})$ corresponds the execution of $\C{c}$ in the first iteration with $\C{z}$ bound to $r_1$ and $\C{y}$ bound to $\phi$. Clearly, $\psi(t_1.\stl, \stg_1) \wedge z \in x$ holds. Hence, by $Hc$, $Q_c(t_{2}^{'}.\stl, \stg_{2}^{'})$. By H2, this implies $[\{r_1\} / y] \psi(t_2^{'}.\stl, \stg_{2}^{'})$.

\textbf{Inductive Case}: Assume that $[\{r_1,\ldots, r_{k-1}\} / y]\psi(t_{k}^{'}.\stl, \stg_{k}^{'})$. The next sequence of reductions $(t_{k}^{'}, \stg_{k}^{'}) \rightarrow_{\R}^{n_{k}^{'}} (t_k, \stg_k)$ only corresponds to the execution of the E-Foreach2 step for the $k$th iteration and environment steps. E-Foreach2 does not change $\stl$, and since $\stable(\R, \psi)$, we get $[\{r_1,\ldots, r_k\} / y]\psi(t_{k}.\stl, \stg_{k})$. At the start of the next iteration, $\C{z}$ is bound to $r_{k}$, and $\C{y}$ is bound to $\{r_1, \ldots, r_{k-1}\}$. Hence, $\psi(t_k.\stl, \stg_k) \wedge z \in x$. By Hc, this implies $Q_c(t_{k+1}^{'}.\stl, \stg_{k+1}^{'})$. By $H2$, this implies $[y \cup z / y] \psi(t_{k+1}^{'}.\stl, \stg_{k+1}^{'}) = [\{r_1, \ldots, r_k\} / y]\psi(t_{k+1}^{'}.\stl, \stg_{k+1}^{'})$. This proves the inductive step.

Hence, at the end of the $s$th iteration, $[x / y]\psi(t_{s+1}^{'}.\stl, \stg_{s+1}^{'})$. This implies $Q(t_{s+1}^{'}.\stl, \stg_{s+1}^{'})$. Finally, the last part of the reduction, $(t_{s+1}^{'}, \stg_{s+1}^{'}) \rightarrow_{\R}^{l} (t^{'}, \stg^{'})$ corresponds environment steps and E-Foreach3 (as the last step). Since $\stable(\R, Q)$ and E-Foreach3 does not change $\stl$, we have $Q(t^{'}.\stl, \stg)$. 

\end{proof}

\begin{theorem}
RG-Seq is sound
\end{theorem}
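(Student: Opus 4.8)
}
The plan is to follow the template already used for \rulelabel{RG-Select}, \rulelabel{RG-Update}, and the other transaction-local rules: unfold the semantic definition of $\R \vdash \hoare{P}{c_1;c_2}{Q}$, fix an arbitrary terminating interleaved execution of $c_1;c_2$, cut it into a $c_1$-phase followed by a $c_2$-phase, and discharge each phase with the corresponding premise. So I would assume the premises $\R \vdash \hoare{P}{c_1}{Q'}$, $\R \vdash \hoare{Q'}{c_2}{Q}$, and $\stable(\R,Q')$, and take $t,\stg,\stg'$ with $P(t.\stl,\stg)$, $t.c = c_1;c_2$, and $(t,\stg)\rightarrow_{\R}^{n}(t_2,\stg')\rightarrow(t',\stg')$ where $t'.c = \cskip$; the goal is $Q(t'.\stl,\stg')$.

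The first thing I would establish, by induction on the length of the interleaved execution, is a \emph{sequencing decomposition lemma}: along any $\rightarrow_{\R}$-execution from a configuration whose command is $c_1;c_2$, every configuration's command has the shape $c_1';c_2$ --- environment steps leave the command fixed and \rulelabel{E-Seq1} keeps it in this shape --- until exactly one \rulelabel{E-Seq2} step fires, after which the command is $c_2$ and thereafter evolves only by reductions of $c_2$. Since $\cskip$ is not of the shape $c_1';c_2$, that \rulelabel{E-Seq2} step must occur; write $(t_b,\stg_b)$ for the configuration immediately after it, so $t_b.c = c_2$, and recall \rulelabel{E-Seq2} is a local step that leaves $\stg$ fixed and sets $t_b.\stl$ from the result of $c_1'$'s last step. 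Re-reading that step through its premise $c_1'\stepsto\cskip$, the prefix of the run becomes a genuine interleaved execution of $c_1$ alone, from $(\hat t,\stg)$ to $(\hat t_b,\stg_b)$, ending in a transaction step, where $\hat t$ agrees with $t$ except $\hat t.c = c_1$, and $\hat t_b.c = \cskip$ with $\hat t_b.\stl = t_b.\stl$; the suffix from $(t_b,\stg_b)$ is an interleaved execution of $c_2$ ending at $(t',\stg')$ with $t'.c = \cskip$ and a transaction step last. The environment steps of the original run are distributed unchanged between the two phases, so both are legitimate executions of the shape demanded by the semantics of the judgment.

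Given the lemma, the remaining reasoning is short. Applying $\R \vdash \hoare{P}{c_1}{Q'}$ to the $c_1$-phase, using $P(\hat t.\stl,\stg) = P(t.\stl,\stg)$, yields $Q'(\hat t_b.\stl,\stg_b) = Q'(t_b.\stl,\stg_b)$; then applying $\R \vdash \hoare{Q'}{c_2}{Q}$ to the $c_2$-phase, whose starting configuration $(t_b,\stg_b)$ satisfies $Q'$ and whose command is exactly $c_2$, gives $Q(t'.\stl,\stg')$, the goal. The premise $\stable(\R,Q')$, via the stability lemma proved above, keeps the intermediate assertion $Q'$ interference-closed, so that it may serve as the precondition of the $c_2$-phase however environment steps are interleaved around the phase boundary --- the familiar role of the intermediate assertion in a rely--guarantee sequencing rule.

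The main obstacle is the decomposition lemma itself. The induction has to track the syntactic shape of the running command, separate the three relevant step kinds (environment, \rulelabel{E-Seq1}, \rulelabel{E-Seq2}), and --- the delicate point --- check that un-fusing the \rulelabel{E-Seq2} step yields a bona fide transaction step of $c_1$ and that the reconstructed $c_1$-phase still terminates with a non-environment step, as the semantics of the transaction-local judgment requires. Degenerate cases ($c_1$ or $c_2$ equal to $\cskip$, or $c_1$ finishing with no local step) should be checked separately or ruled out by a syntactic convention on sequential composition. Once this lemma is in hand, no further ingenuity is required --- the two premises plug straight in.
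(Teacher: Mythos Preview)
Your proposal is correct and follows essentially the same approach as the paper: split the interleaved run at the \rulelabel{E-Seq2} step, project out a $c_1$-only execution to discharge the first premise, then feed the resulting $Q'$ state into the second premise for the $c_2$ tail. The paper's version differs only cosmetically---it uses a three-part split with an explicit environment-only segment $\pi_2$ right after the \rulelabel{E-Seq2} step and invokes $\stable(\R,Q')$ there, whereas you fold those environment steps into the $c_2$-phase; both are fine since the semantics of the local judgment already tolerates leading environment steps.
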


\begin{proof}
\begin{mathpar}
\begin{array}{lc}
\hoare{P} {c1} {Q^{'}} & H1\\
 \hoare{Q^{'}}{c2}{Q} & H2\\
\stable(\R, Q^{'}) & H3
\end{array}
\end{mathpar}
Given $t, \stg$ such that $t.c = c1;c2$, $(t, \stg) \rightarrow_{\R}^{m} (t_2, \stg^{'}) \rightarrow (t^{'}, \stg^{'})$, $P(t.\stl, \stg)$ and $t^{'}.c = \texttt{SKIP}$, we have to show that $Q(t^{'}.\stl, \stg^{'})$. We can divide the reduction sequence into three parts :

\begin{itemize}
\item $(t,\stg) \rightarrow_{\R}^{m_1} (t_{m}^{'}, \stg_1) \rightarrow (t_{m}, \stg_1)$, where $t_m.c = c2$. We denote this sequence as $\pi_1$.
\item $(t_m, \stg_1) \rightarrow_{\R}^{m_2} (t_m, \stg_{1}^{'}) $ where all steps are taken by the environment. This sequence is denoted as $\pi_2$.
\item $(t_m, \stg_{1}^{'}) \rightarrow_{\R}^{m_3} (t_{2}, \stg^{'}) \rightarrow (t^{'}, \stg^{'})$. This sequence is denoted as $\pi_3$.
\end{itemize}

By the premise of the E-Seq1 and E-Seq2 rules, all the reductions in the sequence $\pi_1$ are also applicable to $c1$. Hence, consider transaction $s$ such that $s.c = c1$, $s.\stl=t.\stl$. Then, there exists the sequence $(s, \stg)  \rightarrow_{\R}^{m_1} (s_2, \stg_1) \rightarrow (s^{'}, \stg_1)$ with $s^{'}.c = \texttt{SKIP}$, $s^{'}.\stl = t_{m}.\stl$. Since $P(s.\stl, \stg)$, by H1, $Q^{'}(s^{'}.\stl, \stg_1)$. This implies $Q^{'}(t_m.\stl, \stg_1)$.

In the sequence $\pi_2$, all steps are taken by the environment. By H3, $Q^{'}(t_m.\stl, \stg_{1}^{'})$.

Since $t_m.c = c2$, by H3, $Q(t^{'}.\stl, \stg^{'})$.

%There exists $s$ such that $s.c = c1$, $s.\sigma_l=t.\sigma_l$, $s.\sigma_v = t.\sigma_v$ and the sequence $(s, \sigma) \rightarrow (s_1, \sigma) \rightarrow_{R}^{m_1} (s_2, \sigma_1) \rightarrow (s^{'}, \sigma_1)$ with $s^{'}.c = \texttt{SKIP}$. Similarly, there exists $u$ such that $u.c = c2$, $u.\sigma_l = s^{'}.\sigma_l$, $u.\sigma_v = s^{'}.\sigma_v$ and the sequence $(u, \sigma_{1}^{'}) \rightarrow (u_1, \sigma_{1}^{'}) \rightarrow_{R}^{m_2} (u_2, \sigma^{'}) \rightarrow (u^{'}, \sigma^{'})$ with $R^{m_3}(\sigma_1, \sigma_{1}^{'})$ and $u^{'}.c = \texttt{SKIP}$. Also, $u^{'}.\sigma_l = t^{'}.\sigma_l$ and $u^{'}.\sigma_v = t^{'}.\sigma_v$.

%By H1, $Q^{'}(s^{'}.\sigma_l, \sigma_3, s^{'}.\sigma_v)$. By H2, $Q^{'}(s^{'}.\sigma_l, \sigma_{3}^{'}, s^{'}.\sigma_v)$. Also, $Q^{'}(u.\sigma_l, \sigma_{3}^{'}, u.\sigma_v)$. By H3, $Q(u^{'}.\sigma_l, \sigma^{'}, u^{'}.\sigma_v)$. Hence $Q(t^{'}.\sigma_l, \sigma^{'}, t^{'}.\sigma_v)$.

\end{proof}

\begin{theorem}
RG-If is sound
\end{theorem}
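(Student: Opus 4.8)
The plan is to reuse the template already applied to \rulelabel{RG-Seq} and \rulelabel{RG-Select}: unfold the definition of the transaction-local judgment, split the reduction sequence at the moment the guard is resolved, and then dispatch to the matching premise of \rulelabel{RG-If}. So fix $t,\stg$ with $t.c = \ite{e}{c_1}{c_2}$, assume $P(t.\stl,\stg)$, and suppose $(t,\stg)\rightarrow_{\R}^{m}(t_2,\stg')\rightarrow(t',\stg')$ with $t'.c=\C{SKIP}$. Since the only reduction rules applicable to an $\ite{\ldots}{\ldots}{\ldots}$ command are \rulelabel{E-IfTrue} and \rulelabel{E-IfFalse}, and neither touches $\stl$, I would decompose the trace as $\pi_1=(t,\stg)\rightarrow_{\R}^{m_1}(t,\stg_1)\rightarrow(t_b,\stg_1)$ (environment steps followed by one \rulelabel{E-If} step, with $t_b.\stl=t.\stl$), and $\pi_2=(t_b,\stg_1)\rightarrow_{\R}^{m_2}(t_2,\stg')\rightarrow(t',\stg')$, the reduction of whichever branch was selected.

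For $\pi_1$ we have $\R^{m_1}(t.\stl,\stg,\stg_1)$, so from $\stable(\R,P)$ and the step-indexed version of the stability lemma (the analogue of Lemma~3.3 for the ternary relation $\R$, proved by the same trivial induction on the step index) I get $P(t_b.\stl,\stg_1)$. Then I case-split on $\eval(e)$. If $\eval(e)=\C{true}$ then \rulelabel{E-IfTrue} fired, so $t_b.c=c_1$ and $(P\wedge e)(t_b.\stl,\stg_1)$ holds; feeding $\pi_2$ into the first premise $\R\vdash\hoare{P\wedge e}{c_1}{Q}$ gives $Q(t'.\stl,\stg')$. If $\eval(e)=\C{false}$ the argument is symmetric via the second premise $\R\vdash\hoare{P\wedge\neg e}{c_2}{Q}$. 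Either way $Q(t'.\stl,\stg')$, which is the goal.

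I do not expect a real obstacle here — the argument is essentially identical in shape to \rulelabel{RG-Seq} and needs no new machinery beyond stability of $P$ under $\R$. The only point requiring a sentence of care is that the semantic reading of $e$ as a one-state predicate occurring in the assertion $P\wedge e$ must coincide with the operational test $\eval(e)=\C{true}$ at the branch point; this is fine because by the time \rulelabel{E-If} becomes enabled, $e$ has been closed (any \C{LET}-bound names substituted away), so $\eval(e)$ is a definite boolean and agrees with the logical interpretation of $e$.
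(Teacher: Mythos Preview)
Your proposal is correct and follows essentially the same approach as the paper's proof: decompose the run into environment steps followed by the single \rulelabel{E-IfTrue}/\rulelabel{E-IfFalse} step, use $\stable(\R,P)$ to carry $P$ across the environment prefix, and then invoke the appropriate premise on the residual branch. The paper's argument is identical modulo notation, and it does not spell out the point you raise about $e$ being closed at the branch step.
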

\begin{proof}
\begin{mathpar}
\begin{array}{lc}
\hoare{P \wedge e} {c1} {Q} & H1 \\
\hoare{P \wedge \neg e}{c2}{Q} & H2\\
\stable(\R, P) & H3
\end{array}
\end{mathpar}

Given $t, \stg$ such that $t.c = \ite{e}{c_1}{c_2}$, $(t, \stg) \rightarrow_{\R}^{m} (t_2, \stg^{'}) \rightarrow (t^{'}, \stg^{'})$, $P(t.\stl, \stg)$ and $t^{'}.c = \texttt{SKIP}$, we have to show that $Q(t^{'}.\stl, \stg^{'})$. Assume that $\texttt{eval}(e) = \texttt{true}$. We divide the sequence of steps into two parts:

\begin{itemize}
\item $\pi_1 = (t, \stg) \rightarrow_{\R}^{n_1} (t, \stg_1) \rightarrow (t_1, \stg_1)$ where initially only the environment takes steps, and the last step is taken by the transaction using E-IfTrue.
\item $\pi_2 = (t_1, \stg_1) \rightarrow_{\R}^{n_2} (t_2, \stg^{'}) \rightarrow (t^{'}, \stg^{'})$.
\end{itemize}

Since $P(t.\stl, \stg)$ and $R^{n_1}(t.\stl, \stg, \stg_1)$, by H3, we have $P(t.\stl, \stg_1)$. By applying the rule E-IfTrue, we have $t_1.\stl = t.\stl$, $t_1.c = c1$. Hence, $P(t_1.\stl, \stg_1)$. By the definition of $H1$, $Q(t^{'}, \stg')$. A similar proof follows for the case $\texttt{eval}(e) = \texttt{false}$
%In the first step by $t$, only the rules E-IfTrue and E-IfFalse are applicable. Assume that $t.\sigma_v(e)$. Then, $t_1.c = c1$, $t_1.\sigma_v = t.\sigma_v$, $t_1.\sigma_l = t.\sigma_l$ and $t_1.\sigma_g= \sigma$. Assume that the next step is taken by the transaction at global database state $\sigma_1$. That is, the environment takes $m_1$ steps in $(t_1, \sigma) \rightarrow_{R}^{m_1} (t_1, \sigma_1)$. Then, since the transaction takes a step, $\mathbb{I}_r(t_1.\sigma_l, \sigma, \sigma_1)$. Due to $\texttt{stable}_{\mathbb{I}}(R, \mathbb{I}_r)$, all the $m_1$ must also follow $\mathbb{I}_r$, and hence by $H3$, $P(t_1.\sigma_l, \sigma_1, t_1.\sigma_1)$. By $H1$, $Q(t^{'}.\sigma_l, \sigma^{'}, t^{'}.\sigma_v)$. 

%A similar proof follows for the case $\neg t.\sigma_v(e)$.
\end{proof}

%\begin{lemma}
%If $\{\mathbb{I}, I, R\} t \{G, I\}$, then $\{\mathbb{I}, I, R \cup ID\} t \{G, I\}$, where $ID$ is the identity function
%\end{lemma}
%\begin{proof}
%Given $\sigma$ such that $I(\sigma)$ and the sequence $\pi_1$, $(t,\sigma) \rightarrow_{R \cup ID}^{n_1} (\texttt{SKIP}, \sigma^{'})$, we have to show (1) $I(\sigma^{'})$ and (2)$\texttt{commit-guarantee}(\mathbb{I}, R \cup ID, G, t, \sigma)$.
%
%We will transform the sequence $\pi_1$ to sequence $\pi_2$ of the form $(t,\sigma) \rightarrow_{R}^{n_2} (\texttt{SKIP}, \sigma^{'})$. Consider any sequence of two steps in the sequence $\pi_1$ of the form $(t_1, \sigma_1) \rightarrow_{R \cup ID} (t_1, \sigma_1) \rightarrow_{R \cup ID} (t_2, \sigma_2)$, i.e. the first step is taken by the environment in the form of the $ID$ function. Then we remove the first step and obtain the valid step $(t_1, \sigma_1) \rightarrow_{R \cup ID} (t_2, \sigma_2)$. We perform this transformation from left to right (i.e. increasing values of the step index) whenever the environment takes a step in the form of the $ID$ function. It is clear that the resulting sequence will only have environment steps in the form of $R$. Since the sequence $\pi_2$ also ends in $\sigma^{'}$, by the hypothesis $I(\sigma^{'})$. Also, the commit step in the sequence $\pi_2$ will involve the same states as the commit step in $\pi_1$. Hence, $\texttt{commit-guarantee}(\mathbb{I}, R \cup ID, G, t, \sigma)$. 
%\end{proof}

\begin{lemma}
If $\stable(\R, Q)$ and $\R' \subseteq \R$, then $\stable(\R', Q)$
\end{lemma}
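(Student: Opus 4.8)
}
The plan is to unfold the definition of stability and exploit the fact that it is antecedent-monotone in the rely relation: shrinking the rely relation only weakens the hypothesis of the defining implication, so the conclusion still follows. Recall that, as used throughout the rules of Fig.~\ref{fig:rg-rules}, $\stable(\R,Q)$ abbreviates $\forall\stl,\stg,\stg'.~Q(\stl,\stg)\conj\R(\stl,\stg,\stg')\Rightarrow Q(\stl,\stg')$ (and similarly for the binary formulation $\stable(R,\phi)$).

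First I would fix arbitrary $\stl$, $\stg$, $\stg'$ and assume both $Q(\stl,\stg)$ and $\R'(\stl,\stg,\stg')$, aiming to derive $Q(\stl,\stg')$. Since $\R'\subseteq\R$, the second assumption yields $\R(\stl,\stg,\stg')$. Instantiating the hypothesis $\stable(\R,Q)$ at $\stl,\stg,\stg'$ with the premises $Q(\stl,\stg)$ and $\R(\stl,\stg,\stg')$ gives $Q(\stl,\stg')$. Discharging the universal quantifiers over $\stl,\stg,\stg'$ is then exactly $\stable(\R',Q)$.

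The only point warranting a small extra remark is the alternative formulation of stability that quantifies over the reflexive--transitive closure of the rely relation (the form $\phi(\stl,\stg)\wedge R^{*}(\stg,\stg')\wedge\I(\stl,\stg,\stg')\Rightarrow\phi(\stl,\stg')$ discussed in \S\ref{sec:rely-guarantee}): there one additionally observes that $\R'\subseteq\R$ implies $\R'^{*}\subseteq\R^{*}$ by an easy induction on the length of the $\R'$-derivation, after which the identical one-line argument applies. I do not expect any genuine obstacle here; the statement is pure monotonicity, and its only role is to license the \rulelabel{RG-Conseq} (resp.\ \rulelabel{RG-Conseq2}) step, where a transaction's rely relation is strengthened, i.e.\ replaced by a subset.
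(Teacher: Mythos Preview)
Your proposal is correct and essentially identical to the paper's proof: both unfold the definition of stability, fix arbitrary $\stl,\stg,\stg'$, use $\R'\subseteq\R$ to lift $\R'(\stl,\stg,\stg')$ to $\R(\stl,\stg,\stg')$, and then apply $\stable(\R,Q)$. Your additional remark about the $R^{*}$ formulation is not needed here (the lemma is stated and used for the single-step form), but it is harmless and correct.
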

\begin{proof}
Given $\stl, \stg, \stg'$ such that $Q(\stl, \stg)$ and $\R'(\stl, \stg, \stg')$, we have to show that $Q(\stl, \stg')$. Since $\R' \subseteq \R$, $\R(\stg, \stg')$. Hence, by $\stable(\R,Q)$, $Q(\stl, \stg')$.
\end{proof}

\begin{lemma}
If $\rg{I,R}{\ctxn{i}{\I}{c}}{G,I}$ and $R^{'} \subseteq R$, then $\rg{I,R'}{\ctxn{i}{\I}{c}}{G,I}$
\end{lemma}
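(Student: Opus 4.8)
The plan is to work directly from the semantic definition of the top-level RG judgment (Definition~\ref{def:rg-semantics}) and exploit the fact that shrinking the rely relation can only shrink the set of reachable configurations, without ever adding new transaction behaviors. First I would record the elementary monotonicity fact that if $R' \subseteq R$, then the global interleaved step relation is monotone in its environment component: whenever $(c,\stg) \rstepsto (c',\stg')$ holds with $R'$ as the rely, it also holds with $R$ as the rely. This is immediate from the definition of $\rstepsto$, since the transaction-step disjunct $(c,\stg)\stepsto(c',\stg')$ is governed solely by the operational rules of Fig.~\ref{fig:txnimp} (which mention only the isolation specification $\I$, never the rely), while the environment disjunct $c'=c \wedge R'(\stg,\stg')$ implies $c'=c\wedge R(\stg,\stg')$ because $R'\subseteq R$.

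Next I would lift this to the multi-step relation by a routine induction on the step index $n$: every $R'$-interleaved $n$-step reduction $(c,\stg)\rstepssto{n}(c',\stg')$ is also an $R$-interleaved $n$-step reduction. With this inclusion in hand, both conjuncts in the unfolding of $\rg{I,R'}{\ctxn{i}{\I}{c}}{G,I}$ follow from the corresponding conjuncts already available for $R$. For the invariant-preservation conjunct: fix $\stg$ with $I(\stg)$ and suppose $(\ctxn{i}{\I}{c},\stg)\rstepssto{n}(\cskip,\stg')$ in the $R'$-interleaved semantics; by the inclusion this same reduction holds in the $R$-interleaved semantics, so $\rg{I,R}{\ctxn{i}{\I}{c}}{G,I}$ yields $I(\stg')$. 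For the $\texttt{txn-guaranteed}$ conjunct: any configuration $(c',\stg')$ reachable from $(\ctxn{i}{\I}{c},\stg)$ under $R'$ is also reachable under $R$, and the subsequent genuine step $(c',\stg')\stepsto(c'',\stg'')$ is $R$-independent, so $G(\stg',\stg'')$ is guaranteed by $\texttt{txn-guaranteed}(R,G,\ctxn{i}{\I}{c},\stg)$.

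I expect no substantive obstacle — this is the standard rely-weakening lemma. The single point needing care is confirming that neither the top-level reduction rules (E-Txn-Start, E-Txn, E-Commit) nor the transaction-local reduction ever inspect $R$, so that replacing $R$ by a smaller $R'$ can only remove environment interference, never enable a new transaction step; this is evident by inspection of Fig.~\ref{fig:txnimp}. (Should the statement instead be read as concerning the \emph{derivable} judgment, the same conclusion follows from \rulelabel{RG-Conseq2} taken with $\I'=\I$ and $G'=G$, discharging its side condition $\stable(R',\I)$ via the immediately preceding stability-monotonicity lemma together with $R'\subseteq R$.)
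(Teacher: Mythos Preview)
Your proposal is correct and takes essentially the same approach as the paper: unfold the semantic definition, observe that $R'\subseteq R$ makes every $R'$-interleaved step (and hence every $R'$-interleaved multi-step reduction) an $R$-interleaved one, and then invoke the hypothesis to obtain both the invariant-preservation and the \texttt{txn-guaranteed} conjuncts. The paper's proof is terser but identical in substance; your explicit induction on the step index and separate treatment of the two conjuncts simply spell out what the paper leaves implicit.
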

\begin{proof}
Let $t = \ctxn{i}{\I}{c}$. Then, given $\stg$ such that $I(\stg)$ and $(t, \stg) \rightarrow_{R'}^{n} (\C{SKIP}, \stg^{'})$, we have to show (1) $I(\stg^{'})$ and (2) $\texttt{step-guaranteed}(R', G \cup ID, t, \stg)$. Since $R' \subseteq R$, every environment step in the above reduction sequence is in $R$. Thus,  $(t, \stg) \rightarrow_{R'}^{n} (\C{SKIP}, \stg^{'})$, which by definition of $\rg{I,R}{\ctxn{i}{\I}{c}}{G \cup ID,I}$ implies $I(\stg')$. The same argument holds for $\texttt{step-guaranteed}(R', G \cup ID, t, \stg)$.
\end{proof}

\begin{theorem}
RG-Par is sound
\end{theorem}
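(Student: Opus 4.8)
}
The plan is to prove a \emph{projection invariant} by step-indexed induction: every interleaved execution of $t_1\,||\,t_2$ under rely $R$ decomposes into a valid interleaved execution of $t_1$ under rely $R\cup G_2$ and a valid interleaved execution of $t_2$ under rely $R\cup G_1$, running over the same store trace. Concretely, fix $\stg$ with $I(\stg)$ and a reduction $(t_1\,||\,t_2,\stg)\rstepssto{n}(c_n,\stg_n)$ whose transitions are each a genuine program step ($\stepsto$) or an environment step of $R$. I would show by induction on the prefix length $k\le n$ that $c_k$ factors as $c_k^1\,||\,c_k^2$ (an evaluation-context observation: a parallel composition can only step in its left or right hole, and so stays a parallel composition until the administrative collapse of $\cskip\,||\,\cskip$), and that the induced length-$k$ sequences for $t_1$ from $(c_0^1,\stg)$ and for $t_2$ from $(c_0^2,\stg)$ are legal interleaved executions under their respective relies.

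First I would carry out the inductive step by cases on the transition at position $k$. An $R$-step fixes both components, and since $R\subseteq R\cup G_i$ it is legal environment interference for both projections. A genuine left-hole step is a genuine step of $t_1$, extending the $t_1$-projection directly; for the $t_2$-projection it counts as environment interference, and to place it in $R\cup G_1$ I would invoke the first premise $\rg{I,R\cup G_2}{t_1}{G_1,I}$: by the induction hypothesis the $t_1$-projection up to $k$ is a valid $R\cup G_2$-execution starting from $\stg$ with $I(\stg)$, so its \texttt{txn-guaranteed} clause yields $G_1(\stg_k,\stg_{k+1})$. A genuine right-hole step is handled symmetrically using $\rg{I,R\cup G_1}{t_2}{G_2,I}$ and $G_2$. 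This case analysis is exactly where the apparent circularity---$t_1$'s guarantee presupposing that $t_2$'s interference is well-behaved, and vice versa---is discharged, since each appeal only consults a strictly shorter prefix.

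Given the invariant, the two obligations of $\rg{I,R}{t_1\,||\,t_2}{G_1\cup G_2,I}$ follow. For \texttt{txn-guaranteed}: any genuine step taken after $(c_n,\stg_n)$ fires in one hole, say the left, hence is a genuine step off the valid $t_1$-projection, so $G_1(\stg_n,\stg_{n+1})$ and therefore $(G_1\cup G_2)(\stg_n,\stg_{n+1})$; the right-hole case gives $G_2$. For consistency preservation: if $c_n=\cskip$ then both components have run to completion, so the $t_1$-projection is a valid $R\cup G_2$-execution of $t_1$ from $\stg$, with $I(\stg)$, to a terminal configuration whose store is $\stg_n$---and, crucially, this projection stays valid past the point where $t_1$ first reaches $\cskip$, because the ensuing genuine $t_2$-steps are in $G_2\subseteq R\cup G_2$ by the symmetric half of the invariant---so the consistency clause of $\rg{I,R\cup G_2}{t_1}{G_1,I}$ gives $I(\stg_n)$.

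I expect the main obstacle to be the projection invariant itself: making fully precise that intermediate terms of a parallel composition always factor as $c^1\,||\,c^2$, that each transition of the whole maps onto a transition of exactly one projection (including the collapse step and the fact that a finished component emits no further genuine steps), and that both projections can be kept valid all the way to $\stg_n$ after one side has terminated. Once this bookkeeping is in place, the remaining appeals---two uses of \texttt{txn-guaranteed} and one use of the consistency clause---are routine, and no side conditions on $G_1\cup G_2$ are needed beyond those the component judgments already carry.
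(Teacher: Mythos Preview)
Your approach is correct and is the standard rely--guarantee parallel-composition argument: a step-indexed simultaneous induction maintaining that both projections are valid interleaved executions under their enlarged relies, with the apparent circularity broken because each appeal to \texttt{txn-guaranteed} consults only a strict prefix. The paper takes a genuinely different, more special-purpose route. It exploits the operational fact that in this semantics a transaction changes the global store only at its single commit step (rules \rulelabel{E-Txn-Start} and \rulelabel{E-Txn} leave $\stg$ unchanged; only \rulelabel{E-Commit} flushes). So the paper does a case split on which transaction commits first---say $t_1$---and observes that up to $t_1$'s commit all of $t_2$'s steps are identities on $\stg$ and can simply be dropped, leaving a pure $R$-execution of $t_1$; then $t_1$'s commit is the one $G_1$-step that $t_2$ ever sees, so $t_2$'s projection is an $R\cup G_1$-execution outright. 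No mutual induction is needed. What your approach buys is generality: it would go through unchanged even if intermediate transaction steps could perturb the global state, and it avoids the WLOG-who-commits-first bookkeeping. What the paper's approach buys is brevity: by leaning on the ``only commit writes'' property it collapses the whole argument to a two-case analysis with no inductive invariant to maintain. Both proofs implicitly rely on identity steps being absorbed (the paper writes $R\cup ID$ explicitly; in your version the non-commit genuine steps of $t_1$ are global-state identities, so placing them in $R\cup G_1$ for $t_2$'s projection needs either $G_1$ reflexive or the same $ID$ slack).
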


\begin{proof}
\begin{mathpar}
\begin{array}{lc}
 \rg{I,R \cup G_2}{t_1}{G_1,I} & H1\\
 \rg{I,R \cup G_1}{t_2}{G_2,I} & H2
\end{array}
\end{mathpar}

Consider $\stg$ such that $I(\stg)$, and let $(t_1 || t_2, \stg) \rightarrow_{R}^{n} (\texttt{SKIP}, \stg^{'})$. We have to show (1) $I(\stg^{'})$ and (2) $\texttt{step-guaranteed}( R, G_1 \cup G_2, t_1 || t_2, \stg)$. 

Suppose that $t_1$ commits before $t_2$ in the execution sequence. Consider the sequence upto (and including) the commit step of $t_1$, i.e. $(t_1 || t_2, \stg) \rightarrow_{R}^{n_1} (t_{1}^{'} || t_{2}^{'}, \stg_1) \rightarrow (t_{2}^{'}, \stg_{1}^{'})$. In this sequence, all steps apart from the steps taken by $t_1$ belong to $R \cup ID$, since any step taken by $t_2$ cannot change the global database state. Hence, there exists the sequence $(t_1, \stg) \rightarrow_{R}^{n_{1}^{'}} (t_{1}^{'}, \stg_1) \rightarrow (\texttt{SKIP}, \stg_{1}^{'})$ (the steps taken by $t_2$ can be removed). Since $R \subseteq R \cup G_2$, by H1 and Lemma 3.13, $I(\stg_{1}^{'})$ and $G_1(\stg_1, \stg_{1}^{'})$. Now, consider the entire sequence from the perspective of $t_2$. All steps taken by $t_1$ except the commit step do not change the global database state, and the change during the commit step belongs to $G_1$. Hence, all steps in the sequence apart from the steps taken by $t_2$ belong to $R \cup G_1 \cup ID$. Hence, there exists a sequence $(t_2, \stg) \rightarrow_{R \cup G_1}^{n^{'}} (\texttt{SKIP}, \stg^{'})$. By H2, $I(\stg^{'})$. 

Finally, the commit step of $t_1$ belongs to $G_1$, while the commit step of $t_2$ belongs to $G_2$, and every other step of either transaction does not change the global database state. Hence,  $\texttt{step-guaranteed}(R, G_1 \cup G_2, t1 || t2, \stg)$. The proof for the case where $t_2$ commits before $t_1$ would be similar.
\end{proof}

\begin{theorem}
RG-Conseq is sound
\end{theorem}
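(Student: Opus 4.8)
\emph{Which rule.} Under the appendix's naming, \rulelabel{RG-Conseq} is the transaction-local rule of consequence: its premises are $\R \vdash \hoare{P}{c}{Q}$, $P' \Rightarrow P$, $Q \Rightarrow Q'$, $\stable(\R,P')$, $\stable(\R,Q')$, and its conclusion is $\R \vdash \hoare{P'}{c}{Q'}$. The essential point is that consequence does not interact with any of the features that are novel in this setting: the isolation-constrained rely $\R$, the command $c$, and the interleaved step relation are identical between premise and conclusion, so a single interleaved reduction witness serves both judgments, and the proof is just the ordinary Hoare-logic argument lifted to bi-state assertions over $(\stl,\stg)$.

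\emph{Plan.} I would unfold the semantics of the transaction-local judgment given in the appendix: fix arbitrary $t,\stg,\stg'$ and assume the antecedents for the conclusion judgment, namely $P'(t.\stl,\stg)$, $t.c = c$, $(t,\stg) \longrightarrow^{n}_{\R} (t_2,\stg') \stepsto (t',\stg')$, and $t'.c = \cskip$; the goal is $Q'(t'.\stl,\stg')$. (1) From $P' \Rightarrow P$, read as an implication between bi-state assertions, obtain $P(t.\stl,\stg)$. (2) Instantiate the premise judgment $\R \vdash \hoare{P}{c}{Q}$ at this same $t$, these states, and this very reduction sequence --- nothing in the hypotheses of that judgment's semantics has been strengthened --- to obtain $Q(t'.\stl,\stg')$. (3) Apply $Q \Rightarrow Q'$ to conclude $Q'(t'.\stl,\stg')$, which is exactly what the conclusion's semantics demands. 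I would then explicitly remark that $\stable(\R,P')$ and $\stable(\R,Q')$ are not consumed by this argument at all --- the semantics of the local judgment makes no mention of stability --- and that they are carried along solely so that the derived judgment satisfies the stability side-conditions required by the rules that would compose with it (\rulelabel{RG-Seq}, \rulelabel{RG-If}, \rulelabel{RG-Foreach}, \rulelabel{RG-Txn}).

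\emph{Main obstacle, and the top-level sibling.} Honestly there is no real obstacle here: the whole content is the recognition that consequence is orthogonal to isolation and to the local/global state split, so the work is bookkeeping rather than insight --- which is why I would flag the unused stability hypotheses, lest the reader hunt for a use of them. If the intended statement is instead the top-level rule (\rulelabel{RG-Conseq2}), the argument is a little longer but still routine. I would (i) use the stability-monotonicity lemma together with the rely-monotonicity lemma (Lemma~3.13) to push the hypothesis judgment from $R$ down to $R' \subseteq R$; (ii) establish a short simulation fact --- every $R'$-indexed run of $\ctxn{i}{\I'}{c}$ is also an $R$-indexed run of $\ctxn{i}{\I}{c}$, because $R'$-environment steps are $R$-environment steps, \rulelabel{E-Txn-Start} is unchanged, and the side-conditions $\I'_e$ and $\I'_c$ of \rulelabel{E-Txn} and \rulelabel{E-Commit} entail $\I_e$ and $\I_c$ via $\I' \Rightarrow \I$, so the $\I$-transaction can replay each step; and (iii) read off $I$-preservation from the hypothesis judgment, upgrade every transaction step from $G$ to $G'$ using $G \subseteq G'$, and discharge the residual obligation with the supplied $\forall\stg,\stg'.~I(\stg) \wedge G'(\stg,\stg') \Rightarrow I(\stg')$. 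The only point requiring mild care there is matching transaction steps between the $R'$-run and the $R$-run so that the step-for-step guarantee transfers.
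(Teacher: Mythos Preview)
Your argument for \rulelabel{RG-Conseq} is correct and matches the paper's proof essentially line for line: unfold the local-judgment semantics, push $P' \Rightarrow P$ to enable H1, and then push $Q \Rightarrow Q'$ on the output, with the stability hypotheses indeed unused in the soundness argument itself. Your sketch for the top-level sibling (\rulelabel{RG-Conseq2}) also aligns with the paper's separate proof of that rule.
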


\begin{proof}
\begin{mathpar}
\begin{array}{lc}
 \R \vdash \hoare{P}{t}{Q} & H1\\
  P' \Rightarrow P & H2\\
  Q \Rightarrow Q' & H3\\
\end{array}
\end{mathpar}
Given $t, \stg$ such that $(t, \stg) \rightarrow_{\R}^{m} (t_2, \stg^{'}) \rightarrow (t^{'}, \stg^{'})$, $P'(t.\stl, \stg)$ and $t^{'}.c = \texttt{SKIP}$, we have to show that $Q'(t^{'}.\stl, \stg^{'})$. By H2, $P(t.\stl, \stg)$. Then, expanding the definition in H1, we get $Q(t^{'}.\stl, \stg')$. By H3, $Q'(t^{'}.\stl, \stg^{'})$.
\end{proof}

\begin{theorem}
RG-Conseq2 is sound
\end{theorem}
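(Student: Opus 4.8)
} Unfolding Definition~\ref{def:rg-semantics}, establishing $\rg{I,R'}{\ctxn{i}{\I'}{c}}{G',I}$ amounts to showing, for every $\stg$ with $I(\stg)$, that (a)~every interleaved multi-step reduction $(\ctxn{i}{\I'}{c},\stg) \rstepssto{n} (\cskip,\stg')$ taken against the rely relation $R'$ ends in a state satisfying $I(\stg')$, and (b)~$\texttt{txn-guaranteed}(R',G',\ctxn{i}{\I'}{c},\stg)$ holds. The plan is to transport each such execution of the $\I'$-annotated transaction under $R'$ to an execution of the $\I$-annotated transaction under $R$, and then appeal directly to the hypothesis $\rg{I,R}{\ctxn{i}{\I}{c}}{G,I}$.

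The heart of the argument is a \emph{simulation lemma}, proved by induction on the length of the reduction: any interleaved reduction sequence of $\ctxn{i}{\I'}{c}$ against $R'$ is mirrored step-for-step by an interleaved reduction sequence of $\ctxn{i}{\I}{c}$ against $R$ that visits exactly the same sequence of global states and the same runtime terms, except that the embedded specification $\I'$ is replaced by $\I$ throughout. Each rule is checked in turn. An environment step uses $R'(\stg,\stg')$, hence $R(\stg,\stg')$ since $R' \subseteq R$, and leaves the command (with its embedded spec) unchanged on both sides. \rulelabel{E-Txn-Start} carries over verbatim, the new runtime term $\ctxnr{i}{\I',\emptyset,\stg}{c}$ simply becoming $\ctxnr{i}{\I,\emptyset,\stg}{c}$. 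An \rulelabel{E-Txn} step from $\ctxnr{i}{\I',\stl,\stg}{c}$ requires $\I'_e(\stl,\stg,\stg')$ and a local reduction $\stg \vdash (\tbox{c}_i,\stl)\stepsto(\tbox{c'}_i,\stl')$; since $\I' \Rightarrow \I$ we get $\I_e(\stl,\stg,\stg')$, and the local reduction is oblivious to the embedded spec, so the very same local reduction yields a valid \rulelabel{E-Txn} step for $\ctxnr{i}{\I,\stl,\stg}{c}$. An \rulelabel{E-Commit} step requires $\I'_c(\stl,\stg,\stg')$, which gives $\I_c(\stl,\stg,\stg')$, so the same flush $\stl \rhd \stg'$ is produced and both executions terminate at the same $\stg'$. (One could alternatively discharge the $R' \subseteq R$ part first, using the rely-monotonicity lemma proved above, and then run the simulation only to strengthen $\I$ to $\I'$.)

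Given the simulation, part~(a) is immediate: a full reduction of $\ctxn{i}{\I'}{c}$ under $R'$ ending at $\stg'$ induces a full reduction of $\ctxn{i}{\I}{c}$ under $R$ ending at the same $\stg'$, so $I(\stg)$ and the hypothesis give $I(\stg')$. For part~(b), a prefix $(\ctxn{i}{\I'}{c},\stg) \rstepssto{n} (c',\stg')$ under $R'$ followed by a reduction step $(c',\stg')\stepsto(c'',\stg'')$ is mirrored by a prefix of an $\I$-execution under $R$ followed by the corresponding reduction step, so $\texttt{txn-guaranteed}(R,G,\ctxn{i}{\I}{c},\stg)$ yields $G(\stg',\stg'')$, and $G \subseteq G'$ yields $G'(\stg',\stg'')$. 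Note that the premises $\stable(R',\I')$ and $\forall \stg,\stg'.~I(\stg)\wedge G'(\stg,\stg')\Rightarrow I(\stg')$ are not exercised by this semantic argument; they are retained so that the derived judgment can be fed back into rules such as \rulelabel{RG-Txn} and \rulelabel{RG-Par} without further side conditions. The one point that needs care is the precise statement of the simulation, in particular the observation that the embedded isolation specification never participates in the transaction-local reduction $\stg \vdash (\tbox{c}_i,\stl)\stepsto(\tbox{c'}_i,\stl')$, so that substituting $\I$ for $\I'$ cannot block any step.
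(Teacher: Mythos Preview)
Your proof is correct and follows essentially the same approach as the paper: both establish a step-for-step simulation showing that any interleaved execution of $\ctxn{i}{\I'}{c}$ under $R'$ corresponds to an execution of $\ctxn{i}{\I}{c}$ under $R$ (using $\I'\Rightarrow\I$ for transaction steps and $R'\subseteq R$ for environment steps), then invoke the hypothesis and $G\subseteq G'$. Your presentation is slightly more explicit, checking each reduction rule individually and stating the simulation invariant (same global states and runtime terms up to the embedded spec), whereas the paper's version inducts on the number of transaction steps and handles the $R'\subseteq R$ part in one line at the end; your added remark that $\stable(R',\I')$ and the $I$-preservation premise for $G'$ are not exercised by the semantic argument is correct and is an observation the paper does not make.
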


\begin{proof}
\begin{mathpar}
\begin{array}{lc}
\rg{I,R}{\ctxn{i}{\I}{c}}{G,I} & H1\\
  \I' \Rightarrow \I \spc & H2\\
  R' \subseteq R \spc & H2\\
  \stable(R',\I') & H3\\
  G \subseteq G' & H4\\
  \forall \stg,\stg'.~I(\stg) \wedge G'(\stg,\stg') \Rightarrow I(\stg')& H5
  \end{array}
  \end{mathpar}
Let $t = \ctxn{i}{\I'}{c}$. Given $\stg$ such that $I(\stg)$ and reduction sequence $\pi = (t, \stg) \rightarrow_{R'}^{n} (\C{SKIP}, \stg')$, we have to show that $I(\stg')$ and $\texttt{step-guaranteed}(R', G', t, \stg)$. First, we will show that the above reduction sequence is valid even if the isolation level of $t$ is changed to $\mathbb{I}$. Assume that the transaction performs $m$ steps in $\pi$. We will use induction on $m$ to show that every step of the transaction is valid for isolation level $\mathbb{I}$. 

For the base case, the first step is always valid irrespective of any isolation level. For the inductive case, assume that all steps upto the $k$th step of the transaction in $t$ are valid with isolation level $\mathbb{I}$. Let the $(k+1)$th step of the transaction be $(t_1, \stg_1) \rightarrow (t_2, \stg_1)$. Then $\mathbb{I}'(t_1.\stl, t_1.\stg, \stg_1)$. By H2, $\mathbb{I}(t_1.\stl, t_1.\stg, \stg_1)$. Hence, the $k+1$th step is also valid for isolation level $\mathbb{I}$. This shows that the entire reduction sequence is valid even if the isolation level of $t$ is changed to $\mathbb{I}$. Let $t' = \ctxn{i}{\I'}{c}$. Since $R' \subseteq R$, it follows that the reduction sequence $\pi' = (t', \stg) \rightarrow_{R}^{n} (\C{SKIP}, \stg')$ comprising of the same steps as $\pi$ is valid. By H1, $I(\stg')$. Finally, by $\texttt{step-guaranteed}(R, G, t', \stg)$, all global database state changes caused by $t'$ in $\pi'$ are in $G$. But these are the same global database stage changes in $\pi$. Since $G \subseteq G'$, these state changes are also in $G'$.
\end{proof}

%\begin{theorem}
%If 
%$$
%\forall \stl, \stg, \stg'. \R_e(\stl,\stg,\stg')  \Leftrightarrow  \exists \stg_1.  R(\stg, \stg') \wedge \I_e(\stl, \stg_1, \stg')
%$$
%and
%$$
%  \forall \stl,\stg,\stg',\stg''.~
%  \neg\I_e(\stl,\stg,\stg') \conj R(\stg',\stg'') \Rightarrow
%  \neg\I_e(\stl,\stg,\stg'')
%$$
%then $\forall \stl, \stg, \stg'.\exists \stg_1. R^{*}(\stg, \stg') \wedge \I_e(\stl, \stg_1, \stg') \Rightarrow \R_e^{*}(\stl, \stg, \stg')$
%\end{theorem}
%\begin{proof}
%Consider $\stl, \stg, \stg', \stg_1$ and $n$ such that $R^{n-1}(\stg, \stg')$ and $\I_e(\stl, \stg_1, \stg')$. Then $R(\stg'_1, \stg'_2)$, $R(\stg'_2, \stg'_3), \ldots, R(\stg'_{n-1}, \stg'_{n})$, where $\stg'_1 = \stg$ and $\stg'_n = \stg'$. We will show that $\R_e^{n-1}(\stl, \stg, \stg')$. Clearly, $\R_e(\stl, \stg'_{n-1}, \stg'_{n})$. Also, $\I_e(\stl, \stg_1, \stg'_{n-1})$. Because otherwise, if $\neg \I_e(\stl, \stg_1, \stg'_{n-1})$, then with $R(\stg'_{n-1}, \stg'_n)$ would imply $\neg \I_e(\stl, \stg_1, \stg'_{n})$ which contradicts our assumption that $\I_e(\stl, \stg_1, \stg')$. Hence, by definition $\R_e(\stl, \stg'_{n-2}, \stg'_{n-1})$. In this manner, we can show that $\forall i, 1 \leq i \leq n-1. \R_e(\stl, \stg'_{i}, \stg'_{i+1})$, because $\I_e(\stl, \stg_1, \stg'_{i+1})$. Hence, $\R_e^{n-1}(\stl, \stg, \stg')$.
%\end{proof}

\begin{theorem}
If $\I_e = \I_{ss}$, then $\forall \stl, \stg, \stg'. \R_e(\stl, \stg, \stg') \Rightarrow \stg = \stg'$ 
\end{theorem}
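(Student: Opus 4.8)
The statement is essentially a definitional unfolding, so the plan is short. First I would expand $\R_e$ using its definition $\R_e = R \backslash \I_e$. Here one must decide which of the two presentations of the isolation-constrained rely relation to take as primitive: the plain ternary reading $(R \backslash \I)(\stl,\stg,\stg') \Leftrightarrow R(\stg,\stg') \wedge \I(\stl,\stg,\stg')$ used in \S\ref{sec:rely-guarantee}, or the anchored, existentially-quantified form $\R_e(\stl,\stg,\stg') \Leftrightarrow \exists \stg_1.\, R(\stg,\stg') \wedge \I_e(\stl,\stg_1,\stg) \wedge \I_e(\stl,\stg_1,\stg')$ used in the soundness proof of \rulelabel{RG-Txn}. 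I would handle the latter, since it subsumes the former and is the one actually needed downstream.

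Next I would substitute $\I_e = \I_{ss}$ and unfold $\I_{ss}$, recalling that $\I_{ss}$ holds of a triple exactly when its second and third components coincide. Applied to the two occurrences $\I_e(\stl,\stg_1,\stg)$ and $\I_e(\stl,\stg_1,\stg')$ in the anchored definition, this forces $\stg_1 = \stg$ and $\stg_1 = \stg'$ for the witness $\stg_1$, and transitivity of equality then yields $\stg = \stg'$, which is the goal. (Under the plain reading the conclusion is even more immediate: $\I_{ss}(\stl,\stg,\stg')$ is literally the conjunct $\stg' = \stg$.) I would close by remarking that this is the formal restatement of the informal observation in \S\ref{sec:rely-guarantee} that executing a transaction against a snapshot collapses $R \backslash \I$ to the identity relation.

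I do not anticipate a real obstacle. The only point requiring a moment's care is the bookkeeping around which definition of $\R_e$ is in force; once the anchored form is chosen, the argument is a one-line observation that the snapshot axiom pins the anchor state $\stg_1$ to both endpoints of an admitted environment step, forcing them to coincide. No induction, no use of the stability condition on $\I$, and no appeal to the earlier soundness results are needed.
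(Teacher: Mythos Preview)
Your proposal is correct and, in fact, more direct than the paper's own proof. Both work with the anchored definition $\R_e(\stl,\stg,\stg') \Leftrightarrow \exists \stg_1.\, R(\stg,\stg') \wedge \I_e(\stl,\stg_1,\stg) \wedge \I_e(\stl,\stg_1,\stg')$, but the paper takes a detour: it first establishes an auxiliary ``once broken, stays broken'' lemma, namely $\neg\I_{ss}(\stl,\stg,\stg') \wedge R(\stg',\stg'') \Rightarrow \neg\I_{ss}(\stl,\stg,\stg'')$, appealing to the fact that committing transactions stamp records with fresh $\txnf$ identifiers so a changed state cannot revert. It then extracts only the conjunct $\I_e(\stl,\stg_1,\stg')$ from the definition, concludes $\stg_1 = \stg'$, and recovers $\I_e(\stl,\stg_1,\stg)$ by contradiction via the auxiliary lemma before concluding $\stg_1 = \stg$. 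Your argument simply reads both conjuncts off the definition and finishes by transitivity, avoiding the auxiliary lemma and any assumption about $R$ or transaction identifiers. The paper's route does yield a separate monotonicity fact about $\I_{ss}$ that may be of independent interest, but for the theorem as stated your unfolding is strictly simpler and equally valid.
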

\begin{proof}
First, we show that 
$$
  \forall \stl,\stg,\stg',\stg''.~
  \neg\I_e(\stl,\stg,\stg') \conj R(\stg',\stg'') \Rightarrow
  \neg\I_e(\stl,\stg,\stg'')
$$
Given $\stl, \stg, \stg'$ such that $\neg \I_e(\stl, \stg, \stg')$, $\stg \neq \stg'$. Since $R(\stg', \stg'')$ corresponds to the commit of a transaction, either the transaction is read-only, in which case $\stg' = \stg''$ and hence $\stg \neq \stg''$ which implies $\neg \I_e(\stl, \stg, \stg'')$, or the transaction modifies/inserts a record, in which case it will also add its own unique transaction id to the record, so that $\stg \neq \stg''$, which again implies the result.

Now, consider $\stl, \stg, \stg'$ such that $\R_e(\stl, \stg, \stg')$.By definition of $\R_e$, there exists $\stg_1$ such that $\I_e(\stl, \stg_1, \stg')$. Hence, $\stg_1 = \stg'$. Now, $\I_e(\stl, \stg_1, \stg)$, because otherwise, if $\neg \I_e(\stl, \stg_1, \stg)$, then by the earlier result, $\neg \I_e((\stl, \stg_1, \stg')$ which is a contradiction. Hence, $\stg_1 = \stg$. This implies that $\stg = \stg'$.
\end{proof}

%\begin{theorem}
%If $\I_e = \I_{ww}$, then $\forall \stl, \stg, \stg'. \R_e(\stl, \stg, \stg') \Rightarrow \stg = \stg'$ 
%\end{theorem}

\newpage
\begin{theorem}
  \label{thm:inference-sound-strong}
  For all $i$,$R$,$I$,$c$,$\Fx$,$s$, $\F$, if $\stable(\R,I)$, $\stable(\R, \Fx)$ and $\Fx
  \vdash c \elabsto \F$, then:\\\vspace*{-0.2cm}
  \begin{smathpar}
  \begin{array}{c}
    \R \vdash \hoare{\lambda(\stl,\stg).~\stl= s \cup \Fx(\stg) \conj
    I(\stg)}{c}{\lambda(\stl,\stg).~\stl = s \cup \Fx(\stg) \cup \F(\stg)}
  \end{array}
  \end{smathpar}
\end{theorem}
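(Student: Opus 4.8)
}
The plan is to prove this by structural induction on the command $c$, following the inference rules of Fig.~\ref{fig:inference-rules} case by case, exactly as in the proof sketch of Theorem~\ref{thm:inference-sound} but carrying the additional ``frame'' set $s$ through both the pre- and the post-condition; Theorem~\ref{thm:inference-sound} is then the instance $s=\emptyset$. Writing $P = \lambda(\stl,\stg).~\stl = s \cup \Fx(\stg) \conj I(\stg)$ and $Q = \lambda(\stl,\stg).~\stl = s \cup \Fx(\stg) \cup \F(\stg)$, each case discharges $\R \vdash \hoare{P}{c}{Q}$ by applying the identically-named rule of Fig.~\ref{fig:rg-rules} and its appendix (\rulelabel{RG-Insert}, \rulelabel{RG-Update}, \rulelabel{RG-Delete}, \rulelabel{RG-Select}, \rulelabel{RG-Seq}, \rulelabel{RG-If}, \rulelabel{RG-Foreach}), so the obligations are (i) the non-stability premises, which fall out of unfolding the transformer that Fig.~\ref{fig:inference-rules} assigns to $c$, and (ii) the stability premises. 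For the base cases \C{INSERT}, \C{UPDATE}, \C{DELETE} the applied transformer is literally the set of records that the matching operational rule (\rulelabel{E-Insert}, etc.) adds to $\stl$, so (i) holds by definition with $s$ and $\Fx(\stg)$ transported unchanged into $Q$; stability of $P$ follows from $\stable(\R,I)$ and $\stable(\R,\Fx)$, using that $\R$ is anti-monotone in its local-state component (enlarging $\stl$ only shrinks the admissible interference), so $\R(s\cup\Fx(\stg),\stg,\stg')$ entails $\R(\Fx(\stg),\stg,\stg')$; and stability of $Q$ is exactly the guarantee of the $\stabilize{\cdot}$ operator, which either returns $\F$ when $\F$ is already stable in context $\Fx$, or replaces it by an $\existsl$-expression that ignores $\stg$.

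The \C{LET}, \C{IF}, and \C{SELECT} cases invoke the induction hypothesis on the immediate subcommand(s) with the same context $\Fx$ and frame $s$ (for \C{SELECT}, after propagating the selected-set substitution through $\F$ as the rule prescribes) and then apply \rulelabel{RG-Select}/\rulelabel{RG-If}; no new difficulty arises, since a local reduction never mutates $\stg$, so $I(\stg)$ is preserved along any $\R$-interleaved execution by $\stable(\R,I)$. The sequencing case $c_1;c_2$ is the first place the frame is essential: $\F_2$ is inferred for $c_2$ in the \emph{extended} context $\Fx \cup \F_1$, so I would apply the IH to $c_1$ with context $\Fx$ and frame $s$, and to $c_2$ with context $\Fx \cup \F_1$ and frame $s$, then glue the two judgments with \rulelabel{RG-Seq} via the intermediate assertion $\lambda(\stl,\stg).~\stl = s\cup\Fx(\stg)\cup\F_1(\stg)\conj I(\stg)$. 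This uses a small auxiliary fact, $\stable(\R,\Fx\cup\F_1)$, which follows from $\stable(\R,\Fx)$ and the in-context stability of $\F_1$, again by anti-monotonicity of $\R$ in $\stl$; I would isolate this as a lemma (``stability is preserved when the local-state component of $\R$ is enlarged''), since it recurs.

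The \C{FOREACH} case is the crux, and I expect it to be the main obstacle. As in the sketch of Theorem~\ref{thm:inference-sound}, I would take the loop invariant $\psi(\stl,\stg) \Leftrightarrow \stl = s \cup \Fx(\stg) \cup (y \bind (\lambda z.~\F(\stg)))$, with $\F$ the body's transformer and $y$ the already-iterated set, and verify the premises of \rulelabel{RG-Foreach}: $P \Rightarrow [\emptyset/y]\psi$ and $[x/y]\psi \Rightarrow Q$ are immediate (using $x \bind (\lambda z.~\F(\stg)) = \F_{loop}(\stg)$), stability of $\psi$, $P$, and $Q$ reduces to stability of $\Fx$ and $\F$, and the body obligation $\R \vdash \hoare{\psi \wedge z\in x}{c}{[y\cup\{z\}/y]\psi}$ should follow from the IH applied to the body. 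The delicate point is that the body's $\F$ was inferred in context $\Fx$ alone, whereas the local state entering an iteration is $\Fx(\stg)\cup(y\bind\lambda z.~\F(\stg))$, which additionally depends on $\stg$ and on $y$; reconciling these requires a ``context-weakening'' lemma stating that if $\Fx \vdash c \elabsto \F$ then $\Fx' \vdash c \elabsto \F$ (the \emph{same} $\F$) whenever $\Fx'(\stg)\supseteq\Fx(\stg)$ pointwise, which holds because the inferred transformer and the success of its stability check are monotone under enlarging the context (again via anti-monotonicity of $\R$ in $\stl$, and because $\stabilize{\cdot}$'s output depends only on $\F$ and $I$). With that lemma the body obligation is obtained by instantiating the IH at the enlarged context $\lambda\stg.~\Fx(\stg)\cup(y\bind\lambda z.~\F(\stg))$ and frame $s$; the crucial semantic input is \emph{local context independence} (\S\ref{sec:opsem}), which makes $[r/z]\F(\stg)$ independent of the records contributed by earlier iterations and hence makes $\psi$ genuinely inductive.
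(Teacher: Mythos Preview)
Your overall plan---structural induction on $c$, discharging each case via the like-named RG rule---is exactly what the paper does, and your base cases and the \C{SEQ}/\C{IF}/\C{SELECT} cases line up. The divergence is in \C{FOREACH}, and it is worth pointing out because the frame $s$ in the theorem exists precisely to avoid the detour you take. The paper does \emph{not} enlarge the context: it keeps $\Fx$ fixed and instantiates the induction hypothesis on the loop body with the larger frame $s' = s \cup (y \bind \lambda z.\,\F(\stg))$, obtaining
\[
\R \vdash \hoare{\lambda(\stl,\stg).\,\stl = s' \cup \Fx(\stg) \conj I(\stg)}{c}{\lambda(\stl,\stg).\,\stl = s' \cup \Fx(\stg) \cup \F(\stg)},
\]
after which the postcondition is exactly $[y\cup\{z\}/y]\psi$. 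This is the intended payoff of carrying $s$ through the statement, and it sidesteps any need for a context-weakening lemma.

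Your proposed context-weakening lemma (``if $\Fx \vdash c \elabsto \F$ then $\Fx' \vdash c \elabsto \F$ for the \emph{same} $\F$ whenever $\Fx' \supseteq \Fx$'') is not obviously true as stated: $\stabilize{\cdot}$ branches on whether $\stable(\R,\inctxt{\Fx}{\F})$ holds, and by your own anti-monotonicity argument the check can \emph{succeed} at $\Fx'$ while \emph{failing} at $\Fx$, in which case the two derivations return different transformers (the raw $\F$ versus the $\existsl$-weakened one). So the lemma would need a more delicate formulation, and the whole manoeuvre is unnecessary once you use $s$ as above. Your observation that one needs $\R$ to be anti-monotone in its local-state argument (so that $\stable(\R,\Fx)$ lifts to stability of $P$ when $\stl = s \cup \Fx(\stg)$) is a genuine point the paper glosses over; it holds for the isolation specifications in \S\ref{sec:isolation} but deserves to be stated as a side condition.
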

\begin{proof}
Hypothesis:
\begin{smathpar}
\begin{array}{lr}
  \stable(\R,I) & H1\\
  \Fx \vdash c \elabsto \F & H2\\
\end{array}
\end{smathpar}
Proof by induction on $H2$. 

We prove the statement separately for every type of $c$. The base cases correspond to the SQL statements INSERT, UPDATE and DELETE. We note that

$$
\stable(\R, \Fx) \Leftrightarrow \forall \stg, \stg'. \R(\Fx(\stg), \stg, \stg') \Rightarrow \Fx(\stg) = \Fx(\stg')
$$

\textbf{Case : INSERT}. We have to show that $\forall \R,I, \Fx,\F,s$ if $\stable(\R, I)$, $\stable(\R, \Fx)$ and \\
$\Fx \vdash INSERT\ x \elabsto \stabilize{\Fx[\F]}$, then
$$
\R \vdash \hoare{\lambda(\stl,\stg).~\stl=s \cup \Fx(\stg) \conj
  I(\stg)}{INSERT\ x}{\lambda(\stl,\stg).\stl = s \cup \Fx(\stg) \cup \stabilize{\Fx[\F]}(\stg)}
$$
We will prove the premises of the \rulelabel{RG-Insert} rule. Here, $P \Leftrightarrow \lambda(\stl, \stg).~ \stl=s \cup\Fx(\stg) \conj I(\stg)$. By $\stable(\R, I)$ and $\stable(\R, \Fx)$, we have $\stable(\R, P)$. Note that by the definition of $\F$, we have $\stable(\R, \Fx[F])$ and hence, $\stabilize{\Fx[\F]} = \F$. 
$Q \Leftrightarrow \lambda(\stl, \stg). \stl = s \cup \Fx(\stg) \cup \F(\stg)$. Given $\stl, \stg, i$ such that $P(\stl, \stg)$ and $\stl' = \stl \cup \{x \with \delf=\mathit{false};\, \txnf = i\}$, it follows from definition of $\F$ that $Q(\stl', \stg)$. Thus, all premises of \rulelabel{RG-Insert} are satisfied.

\textbf{Case : UPDATE}. We have to show that $\forall \R,I, \Fx, s, \F$ if $\stable(\R, I)$, $\stable(\R, \Fx)$ and\\
 $UPDATE\ \lambda x.e_1\ \lambda x.e_2 \elabsto \stabilize{\Fx[\F]}$, then
$$
\R \vdash \hoare{\lambda(\stl,\stg).~\stl=s \cup \Fx(\stg) \conj
  I(\stg)}{UPDATE\ \lambda x.e_1\ \lambda x.e_2}{\lambda(\stl,\stg).\stl = s \cup \Fx(\stg) \cup \stabilize{\Fx[\F]}(\stg)}
$$
We will prove the premises of the \rulelabel{RG-Update} rule. Here, $P \Leftrightarrow \lambda(\stl, \stg).~ \stl=s \cup \Fx(\stg) \conj I(\stg)$ and $Q \Leftrightarrow \lambda(\stl, \stg).~ \stl = s \cup \Fx(\stg) \cup \stabilize{\Fx[\F]}(\stl, \stg)$. By $\stable(\R, I)$ and $\stable(\R, \Fx)$, we have $\stable(\R, P)$. We can have either $\stable(\R, \Fx[F])$ or $\neg \stable(\R, \Fx[F])$. In either case, we will show that all premises of \rulelabel{RG-Update} are satisfied.

Suppose $\stable(\R, \Fx[F])$. Then $\stabilize{\Fx[\F]} = \F$. Then, given $\stl, \stg$ such that $P(\stl, \stg)$ and $\stl' = \stl \cup \{r' \,|\, \exists(r\in\Delta).~ [r/x]e_2 =\C{true} \conj r'=[r/x]e_1 \with \idf=r.\idf;\,\delf = y.\delf;\,\txnf=i\}$, it follows from definition of $\F$ that $Q(\stl', \stg)$. Suppose $\neg \stable(\R, \Fx[F])$. Then, $\stabilize{\Fx[\F]} = \lambda \stg.\existsl(\stg',I(\stg'),\F(\stg'))$. Also, since $P(\stl, \stg)$, we have $I(\stg)$. Hence, $Q(\stl', \stg)$, since $\stg$ provides the existential $\stg'$, and $\Fx(\stg) \cup \F(\stg)$ is $\stl'$.  

\textbf{Case: DELETE}. We have to show that $\forall \R,I, \Fx, s, \F$ if $\stable(\R, I)$, $\stable(\R, \Fx)$ and \\
$DELETE\ \lambda x.e \elabsto \stabilize{\Fx[\F]}$, then
$$
\R \vdash \hoare{\lambda(\stl,\stg).~\stl=s \cup \Fx(\stg) \conj
  I(\stg)}{DELETE\ \lambda x.e}{\lambda(\stl,\stg).\stl = s \cup \Fx(\stg) \cup \stabilize{\Fx[\F]}(\stg)}
$$
We will prove the premises of the \rulelabel{RG-Delete} rule. Here, $P \Leftrightarrow \lambda(\stl, \stg).~ \stl=s \cup \Fx(\stg) \conj I(\stg)$ and $Q \Leftrightarrow \lambda(\stl, \stg).~ \stl = s \cup \Fx(\stg) \cup \stabilize{\Fx[\F]}(\stg)$. By $\stable(\R, I)$ and $\stable(\R, \Fx)$, we have $\stable(\R, P)$. We can have either $\stable(\R, \Fx[\F])$ or $\neg \stable(\R, \Fx[\F])$. In either case, we will show that all premises of \rulelabel{RG-Delete} are satisfied.

Suppose $\stable(\R, \Fx[\F])$. Then $\stabilize{\Fx[\F]} = \F$. Then, given $\stl, \stg$ such that $P(\stl, \stg)$ and $\stl' = \stl \cup \{r' \,|\, \exists(r\in\Delta).~ [r/x]e=\C{true} \conj r'=\{\bar{f}=r.\bar{f}; \idf=r.\idf;\delf=\C{true}\}\}$, it follows from definition of $\F$ that $Q(\stl', \stg)$. Suppose $\neg \stable(\R, \Fx[\F])$. Then, $\stabilize{\Fx[\F]} = \lambda \stg.\existsl(\stg',I(\stg'),\F(\stg'))$.Also, since $P(\stl, \stg)$, we have $I(\stg)$. Hence, $Q(\stl', \stg)$, since $\stg$ provides the existential $\stg'$, and $\Fx(\stg) \cup F(\stg)$ is $\stl'$.

\textbf{Case: SELECT}. Given $\R, I, s \Fx$ such that $\stable(\R, I)$ and $\stable(\R, \Fx)$, $\Fx \vdash c \elabsto \F$, $G = \lambda r.~ \itel{[r/x]e}{\{r\}}{\emptyset}$, and $F' =\stabilize{\Fx[\lambda \stg.~ (\stg \bind G)]}$, we have to show that  
 $$
\R \vdash \hoare{\lambda(\stl,\stg).~\stl=s \cup \Fx(\stg) \conj
  I(\stg)}{\lete{y}{\selecte{\lambda x.e}}{c}}{\lambda(\stl,\stg).\stl =s \cup  \Fx(\stg) \cup [F'(\stg)/y]F(\stg)}
$$

We will prove all the premises of \rulelabel{RG-Select}. Here, $P \Leftrightarrow \lambda(\stl,\stg).~\stl=s \cup \Fx(\stg) \conj I(\stg)$, while $Q \Leftrightarrow \lambda(\stl,\stg).\stl = s \cup \Fx(\stg) \cup [F'(\stg)/y]F(\stg)$. By $\stable(\R, I)$ and $\stable(\R, \Fx)$, we have $\stable(\R, P)$. By inductive hypothesis and $\Fx \vdash c \elabsto \F$ we have
$$
\R \vdash \hoare{\lambda(\stl,\stg).~\stl=s \cup \Fx(\stg) \conj
  I(\stg)}{c}{\lambda(\stl,\stg).\stl = s \cup \Fx(\stg) \cup F(\stg)}
$$
Let 
$$
P'(\stl, \stg) \Leftrightarrow P(\stl, \stg) \wedge y=\{r \in \stg | [r/x]e\}
$$

Given $\stl, \stg$, $P'$ just binds $y$ to a set of records which depend on $\stg$. We now have the following from the inductive hypothesis:
$$
\R \vdash \hoare {P'} {c} {\lambda(\stl, \stg). \stl =s \cup  \Fx(\stg) \cup [F'(\stg)/y]F(\stg)}
$$

The reason is that $y$ occurs free in $c$ and by the inductive hypothesis, any binding of $y$ can be used. Note that if $P'(\stl, \stg)$, then $y=(\stg \bind G)$. Suppose $\stable(\R,\Fx[ \lambda \stg.(\stg \bind G)])$. Then given $P'(\stl, \stg_{1})$, we have $y = F'(\stg_1)$. By $\stable(\R,\Fx[ \lambda \stg.(\stg \bind G)])$, we have
$$
\lambda(\stl, \stg).\stl = s \cup \Fx(\stg) \cup [F'(\stg_1)/y]F(\stg) = \lambda(\stl, \stg). \stl = s \cup \Fx(\stg) \cup [F'(\stg)/y]F(s, \stg_1)
$$ 
If $\neg \stable(\R, \Fx[ \lambda \stg.(\stg \bind G)])$, then 
$$\stabilize{\Fx[ \lambda \stg.(\stg \bind G)]} = \lambda \stg .\existsl(\stg', I, \stg' \bind G))
$$ 
Then, given $P'(\stl, \stg_1)$, $I(\stg_1)$ and hence $\stg_1$ gives the existential $\stg'$. %then Due to the stabilization operator, we can always bind $y$ to the selected records which satisfy the condition $e$ in the starting global database state. 

\textbf{Case : IF-THEN-ELSE}. Given $\R, I, s, \Fx$ such that $\stable(\R, I)$, $\stable(\R, \Fx)$, $\Fx \vdash c_1 \elabsto \F_1$, $\Fx \vdash c_2 \elabsto \F_2$ , we have to show that  
 $$
\R \vdash \hoare{\lambda(\stl,\stg).~\stl=s \cup \Fx(\stg) \conj
  I(\stg)}{\ite{e}{c_1}{c_2}}{\lambda(\stl,\stg).\stl = s \cup \Fx(\stg) \cup (\itel{e}{\F_1(\stg)}{\F_2(\stg)})}
$$

We will prove all the premises of \rulelabel{RG-If}. Here, $P \Leftrightarrow \lambda(\stl,\stg).~\stl=s \cup \Fx(\stg) \conj I(\stg)$, while $Q \Leftrightarrow \lambda(\stl,\stg).\stl = s \cup \Fx(\stg) \cup (\itel{e}{\F_1(\stg)}{\F_2(\stg)})$. By the inductive hypothesis and $\Fx \vdash c_1 \elabsto \F_1$, we know that
$$
\R \vdash \hoare{\lambda(\stl, \stg).~\stl=s \cup \Fx(\stg) \conj I(\stg)}{c_1}{\lambda(\stl, \stg). \stl =s \cup  \Fx(\stg) \cup \F_1(\stg)}
$$

The post-condition in the above statement can also be written as $Q \wedge e$. Since $e$ does not access the global or local database, the above statement can be written as $\R \vdash \hoare{P \wedge e}{c_1}{Q \wedge e}$. Similarly, $\R \vdash \hoare{P \wedge \neg e} {c_2} {Q \wedge \neg e}$. By $\stable(\R, I)$ and $\stable(\R, \Fx)$, we have $\stable(\R, P)$. Thus, all the premises of \rulelabel{RG-If} are satisfied.

\textbf{Case : SEQ}. Given $\R, I, \Fx$ such that $\stable(\R, I)$, $\stable(\R, \Fx)$, $\Fx \vdash c_1 \elabsto F_1$, $\Fx \cup F_1 \vdash c_2 \elabsto F_2$ , we have to show that  
 $$
\R \vdash \hoare{\lambda(\stl,\stg).~\stl=s \cup \Fx(\stg) \conj
  I(\stg)}{c_1;c_2}{\lambda(\stl,\stg).\stl =s \cup  \Fx(\stg) \cup F_1(\stg) \cup F_2(\stg)}
$$

We will prove all the premises of \rulelabel{RG-Seq}. Here, $P \Leftrightarrow \lambda(\stl,\stg).~\stl=s \cup \Fx(\stg) \conj I(\stg)$, while $Q \Leftrightarrow \lambda(\stl,\stg).\stl = s \cup \Fx(\stg) \cup F_1(\stg) \cup F_2(\stg)$. Let $Q' \Leftrightarrow \lambda(\stl, \stg). \stl =s \cup  \Fx(\stg) \cup F_1(\stg)$. Then, by the inductive hypothesis and $\Fx \vdash c_1 \elabsto \F_1$, we have $\R \vdash \hoare{P}{c1}{Q'}$. Further, at this point, since the transaction has not committed, all the changes in the global database state must be due to $\R$. Since $\stable(\R, I)$, we have $\R \vdash \hoare{P}{c1}{\lambda(\stl, \stg).Q'(\stl,\stg) \conj I(\stg)}. $Further, by the inductive hypothesis and $\Fx \cup F_1 \vdash c_2 \elabsto \F_2$, we have $\R \vdash \hoare{\lambda(\stl, \stg).Q'(\stl,\stg) \conj I(\stg)}{c_2}{Q}$. Finally, since the stabilization operator ($\stabilize{}$) is always applied on $F_1$ and $\stable(\R, \Fx)$, we have $\stable(\R, \lambda(\stl, \stg).Q'(\stl,\stg) \conj I(\stg))$. Thus, all premises of \rulelabel{RG-Seq} are satisfied.

\textbf{Case : FOREACH}. Given $\R, I, \Fx,s$ such that $\stable(\R, I)$, $\stable(\R, \Fx)$, $\Fx \vdash c \elabsto F$, we have to show that
$$
\R \vdash \hoare{\lambda(\stl,\stg).~\stl=s \cup \Fx(\stg) \conj
  I(\stg)}{\foreache{x}{\lambda y.\lambda z.~c}}{\lambda(\stl,\stg).\stl = s \cup \Fx(\stg) \cup x\bind(\lambda z.~F(\stg)}
$$

We will prove all the premises of \rulelabel{RG-ForEach} using the loop invariant $\psi(\stl, \stg) \Leftrightarrow \stl = s \cup \Fx(\stg) \cup y \bind (\lambda z.~\F(\stg))$. Here $P \Leftrightarrow \lambda(\stl,\stg).~\stl=s \cup \Fx(\stg) \conj I(\stg)$, while $Q \Leftrightarrow \lambda(\stl,\stg).\stl =s \cup  \Fx(\stg) \cup x\bind(\lambda z.~\F(\stg)$. Since $[\phi/y]\psi(\stl, \stg) \Leftrightarrow \stl = s \cup \Fx(\stg)$, $P \rightarrow [\phi/y]\psi$. By the inductive hypothesis and $\Fx \vdash c \elabsto \F$, we have
$$
\R \vdash \hoare{\lambda(\stl, \stg).~ \stl =s \cup y \bind (\lambda z.~\F(\stg)) \cup \Fx(\stg) \conj I(\stg)} {c} {\lambda(\stl, \stg). \stl = s  \cup y \bind (\lambda z.~\F(\stg)) \cup F(\stg)}
$$

Binding $z$ (which is free in $c$) to a record in $x$ (i.e. $z \in x$) in the pre-condition, the post condition in the above statement implies $\stl = s \cup (y \cup \{z\}) \bind (\lambda z.~\F(\stg))$, which is nothing but $[y \cup \{z\}/y]\psi(\stl, \stg)$. Hence, $\psi$ is a loop invariant. Finally, $[x/y]\psi \rightarrow Q$. 

From $\stable(\R, I)$ and $\stable(\R, \Fx)$, we have $\stable(\R, P)$. Since $F$ has been stabilized using the $\stabilize{}$ function, and $\psi$ is an assertion on the union of multiple applications of $F$, it follows that $\stable(\R, \psi)$. Using the same reasoning, $\stable(\R, Q)$. Thus, all the premises of \rulelabel{RG-Foreach} are satisfied.

%The cases for SQL statements follow fro
%the corresponding RG rules trivially. \C{LET}, \C{IF-THEN-ELSE}, and
%sequence case proof follows from the inductive hypothesis. The only
%interesting case is \C{FOREACH}, where the corresponding RG rule
%requires a loop invariant ($\psi$). The loop invariant is as
%following: $\psi(\stl,\stg) \Leftrightarrow \stl = y\bind(\lambda
%z.~\F_c(\stl,\stg)$, where $\F_c$ is the transformer of the loop body.
% \begin{itemize}
%   \item \C{INSERT}: $\F$ is:
%   \begin{smathpar}
%   \begin{array}{c}
%     \stabilize{\lambda(\stl,\stg).~
%     \{ y \,|\, y = \{x \with r.\delf=\mathit{false};\, \txnf = i\} \}}
%   \end{array}
%   \end{smathpar}
%   If $\F$ is stable, we need to show the following:
%   \begin{smathpar}
%   \begin{array}{c}
%     \R \vdash \hoare{\lambda(\stl,\stg).~\stl=\emptyset \conj
%       I(\stg)}{\inserte{x}} {\lambda(\stl,\stg).\stl \subseteq \{ y
%       \,|\, y = \{x \with r.\delf=\mathit{false};\, \txnf = i\}}
%   \end{array}
%   \end{smathpar}
%   Which directly follows from the the RG rule for Insert. Since
%   $\C{INSERT}$ doesn't read $\stg$, the proof is the same even when
%   $\F$ is unstable.

%   \item \C{UPDATE}: $\F$ is:
%   \begin{smathpar}
%   \begin{array}{c}
%     \stabilize{\lambda(\stl,\stg).~ \stg \bind G }
%   \end{array}
%   \end{smathpar}
%   Where $G$ is $\lambda y.~ \itel{[y/x]e_2}
%   {\{[y/x]e_1 \with \idf=r.\idf;\,\delf = y.\delf;\,\txnf=i \}}
%   {\emptyset}$.
%   When 
% \end{itemize}
\end{proof}

\begin{theorem}
\label{thm:inference-sound}
 For all $i$,$R$,$I$,$c$,$\Fx$, $\F$, if $\stable(\R,I)$, $\stable(\R, \Fx)$ and $\Fx
  \vdash c \elabsto \F$, then:\\\vspace*{-0.2cm}
  \begin{smathpar}
  \begin{array}{c}
    \R \vdash \hoare{\lambda(\stl,\stg).~\stl= \Fx(\stg) \conj
    I(\stg)}{c}{\lambda(\stl,\stg).~\stl = \Fx(\stg) \cup \F(\stg)}
  \end{array}
  \end{smathpar}
\end{theorem}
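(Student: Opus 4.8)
} The plan is to obtain Theorem~\ref{thm:inference-sound} as the $s=\emptyset$ instance of the generalized statement already recorded as Theorem~\ref{thm:inference-sound-strong}, in which the pre- and post-conditions carry an arbitrary extra set $s$, i.e. $\R \vdash \hoare{\lambda(\stl,\stg).~\stl = s \cup \Fx(\stg) \conj I(\stg)}{c}{\lambda(\stl,\stg).~\stl = s \cup \Fx(\stg) \cup \F(\stg)}$. Generalizing over $s$ is essential rather than cosmetic: in the sequencing and \C{FOREACH} cases the induction applies its hypothesis to a subcommand whose incoming local state is not $\Fx(\stg)$ but $\Fx(\stg)$ enlarged by the effects accumulated so far, and $s$ is exactly the slot that absorbs that accumulation. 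Once the generalized claim is in hand, Theorem~\ref{thm:inference-sound} is immediate by taking $s$ to be the empty set literal.

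The generalized claim is proved by structural induction on the derivation $\Fx \vdash c \elabsto \F$, following the inference rules of Fig.~\ref{fig:inference-rules} in lockstep with the RG rules of Fig.~\ref{fig:rg-rules}. Writing $P = \lambda(\stl,\stg).~\stl = s \cup \Fx(\stg) \conj I(\stg)$ and $Q = \lambda(\stl,\stg).~\stl = s \cup \Fx(\stg) \cup \F(\stg)$, the hypotheses $\stable(\R,I)$ and $\stable(\R,\Fx)$ at once give $\stable(\R,P)$, which discharges the recurring stability premise on the precondition. The base cases \C{INSERT}, \C{UPDATE}, \C{DELETE} reduce to the single implication in the premise of \rulelabel{RG-Insert}/\rulelabel{RG-Update}/\rulelabel{RG-Delete}; the only wrinkle is the $\stabilize{\cdot}$ wrapper, handled by a case split: if $\inctxt{\Fx}{\F}$ is already stable then $\stabilize{\inctxt{\Fx}{\F}} = \F$ and the implication is read off the definition of $\F$, while otherwise $\stabilize{\inctxt{\Fx}{\F}} = \lambda(\stg).~\existsl(\stg', I(\stg'), \F(\stg'))$ and, since $P$ supplies $I(\stg)$, the state $\stg$ itself witnesses the existential, so $Q$ still holds. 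The \C{LET}, \C{SELECT}, conditional, and sequencing cases invoke the induction hypothesis to instantiate the RG judgments in the premises of the corresponding rules; for \C{SELECT} one additionally notes that $y$ is free in the continuation and may be bound to the state-dependent result set, mirroring $\Fx[\lambda\stg.~\stg\bind G]$ in the inference rule, and for sequencing one threads $I$ through the intermediate assertion, which is legitimate because between the two subcommands the global state moves only under $\R$ and $\stable(\R,I)$ holds.

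The interesting case is \C{FOREACH}, for which I would take the loop invariant $\psi(\stl,\stg) \Leftrightarrow \stl = s \cup \Fx(\stg) \cup (y \bind (\lambda z.~\F(\stg)))$, where $\Fx \vdash c \elabsto \F$ for the loop body $c$, and verify the four premises of \rulelabel{RG-Foreach}: $P \Rightarrow [\emptyset/y]\psi$ and $[x/y]\psi \Rightarrow Q$ are simple rewritings of $\bind$; the body obligation with $Q_c \Rightarrow [y\cup\{z\}/y]\psi$ follows from the induction hypothesis applied with parameter set $s' = s \cup (y \bind (\lambda z.~\F(\stg)))$ (again exploiting generality over $s$), together with the local-context-independence property of Sec.~\ref{sec:opsem}, which makes the $\stl$ contributed by one iteration independent of the earlier ones so that the body's effect is exactly $\F(\stg)$ irrespective of $y$. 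The delicate point, and the part I expect to be the real obstacle, is the stability side conditions $\stable(\R,\psi)$ and $\stable(\R,Q)$: $\psi$ constrains $\stl$ to a union of many applications of $\F$ at the \emph{same} state, whereas the stabilization guarantee on $\F$ was established only for a local component of the specific shape $\Fx(\stg)\cup\F(\stg)$, so one must argue that the $I$-indexed existential weakening introduced by $\stabilize{\cdot}$ is robust under (i) set union of transformers and (ii) enlarging the local-state context by an arbitrary $s$. I would isolate this as a small auxiliary lemma about $\stabilize{\cdot}$-stability and then apply it uniformly in the \C{FOREACH} and sequencing cases.
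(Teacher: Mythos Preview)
Your proposal is correct and takes essentially the same approach as the paper: the paper's proof of Theorem~\ref{thm:inference-sound} is precisely the one-line instantiation of Theorem~\ref{thm:inference-sound-strong} with $s=\emptyset$. Your additional sketch of how the generalized claim is established (structural induction on the inference derivation, with the \C{FOREACH} loop invariant $\psi(\stl,\stg) \Leftrightarrow \stl = s \cup \Fx(\stg) \cup (y \bind (\lambda z.~\F(\stg)))$) also tracks the paper's proof of the stronger theorem closely, including the case split on whether $\inctxt{\Fx}{\F}$ is already stable.
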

\begin{proof}
  Follows from the stronger version of this theorem
  (Theorem~\ref{thm:inference-sound-strong}) by substituting
  $\emptyset$ for $s$.
\end{proof}

% We recommend abbrvnat bibliography style.

%\bibliographystyle{plainnat}
%\small
%\bibliography{all}
%
%\end{document}

\end{document}